\def\tilde{\widetilde}
\def\Ad{{\hbox{\rm Ad}}}
\def\a{\alpha}
\def\b{\beta}
\def\e{\varepsilon}
\def\g{\gamma}
\def\l{\lambda}
\def\r{\rho}
\def\th{\theta}
\def\A{{\cal A}}
\def\M{{\cal M}}
\def\N{{\cal N}}
\def\R{{\cal R}}
\def\H{{\cal H}}
\def\K{{\cal K}}
\def\S{{\cal S}}
\def\f{{\varphi}}
\def\s{{\sigma}}
\def\p{{\pi}}
\def\l{{\lambda}}
\def\x{{\xi}}
\def\PSL{{{\rm PSL}(2,\mathbb R)}}
\def\S2{S^{1(2)}}
\def\Poi{{\cal P}_+^\uparrow}
\def\uPoi{\overline{\cal P}_+^\uparrow}
\def\Reali{\mathbb R}
\newtheorem{theorem}{Theorem}[section]
\newtheorem{lemma}[theorem]{Lemma}
\newtheorem{corollary}[theorem]{Corollary}
\newtheorem{proposition}[theorem]{Proposition}
\theoremstyle{definition} 
\theoremstyle{remark} \newtheorem{remark}[theorem]{Remark}
\newcommand{\ben}{\begin{equation}}
\newcommand{\een}{\end{equation}}
\def\PSL{PSU(1,1)}
\def\SL2{{{\rm SL}(2,\R)}}
\def\PSL2{{{\rm PSL}(2,\Reali)}}
\def\U1{{{\rm V}(1)}}
\def\SU2{{{\rm SV}(2)}}
\def\SU{{{\rm SU}}}
\def\A{{\mathcal A}}
\def\H{{\mathcal H}}
\def\K{{\mathcal K}}
\def\M{{\mathcal M}}
\def\N{{\mathcal N}}
\title{\Huge{Entropy distribution of localised states }}
\author{{\sc Roberto Longo\thanks{Supported by the ERC Advanced Grant 669240 QUEST ``Quantum Algebraic Structures and Models'', MIUR FARE R16X5RB55W  QUEST-NET and GNAMPA-INdAM.}
}
\\
Dipartimento di Matematica,
Universit\`a di Roma ``Tor Vergata'',\\
Via della Ricerca Scientifica, 1, I-00133 Roma, Italy\\
E-mail: {\tt longo@mat.uniroma2.it}
}
\date{}
\begin{document}

\maketitle

\begin{abstract}
We study the geometric distribution of the relative entropy of a charged localised state in Quantum Field Theory. With respect to translations, the second derivative of the vacuum relative entropy is zero out of the charge localisation support and positive in mean over the support of any single charge. For a spatial strip, the asymptotic mean entropy density is $\pi  E$, with $E$ the corresponding vacuum charge energy. In a conformal QFT, for a charge in a ball of radius $r$, the relative entropy is non linear, the asymptotic mean radial entropy density is $\pi  E$ and Bekenstein's bound is satisfied. We also study the null deformation case. We construct, operator algebraically, a positive selfadjoint operator that may be interpreted as the deformation generator, we thus get a rigorous form of the Averaged Null Energy Condition that holds in full generality. In the one dimensional conformal $U(1)$-current model, we give a complete and explicit description of the entropy distribution of a localised charged state in all points of the real line; in particular, the second derivative of the relative entropy is strictly positive in all points where the charge density is non zero, thus the Quantum Null Energy Condition holds here for these states and is not saturated in these points. 
\end{abstract}

\newpage

\section{Introduction}
The main aim of this paper is to provide a rigorous, model independent, Operator Algebraic approach to the Averaged Null Energy Condition (ANEC) and Quantum Null Energy Condition (QNEC) in Quantum Field Theory. We begin by recalling a few basic facts. 
\medskip

\noindent
{\it Entropy in QFT.} Quantum Information is having an increasingly important interplay with Quantum Field Theory, that naturally took place in the framework of quantum black hole thermodynamics. The first non commutative entropy notion, von Neumann's quantum entropy, was originally designed as a Quantum Mechanics version of Shannon's entropy: if a state $\psi$ is given a density matrix $\r_\psi$
\[
\text{von Neumann entropy of $\psi$}: \ - {\rm Tr}(\r_\psi\log\r_\psi) \ .
\]
However, in Quantum Field Theory, local von Neumann algebras are typically factors of type $III_1$ (see \cite{L82}), no trace or density matrix exists and von Neumann entropy is undefined. Nonetheless, the Tomita-Takesaki modular theory applies and one 
may considers the relative entropy \cite{Ar}
 \[
\text{Araki relative entropy}:\  S(\f |\!|\psi) = -(\xi, \log\Delta_{\eta,\xi}\xi)
\]
between $\psi$ and the vacuum state $\f$, that extends Umegaki's type $I$ notion
\[
S(\f |\!|\psi) = {\rm Tr}\big(\r_\f(\log\r_\f - \log \r_\psi)\big)
\]
and is defined in general; here $\xi$ and $\eta$ are cyclic vector representatives of $\f$, $\psi$ and  $\Delta_{\eta,\xi}$ is the associated relative modular operator (see \cite{OP,T} and Sect. \ref{RE}). $S(\f |\!|\psi)$ subtracts indeed the entanglement ultraviolet divergences, common both to $\f$ and $\psi$, and is finite for a dense set of states. 

$S(\f |\!|\psi)$ generalises the classical Kullback-Leibler divergence and measures how $\psi$ deviates from $\f$. 
From the information theoretical viewpoint, $S(\f |\!|\psi)$ is the mean value in the state $\f$ of the difference
between the information carried by the state $\psi$ and the state $\f$. 

In \cite{L97}, we made a relative entropy analysis in black hole thermodynamics. We consider relative entropy as a primary concept and other entropy quantities as derived concepts; this point of view is also implicit or explicit in several recent papers (e.g. \cite{BlCa}). A first rigorous computation of the relative entropy (mutual information) in QFT has been done in \cite{LXrelCFT} in chiral conformal QFT for subnets of Fermions and their finite index extensions, that cover all known unitary cases.

Operator Algebras provide the natural framework for quantum entropy issues, see \cite{L18, LXbek, Wit, Ho18}. 
In this paper we shall pursue a rigorous, general analysis of the entropy distribution of localised states in Quantum Field Theory. 
\medskip

\noindent
{\it Energy bounds in QFT.}
Energy conditions play classically an important role in general relativity, see \cite{HE, Wit81}.
In Quantum Field Theory there appear non-local conditions such as the ANEC (see \cite{FR,V}), 
namely the integrated energy density along an entire null ray is to be positive; yet 
the energy may locally have negative density states \cite{EGJ}, although energy lower bounds hold true even locally. In particular, in conformal QFT, model independent local lower bounds have been obtained in \cite{FH,Wien06}. 

Motivated by the study of relativistic quantum field theory coupled to gravity, Bousso, Fisher, Liechenauer, and Wall \cite{BFKLW} proposed the QNEC to the effect that the null energy density at a point is bounded below by the second derivative of a suitable quantum entropy. For a deformation $u$-null direction, the relation may be formally written (in natural units) as
\[
\langle T_{uu}\rangle \geq \frac{1}{2\pi} S''_A(t) \ .
\]
Here $T$ is the stress-energy tensor,  $S_A$ is the entropy relative to the region $A$ on one side of the deformation and $S_A''$ is the second derivative of $S_A$ with respect to the deformation parameter $t$.  

In this context, the positivity of the second derivative of the relative entropy
\[
S''(t)\geq 0
\]
appears unexpectedly and is supposed to hold in general in Quantum Field Theory.\footnote{The convexity of $S$ is not an intrinsic concept, as it depends on re-parameterizing $t$. However, in Section \ref{null}, we shall have a natural deformation parameter $t$, the half-sided modular translation parameter.}    

QNEC is expected to hold model independently, see \cite{BFKW}. It relates two conceptually different quantities  and
deep effort  in various directions has been done so far, e.g. in holographic theories (see \cite{Ha} for a review), with the aim to understand it. 

With focus on the relative entropy, here we are going to consider a state $\psi$ that arises as local charge excitation of the vacuum state $\f$ and consider  the relative entropy $S(\f |\!|\psi)$, where $\f$ and $\psi$ are restricted to the local von Neumann algebra $\A(O)$ of a spacetime region $O$; we then study the dependence of $S(\f |\!|\psi)$ on $O$. We shall also study the similar dependence of $S(\psi |\!|\f)$ on $O$, that is more directly connected with the ANEC. 

We now summarise our main results. 
\medskip

\noindent
{\it Entropy density, ANEC and QNEC.} We consider DHR charges (charges with short range interaction \cite{DHR, DHR2}) or, more generally,  charges localisable in  unbounded spacelike cones (charges of electromagnetic type \cite{BF}), see Section \ref{charges}. If $\psi$ is the state obtained from the vacuum $\f$ by adding a charge $\r$ localised inside a (Rindler) wedge region $W$, then the following formula holds \cite{L97}:
\ben\label{KWS}
S(\f|\!|\psi)  = 2\pi E_{\rm loc} + S(\r)  \ ,
\een
where $\f$ and $\psi$ are restricted to $\A(W)$, $E_{\rm loc} = (\xi, K_{\r,W}\xi)$ is the charge local energy, with $K_{\r,W}$ the Rindler modular Hamiltonian in presence of the charge $\r$, and 
\[
S(\r) = \log d(\r)
\] 
is half of the conditional entropy of $\r$, independent of $\psi$ (see Sect. \ref{charges}); $d(\r)$ is the DHR statistical dimension \cite{DHR}, which is equal the square root of the Jones index of $\r$ \cite{L89,L90}. We 
shall also have a corresponding formula for $S(\psi|\!|\f)$. 

Formula \eqref{KWS} is the key to study the dependence of $S(\f |\!|\psi) $ on $W$. In particular, if $W_t$ is the shifted wedge by a null translation by $t \geq 0$, $\r_1,\dots \r_n$ are charges localised in $W$
and $S(t)$ is the relative entropy between $\f|_{\A(W_t)}$ and $\psi|_{\A(W_t)}$ we have
\[
S''(t) = 0 \ ,
\]
with $t$ in any interval $(a,b)$ such that $W_b$ contains the support of some of the charges $\r_k$ and the support of the remaining charges is contained in $W'_a$, while
\[
\int^b_a S''(t) {\rm d}t >0 \ ,
\]
if $W_a\cap W'_b$ contains the support of at least one of the $\r_k$'s and the remaining charges' support is contained either in $W_b$ or $W'_a$.

In the conformal case, we study the relative entropy $S(r) =  S(\f_r |\!| \psi_r)$ relative to a double cone $O_r$ of radius $r$ with center at the origin. If the charges are localised in $O_{r_0}$, then
\[
S(r) = \pi\Big({r E - \frac1r E'}\Big) + \log d(\r)\ , \quad r \geq r_0\ ,
\]
with $E, E'$ the mean vacuum energies corresponding to the modular Hamiltonians of $\r$ w.r.t. $O_1$ and $O'_1$. The corresponding inequality for $ S(\psi_r |\!| \f_r)$ implies Bekenstein's bound \cite{Be} (cf. \cite{BlCa,LXbek}). 

We shall then make an analysis for general null deformations of the wedge $W$, that is of main importance in view of the ANEC and QNEC. 
Concerning the ANEC, we first abstractly construct a positive selfadjoint operator that we then interpret as the deformation generator, that so becomes manifestly positive.
To this end, we shall need to extend a structure theorem for half-sided modular inclusions of von Neumann algebras \cite{Wi1, AZ}.  

Let $f$ be a function on the null horizon of $W$ giving the boundary deformation and $\r$ a charge localised on the null horizon. Our positive operator $H_{\r,f}$ is the translation generator associated with the half-sided modular inclusion of local von Neumann algebras $\A(W_f)\subset\A(W)$ relative to both the vacuum state $\f$ and the charged state $\psi$.  It satisfies
\[
E_{\r,f} = \frac{1}{2\pi} \int_{-\infty}^{+\infty} S''(t){\rm d}t  > 0 \ ,
\]
where $E_{\r,f} = (\xi, H_{\r,f}\xi)$ is the vacuum energy associated with $\r$ and the null deformation given by $f$, and $t$ is the deformation parameter.  

Now, based on a physical argument in terms of the stress-energy tensor $T_{\mu\nu}$ associated with $\r$  \cite{LLS}, we may interpret the vacuum half-sided modular Hamiltonian $H_{f}$ as
\[
H_{f}  = 2\pi\int f(y)T_{uu}{\rm d}^{n-1}y {\rm d}u  \ ,
\]
(null coordinates), so $H_{f} \geq 0$ is a form of the ANEC that holds true in full generality. 

Our analysis is explicit and complete in the one-dimensional model given by the $U(1)$-current. Here the charges are associated with real functions $\ell$ in the Schwartz space $S(\mathbb R)$ \cite{BMT}. We have
\[
S(t)  = \pi\!\int_{t}^{+\infty} (x-t)\ell^2(x){\rm d}x\ 
\]
at any point $t\in\mathbb R$. So
\[
S''(t) = \pi \ell^2(t) \ ,
\]
and we will infer that
\[
E(t) = \text{\small $\frac{1}{2\pi}$} S''(t)\ ,
\]
with $E(t)$ the energy density of the $\ell$-charge at $t$; thus here the inequality $E(t)\geq S''(t)/2\pi$ is an equality and the QNEC holds for the considered charged states and is not saturated in all points of positive energy density. 
 
\section{Mathematical background}
We begin by describing the mathematical set up that we later apply in the physical context. As a general reference for the theory of Operator Algebras, in particular for the Tomita-Takesaki modular theory, see \cite{T}. 
\subsection{Translations and the modular Hamiltonian}\label{transmod}
Let $\M$ be a von Neumann algebra on a Hilbert space $\H$ with cyclic and separating unit vector $\xi\in\H$. We assume there exists a one-parameter unitary group $T$ on $\H$, with positive/negative generator $H$, such that 
\ben\label{bcond}
T(t)\xi = \xi\quad {\rm and}\quad T(t)\M T(-t)\subset \M, \quad  t\geq 0 \ .
\een
Borchers' theorem \cite {Bo} gives, in particular, the commutation relations
\ben\label{borch}
\Delta_\xi^{is}T(t)\Delta_\xi^{-is} = T(e^{\mp 2\pi s}t) \ ,
\een
where $\Delta_\xi = \Delta_{\xi,\M}$ is the modular operator of $\M$ w.r.t. $\xi$, see \cite{T}. 

So we have a positive energy unitary representation of the ``$ax + b$'' group where the dilation one-parameter unitary group $D$ is given by $D(2\pi s) = \Delta_\xi^{\mp is}$, namely the selfadjoint generator $K$ of $D$ is
\[
2\pi K = \mp \log\Delta_\xi\ .
\]
By the commutation relation \eqref{cr} we then have 
\ben\label{shift}
T(t)\log\Delta_\xi T(-t) = \log\Delta_\xi \pm 2\pi t  H\ ,
\een
where $H$ is the selfadjoint generator of $T$. 

Setting now
\ben\label{trtun}
\M_t \equiv T(t)\M T(-t)
\een
we have a tunnel of von Neumann algebras
\[
\M_0\equiv\M \supset \M_t \supset \M_{t'}\ ,\quad t' > t > 0\ .
\]
The translation tunnel  \eqref{trtun} is also given by
\ben\label{tt}
\M_t = \s^\f_{s}(\N), \quad t =   e^{-2\pi s} - 1, \quad \mp s \geq 0\ .
\een
In particular, the inclusion $\M_1\subset \M$ is $\pm$hsm, see below Sect. \ref{Wies}.
\subsection{Relative entropy}\label{RE}
Let again $\M$ be a von Neumann algebra on a Hilbert space $\H$ and let $\f, \psi$ be faithful, normal positive linear functionals on $\M$. We may assume that $\M$ acts standardly on $\H$ so that there are cyclic and separating vectors  
$\xi,\eta\in\H$ giving the states $\f$ and $\psi$.  

The relative modular operator $\Delta_{\eta,\xi} = S^*_{\eta,\xi}S_{\eta,\xi}$ is given by the polar decomposition
\[
S_{\eta,\xi} = J \Delta^{1/2}_{\eta,\xi}
\]
with $S_{\eta,\xi}$ the closure of the antilinear operator $X\xi\mapsto X^*\eta$, $X\in\M$, on $\H$. 
Once we fix $\xi$, $\Delta_{\eta,\xi} \equiv \Delta_{\eta,\xi,\M}$ does not depend on the chosen vector representative $\eta$ of $\psi$ but only on $\psi$. 

The Connes Radon-Nikodym unitary cocycle \cite{C73} can be written as 
\[
w_s \equiv (D\psi : D\f)_s = \Delta_{\eta,\xi}^{is}\Delta_\xi^{-is},\quad s\in\mathbb R \ .
\]
We have $w_s \in \M$, $s\in \mathbb R$, and
\ben\label{ci}
\s^\psi_s(x) = w_s\s^\f_s(x)w_s^*,\quad x\in \M \ .
\een
If $\M$ is a factor, $w_s\in \M$ is uniquely determined by \eqref{ci}, the cocycle equation $w_{s+s'} = w_s\s_s^\f(w_{s'})$ and
\[
{\underset{s\, \longrightarrow\,  - i}{\rm anal.\, cont.\,}} \f(w_s) = \psi(1) \ .
\]
The relative entropy is given by\footnote{The notation $S(\psi |\!| \f) = - (\xi , \log\Delta_{\eta,\xi}\xi)$ was used in ref. \cite{Ar}; our notation is commonly used.}
\[
S(\f |\!| \psi) = - (\xi , \log\Delta_{\eta,\xi}\xi)\ ,
\]
and is well-defined by the spectral theorem \cite{Ar}. 
It depends on $\f$ and $\psi$ but not on the chosen vector representatives $\xi$, $\eta\in\H$. 
It satisfies, in particular, positivity and monotonicity:
\[
S(\f |\!| \psi) \geq 0\qquad {\rm and}\qquad S(\f|_\N |\!| \psi|_\N)  \leq S(\f |\!| \psi) \ ,
\]
with $\N\subset\M$ a von Neumann subalgebra (see \cite{OP,Wit}). 
We have
\[
S(\f |\!| \psi) = i\frac{\rm d}{{\rm d}s}\f(w_s)|_{s=0} 
\]
if $\f(w_s)$ is differentiable at zero, and $+\infty$ otherwise. As $(D\f : D\psi)_s =w^*_s$, we also have
\ben\label{w*}
S(\psi |\!| \f) = -i\frac{\rm d}{{\rm d}s}\psi(w_s)|_{s=0}  \ .
\een
If $\l,\mu>0$ we have
\ben\label{scalentr}
S(\l\f |\!| \mu\psi) = \l S(\f |\!| \psi)  - \l\log(\mu/\l)\f(1)\ .
\een
\subsection{Extending a theorem of Wiesbrock, Araki and Zsido}\label{Wies}
The results in this section will be used in Section \ref{null}, yet they clarify the underlying structure in all Section \ref{egd}. 

Let $\N\subset\M$ be an inclusion of von Neumann algebras on a Hilbert space $\H$ and $\f$ a faithful normal state of $\M$; we assume that $\f$ is given by a unit vector $\xi\in\H$ which is cyclic and separating for both $\N$ and $\M$. 

We shall say that the inclusion $\N\subset\M$ is $\pm$hsm (half-sided modular) with respect to $\f$ if
\[
\s^\f_{s}(\N)\subset \N,\quad \pm s\geq 0\ .
\]
The result in \cite{Wi1, AZ} gives a converse to Borchers' theorem: if $\N\subset\M$ is $\mp$hsm w.r.t. $\xi$ as above, then
\ben\label{WAZ}
H = \frac{1}{2\pi}(\log \Delta_{\xi,\N} - \log \Delta_{\xi,\M})
\een
is a positive, essentially selfadjoint operator. Denoting by the same symbol $H$ its closure, the unitary one parameter group $T(t) = e^{itH}$ satisfies \eqref{bcond} and $\N= T(\pm 1)\M T(\mp1)$. 

As a consequence, $\N\subset\M$ is -hsm w.r.t. $\xi$ iff $\M'\subset\N'$ is +hsm w.r.t. $\xi$. For simplicity, in the following we consider only -hsm inclusions, yet every statement will have a dual statement for +hsm inclusions.  

Let $\N\subset\M$ be a -hsm inclusion of von Neumann algebra w.r.t $\f$ as above. We then have the translation tunnel $\M_t$; 
note that $\M_t \subset \M$ is -hsm with respect to $\f$ if $t \geq 0$.  Note also that
\ben\label{vt}
\bigvee_{s\in\mathbb R} \s^\f_{s}(\N) = \bigvee_{t>0}\M_t = \M \ ,
\een
where the lattice symbol $\vee$ denotes the von Neumann algebra generated. 

Let $\psi$ be another faithful normal state of $\M$, given by the cyclic and separating vector $\eta\in\H$ for $\M$, and assume that the Connes Radon-Nikodym unitary cocycle is localised as follows:
\ben\label{ws}
w_s \equiv (D\psi : D\f)_s \in \M_R, \quad  s\leq 0\ ,
\een
for some $R\geq 1$. We then have:
\begin{lemma} 
$\N\subset\M$ is -hsm with respect to $\psi$. 
\end{lemma}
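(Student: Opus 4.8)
The plan is to deduce the statement from the Connes cocycle identity \eqref{ci}, which expresses the $\psi$-modular flow as the $\f$-modular flow conjugated by $w_s$. Since the inclusion $\N\subset\M$ is $-$hsm with respect to $\psi$ precisely when $\sigma^\psi_s(\N)\subset\N$ for all $s\le0$, it suffices to show that $\sigma^\psi_s(x)\in\N$ whenever $x\in\N$ and $s\le0$. By \eqref{ci} this quantity equals $w_s\,\sigma^\f_s(x)\,w_s^*$, so the whole argument comes down to checking that all three factors belong to $\N$.

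The middle factor is immediate: because $\N\subset\M$ is $-$hsm with respect to $\f$, one has $\sigma^\f_s(x)\in\N$ for every $x\in\N$ and $s\le0$. The outer factors are where the hypotheses \eqref{ws} and $R\ge1$ enter, and I expect this to be the only delicate point. The structure theorem \cite{Wi1, AZ} identifies the half-sided translation $T$ for which $\N=T(1)\M T(-1)=\M_1$; since the translation tunnel \eqref{trtun} is decreasing, $\M_t\supset\M_{t'}$ for $0<t<t'$, the normalization $R\ge1$ yields $\M_R\subset\M_1=\N$. Hypothesis \eqref{ws} then gives $w_s\in\M_R\subset\N$ for $s\le0$, and, $\N$ being a $*$-algebra, also $w_s^*\in\N$. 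It is exactly here that $R\ge1$ is needed: it guarantees that the cocycle is localised inside $\N$, not merely inside $\M$, and it is precisely the alignment of the two $s\le0$ ranges (the $-$hsm condition for $\f$ and the localisation of $w_s$) that makes the conjugation stay in $\N$.

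Collecting the three memberships, $\sigma^\psi_s(x)=w_s\,\sigma^\f_s(x)\,w_s^*$ is a product of elements of $\N$ and hence lies in $\N$, giving $\sigma^\psi_s(\N)\subset\N$ for $s\le0$, as desired. The remaining bookkeeping is routine: $\eta$ being cyclic and separating for $\M$ makes $\psi$ faithful and normal, so that $\sigma^\psi$ is well defined, and $\eta$ is automatically separating for the subalgebra $\N$. The conceptual heart of the proof is thus the geometric matching of the localisation radius with the tunnel normalization $\N=\M_1$; once $w_s$ is placed inside $\N$, the conclusion is a one-line substitution into \eqref{ci}.
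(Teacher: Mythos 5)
Your proof is correct and is essentially the paper's own argument: the paper's one-line proof is $\s^\psi_s(\N) = w_s\s^\f_s(\N)w^*_s \subset w_s\N w^*_s = \N$ for $s\le 0$, using exactly the cocycle identity \eqref{ci} together with $w_s\in\M_R\subset\M_1=\N$. You merely spell out the localisation step $\M_R\subset\M_1=\N$ (via $R\ge 1$ and the decreasing tunnel) that the paper leaves implicit, which is a faithful elaboration rather than a different route.
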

\begin{proof} Immediate because 
$\s^\psi_s(\N) = w_s\s^\f_s(\N)w^*_s \subset  w_s \N w^*_s = \N$, $  s\leq 0$.
\end{proof}
\begin{lemma}\label{nt}
$\s^\psi_{s}(\N) = \M_{t}$, $0 <t \leq R$, $ t =   e^{-2\pi s} - 1$.  
\end{lemma}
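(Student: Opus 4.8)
The plan is to reduce the $\psi$-statement to the already-established $\f$-statement \eqref{tt} by means of the Connes cocycle intertwining relation \eqref{ci}, and then to absorb the cocycle using its localisation hypothesis \eqref{ws}. The preceding lemma already tells us that $\s^\psi_s(\N)\subset\N$ for $s\leq 0$, so the identification below is consistent with $\s^\psi_s(\N)$ landing inside the tunnel; the content of the lemma is to pin down \emph{which} tunnel algebra it is.

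First I would apply \eqref{ci} with $x$ ranging over $\N$, giving
\[
\s^\psi_s(\N) = w_s\,\s^\f_s(\N)\,w_s^*.
\]
By \eqref{tt}, for $s\leq 0$ the inner factor on the right is exactly a member of the translation tunnel,
\[
\s^\f_s(\N) = \M_t, \qquad t = e^{-2\pi s}-1 \geq 0,
\]
so that $\s^\psi_s(\N) = w_s\,\M_t\,w_s^*$, and it remains only to show that conjugation by $w_s$ fixes $\M_t$.

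This is where the range $0 < t\leq R$ enters. The constraint $t\leq R$ means, via $t = e^{-2\pi s}-1$, that $s$ stays in the interval $[-\tfrac{1}{2\pi}\log(R+1),0)$, in particular $s\leq 0$, so hypothesis \eqref{ws} applies and $w_s\in\M_R$. Since the tunnel $\M_t$ is decreasing in $t$, we have $\M_R\subseteq\M_t$ whenever $t\leq R$; hence $w_s\in\M_t$. As $w_s$ is a unitary of $\M_t$, conjugation by it preserves that algebra, $w_s\M_t w_s^* = \M_t$, and combining this with the display above yields $\s^\psi_s(\N)=\M_t$ for $0<t\leq R$.

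The only delicate point — and the one I would check most carefully — is the bookkeeping that ties together the three ingredients: the sign condition $s\leq 0$ required by \eqref{ws}, the monotonicity direction of the tunnel ($\M_R\subseteq\M_t$ precisely when $R\geq t$), and the parametrisation $t = e^{-2\pi s}-1$. The upper cutoff $t=R$ in the statement is exactly the threshold beyond which $w_s$ is no longer guaranteed to lie in $\M_t$, which is what fixes the stated range; once the ranges are aligned, everything else is a one-line consequence of \eqref{ci}, \eqref{tt}, and the unitarity of the cocycle.
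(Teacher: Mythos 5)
Your reduction via \eqref{ci} and the absorption step $w_s\M_t w_s^*=\M_t$ for a unitary $w_s\in\M_R\subset\M_t$ are fine as far as they go, and they reproduce exactly the second half of the paper's proof (the case $1\leq t\leq R$). The gap lies precisely in the bookkeeping you singled out as the delicate point. The printed relation $t=e^{-2\pi s}-1$ in \eqref{tt} cannot be taken at face value: testing it at $s=0$ gives $\s^\f_0(\N)=\N=\M_1$ on the left but $\M_0=\M$ on the right. Since $\N=T(1)\M T(-1)$ and $\Delta_\xi^{is}T(t)\Delta_\xi^{-is}=T(e^{-2\pi s}t)$ in the -hsm convention, the correct parametrisation is $\s^\f_s(\N)=\M_t$ with $t=e^{-2\pi s}$, so the stated range $0<t\leq R$ splits into two regimes: $1\leq t\leq R$, corresponding to $s\leq 0$, and $0<t<1$, corresponding to $s>0$. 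Your argument covers only the first: for $s>0$ the hypothesis \eqref{ws} says nothing about $w_s$, so your claim ``hypothesis \eqref{ws} applies'' fails on the whole sub-range $0<t<1$, and the one-line absorption breaks down there. Your reading, under which all of $0<t\leq R$ would have $s<0$, is an artifact of the typo you propagated; the sanity check at $s=0$ would have exposed it.

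The missing idea --- the actual crux of the paper's proof --- is the cocycle identity $1=w_{s-s}=w_s\,\s^\f_s(w_{-s})$, i.e. $w_s=\s^\f_s(w_{-s}^*)$, recorded as \eqref{u*}. For $s>0$ one has $-s\leq 0$, so $w_{-s}\in\M_R\subset\N$ (this is where the standing assumption $R\geq 1$ in \eqref{ws} is actually used), whence $w_s\in\s^\f_s(\N)=\M_t$; the absorption then proceeds exactly as in your text. So the lemma does hold on the full range $0<t\leq R$, but establishing it for $0<t<1$ requires this extra step, which your proof omits. Note that the paper's own proof opens with ``Let $s>0$'' and treats that case first --- a clear signal that the intended range of $s$ is not the one your reading of $t=e^{-2\pi s}-1$ would give.
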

\begin{proof}
We have $\s_{s}^\psi(\N) = w_s \s_{s}^\f(\N)w^*_s$. On the other hand $1 = w_{s -s} = w_{s}\s^\f_{{s}}(w_{-s})$, namely
\ben\label{u*}
w_{s} = \s^\f_{s}(w^*_{-s}) \ .
\een
Let $s >0$. Since $w_{-s}\in\N$, it follows that $w_{s}\in \s^\f_{s}(\N) = \M_{t}$. Thus
\[
\s^\psi_{s}(\N) = w_{s} \s^\f_{s}(\N) w^*_{s} = w_{s} \M_{t} w^*_{s} = \M_{t}\ .
\]
Let now $1 \leq t \leq R$. Then again
$\s^\psi_{s}(\N) = w_{s} \s^\f_{s}(\N) w^*_{s} = w_{s} \M_{t} w^*_{s} = \M_{t}$
because $u_s\in\M_R \subset \M_t$. 
\end{proof}
\begin{proposition}
$\eta$ is cyclic for $\N$. 
\end{proposition}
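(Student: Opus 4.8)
The plan is to deduce cyclicity of $\eta$ for $\N$ by locating the vacuum vector $\xi$ inside $\overline{\N\eta}$ and then using that $\xi$ is already cyclic for $\N$. First I note that $\eta$, being separating for $\M\supseteq\N$, is automatically separating for $\N$, so only cyclicity is in question. Moreover it suffices to prove $\xi\in\overline{\N\eta}$: if $\xi=\lim_n y_n\eta$ with $y_n\in\N$, then $x\xi=\lim_n xy_n\eta\in\overline{\N\eta}$ for every $x\in\N$, so $\H=\overline{\N\xi}\subseteq\overline{\N\eta}$. Finally, replacing $\eta$ by $u'\eta$ for a suitable unitary $u'\in\M'\subseteq\N'$ changes neither $\psi$ nor the cyclicity of $\eta$ for $\N$ (since $u'$ commutes with $\N$), so I may assume $\eta$ lies in the natural cone determined by $\xi$; then $\xi$ and $\eta$ share the same modular conjugation $J$ and are fixed by it.

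The key input is the localisation of the cocycle. By \eqref{ws}, together with $\M_R\subseteq\M_1=\N$ for $R\ge 1$, we have $w_s^*=(D\f:D\psi)_s=\Delta_{\xi,\eta}^{is}\Delta_\eta^{-is}\in\N$ for all $s\le 0$. I would then consider the $\H$-valued function
\[
G(s)=\Delta_{\xi,\eta}^{is}\eta .
\]
Since $\Delta_\eta^{-is}\eta=\eta$, one gets $G(s)=w_s^*\eta\in\N\eta\subseteq\overline{\N\eta}$ for every real $s\le 0$. On the other hand $S_{\xi,\eta}\eta=\xi$, and because $\xi,\eta$ lie in a common natural cone we have $S_{\xi,\eta}=J\Delta_{\xi,\eta}^{1/2}$ with $J\xi=\xi$, hence $\Delta_{\xi,\eta}^{1/2}\eta=\xi$. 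In particular $\eta\in\mathrm{dom}(\Delta_{\xi,\eta}^{1/2})$, so $G$ extends to a bounded continuous function on the closed strip $-\tfrac12\le\mathrm{Im}\,s\le 0$, analytic in its interior, with $G(-i/2)=\xi$.

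The last step transports the boundary information into the interior by boundary uniqueness. Fix $\zeta$ orthogonal to $\overline{\N\eta}$ and set $h(s)=(\zeta,G(s))$. Then $h$ is bounded and analytic in the open strip, continuous up to the boundary, and vanishes on the entire ray $\{s\le 0\}$ of the edge $\mathrm{Im}\,s=0$; by the Schwarz reflection principle (equivalently, boundary uniqueness for bounded analytic functions on a strip) $h\equiv 0$. Hence $(\zeta,G(s))=0$ for all $s$ and all such $\zeta$, so $G(s)\in\overline{\N\eta}$ throughout the strip and in particular $\xi=G(-i/2)\in\overline{\N\eta}$, which finishes the proof by the reduction above.

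The hard part will be precisely this propagation of membership in the closed subspace $\overline{\N\eta}$ from the boundary to the interior point $s=-i/2$. This is where the localisation hypothesis \eqref{ws} is indispensable: it supplies $w_s^*\in\N$ on a whole half-line of real $s$, i.e.\ on a boundary interval of positive length, which is what powers the reflection argument — knowing $w_s\in\N$ only at isolated points would not be enough. Some care is also required to justify the analyticity and boundedness of $G$ on the strip (via $\eta\in\mathrm{dom}(\Delta_{\xi,\eta}^{1/2})$ and interpolation) and to legitimise the passage to the natural-cone representative of $\psi$.
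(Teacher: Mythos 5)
Your proof is correct, and it reaches the conclusion by a genuinely different route than the paper's, even though the analytic engine is the same: in both cases the cocycle localisation \eqref{ws} supplies membership in $\N$ along a whole boundary half-line of the width-$\tfrac12$ strip, and a boundary-uniqueness (reflection) argument does the rest. The paper fixes $\eta'\perp\N\eta$ and studies the scalar function $h(s)=(\eta',\Delta^{is}x\eta)$, $x\in\N$, built from the \emph{ordinary} modular operator $\Delta$ of $(\M,\eta)$: by the preceding lemma ($\N\subset\M$ is -hsm w.r.t.\ $\psi$), $h$ vanishes on a half-line, extends to the strip since $x\eta\in{\rm dom}\,\Delta^{1/2}$, hence vanishes identically; then $\eta'$ is orthogonal to $\s^\psi_s(\N)\eta$ for all real $s$, and Lemma \ref{nt} together with \eqref{vt} identify the algebra generated by these with $\M$, so $\eta'\perp\M\eta$ and $\eta'=0$ by cyclicity of $\eta$ for $\M$. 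The paper thus stays entirely on the real boundary and never touches relative modular theory in this step. You instead use the \emph{relative} modular operator, run the vector-valued function $\Delta_{\xi,\eta}^{iz}\eta$ into the interior point $z=-i/2$, and exploit the natural-cone normalisation to get $\Delta_{\xi,\eta}^{1/2}\eta=\xi$, closing with the cyclicity of $\xi$ for $\N$ (a standing hypothesis of the section) instead of the cyclicity of $\eta$ for $\M$. Your route buys the sharper intermediate statement $\xi\in\overline{\N\eta}$, with explicit approximants $w_s^*\eta$, and avoids Lemma \ref{nt} and \eqref{vt} altogether; the price is the natural-cone reduction, which is genuinely needed --- for a general representative, $S_{\xi,\eta}\eta=\xi$ only yields $\Delta_{\xi,\eta}^{1/2}\eta=J_{\xi,\eta}^{*}\xi$, and the relative conjugation need not fix $\xi$ --- whereas the paper's argument applies to any cyclic separating representative as it stands. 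The points you flag as delicate are all handled correctly: the cocycle $(D\f:D\psi)_s=w_s^*=\Delta_{\xi,\eta}^{is}\Delta_\eta^{-is}$ lies in $\M_R\subseteq\M_1=\N$ for $s\le0$ since $R\ge1$ and depends only on the states, not on the representative; $\eta\in{\rm dom}\,\Delta_{\xi,\eta}^{1/2}$ follows from $S_{\xi,\eta}\eta=\xi$; and vanishing on a boundary interval of positive length (not isolated points) is exactly what licenses the reflection step.
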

\begin{proof}
Let $\eta'$ be a vector orthogonal to $\N\eta$. Then $h(s)\equiv (\eta', \Delta^{is}x\eta) =0$, $s \geq 0$, where $\Delta$ is the modular operator associated with $\M$, $\eta$ and $x$ belongs to $\N$. Since $x\eta$ belongs to the domain of $\Delta^{1/2}$, $h$ is the boundary value of a function analytic in the strip $-\frac12 < \Im z < 0$, thus $h$ vanishes identically and $\eta'$ is orthogonal to $(\vee_{t>0}\M_t)\eta= \M\eta$ by \eqref{vt}. So $\eta' =0$. 
\end{proof}
Consider now the $2\times 2$ matrix algebras over $\N$ and $\M$
\[
\tilde\N \equiv \N\otimes{\rm Mat}_2(\mathbb C),\qquad \tilde\M \equiv \M\otimes{\rm Mat}_2(\mathbb C)
\]
and $\vartheta$ the positive linear functional on $\tilde\M$ given by
\[
\vartheta\left(\begin{matrix}x_{11}& x_{12}\\
x_{21}& x_{22}\end{matrix}\right)
= \f(x_{11}) + \psi(x_{22}), \quad x_{ij}\in\M\ .
\]
\begin{proposition}\label{theta}
$\tilde\N\subset\tilde\M$ is -hsm with respect to $\vartheta$. 
Moreover $\s^\vartheta_s(\tilde\N) = \M_t\otimes{\rm Mat}_2(\mathbb C)$,  $ t \leq R$. 
\end{proposition}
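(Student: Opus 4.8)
The plan is to compute the modular flow $\sigma^\vartheta$ of the balanced (Connes) weight $\vartheta$ directly and to read off both assertions corner by corner. Since $\f$ and $\psi$ are faithful and normal, $\vartheta$ is a faithful normal positive functional on $\tilde\M$, so modular theory applies to it; moreover the vector implementing $\vartheta$, assembled from $\xi$ and $\eta$, is separating for $\tilde\M$ and, because $\xi$ is cyclic and separating for $\N$ while $\eta$ is cyclic for $\N$ by the preceding Proposition, it is cyclic and separating for both $\tilde\N$ and $\tilde\M$. This is exactly what places $\tilde\N\subset\tilde\M$ into the framework of Section \ref{Wies}.

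The key ingredient is the classical description of the modular automorphism group of a balanced weight (\cite{C73}; see also \cite{T}): in terms of $w_s = (D\psi:D\f)_s$ and $w_s^* = (D\f:D\psi)_s$,
\[
\sigma^\vartheta_s\!\left(\begin{matrix} x_{11}& x_{12}\\ x_{21}& x_{22}\end{matrix}\right)
= \left(\begin{matrix} \sigma^\f_s(x_{11}) & \sigma^\f_s(x_{12})\,w_s^* \\ w_s\,\sigma^\f_s(x_{21}) & \sigma^\psi_s(x_{22})\end{matrix}\right)\ ,
\]
the $(2,2)$-entry being $\sigma^\psi_s(x_{22})=w_s\,\sigma^\f_s(x_{22})\,w_s^*$ by \eqref{ci} and the off-diagonal twists following from the cocycle identity. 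Specialising $x_{ij}\in\N$ reduces both claims to the behaviour of $\sigma^\f_s(\N)$, of $\sigma^\psi_s(\N)$ and of the cocycle $w_s$.

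For the $-$hsm statement I take $s\le 0$. Then $\sigma^\f_s(\N)\subset\N$, since $\N\subset\M$ is $-$hsm with respect to $\f$, and $\sigma^\psi_s(\N)\subset\N$ by the first Lemma of this section; moreover $w_s\in\M_R\subset\N$ by \eqref{ws} (as $R\ge1$). Hence each of the four corners of $\sigma^\vartheta_s(\tilde\N)$ lies in $\N$, so $\sigma^\vartheta_s(\tilde\N)\subset\tilde\N$ and $\tilde\N\subset\tilde\M$ is $-$hsm with respect to $\vartheta$. For the sharper identity I restrict to $0<t\le R$, $t=e^{-2\pi s}-1$: there the diagonal corners are $\sigma^\f_s(\N)=\M_t$ by \eqref{tt} and $\sigma^\psi_s(\N)=\M_t$ by Lemma \ref{nt}, while the off-diagonal corners are $\sigma^\f_s(\N)\,w_s^*=\M_t\,w_s^*$ and $w_s\,\sigma^\f_s(\N)=w_s\,\M_t$; since $w_s\in\M_R\subset\M_t$ is unitary, both equal $\M_t$. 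Thus all four corners coincide with $\M_t$ and $\sigma^\vartheta_s(\tilde\N)=\M_t\otimes{\rm Mat}_2(\mathbb C)$.

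The one point requiring care is the off-diagonal behaviour: the balanced-weight flow twists those corners by $w_s$, and the exact equality $\sigma^\vartheta_s(\tilde\N)=\M_t\otimes{\rm Mat}_2(\mathbb C)$ holds precisely because the localisation hypothesis \eqref{ws} forces $w_s$ into $\M_R\subset\M_t$, making the twist a unitary already inside the relevant algebra and hence invisible at the level of the corner. I expect the bookkeeping to split cleanly into the range $0<t\le R$, where the identity is exact, and the full range $s\le0$, where only the inclusion is asserted; beyond this absorption of $w_s$ the argument is routine matrix-corner manipulation.
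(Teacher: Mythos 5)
Your proposal is correct and takes essentially the same route as the paper's proof: Connes' balanced-weight formula \eqref{mat} for $\s^\vartheta_s$, followed by a corner-by-corner check using the -hsm property of $\f$, the first Lemma for the $\s^\psi$-corner, Lemma \ref{nt} for the exact identity, and the localisation hypothesis \eqref{ws} placing $w_s\in\M_R$ (hence in $\N$ for $s\le 0$, and in $\M_t$ for $t\le R$) to handle the off-diagonal twists. The only difference is expository: where the paper disposes of the second statement with ``follows similarly,'' you make explicit the unitary absorption $\M_t\,w_s^* = \M_t = w_s\,\M_t$, which is exactly the intended argument.
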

\begin{proof}
We have \cite{C73}:
\ben\label{mat}
\s^\vartheta_s \left(\begin{matrix}x_{11}& x_{12}\\
x_{21}& x_{22}\end{matrix}\right)
=\left(\begin{matrix}\s^\f_s(x_{11})& \s_s^\f(x_{12})w^*_s\\
w_s \s_s^\f(x_{21})& \s^\psi_s(x_{22})\end{matrix}\right)
\een
thus, if $s \leq 0$ and all the $x_{ij}$ belong to $\N$, then all entries of the matrix $\s^\vartheta_s \left(\begin{matrix}x_{11}& x_{12}\\
x_{21}& x_{22}\end{matrix}\right)$ belong to $\N$ as well,  because $w_s\in\N$ and of the -hsm modularity assumption.   

The second statement also follows similarly as $w_s \in \M_R$ by Lemma \ref{nt}.  
\end{proof}
\begin{theorem}
\label{rw}
Let $\N\subset \M$ be -hsm w.r.t. $\xi, \eta$ with the property \eqref{ws} as above. Then 
\[
H \equiv \frac{1}{2\pi  }\big(\log\Delta_{\eta,\xi, \N}- \log\Delta_{\eta,\xi}\big)
\]
is a positive essentially selfadjoint operator and we denote also its closure by $H$. 

The one parameter unitary group $T$ generated by $H$  satisfies 
\[
T(t)\M T(-t)=\M_t,\qquad 
T(t)\log\Delta_{\eta,\xi}T(-t) =  \log\Delta_{\eta,\xi, \M_t}, \quad  0\leq t \leq R\ ,
\]
with $\M_t$ given by \eqref{tt}, and $T(t)$, $D(s)$, $t,s\in \mathbb R$, provide a representation of the ``$ax + b$'' group, with $D(2\pi s)\equiv \Delta^{-is}_{\eta,\xi}$, namely
\[
\Delta_{\eta,\xi}^{is}T(t)\Delta_{\eta,\xi}^{-is} = T(e^{-2\pi s}t),\quad s,t\in\mathbb R \ .
\]
\end{theorem}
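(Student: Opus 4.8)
The plan is to deduce the theorem from the original Wiesbrock--Araki--Zsido result \eqref{WAZ} by applying it to the $2\times2$ amplified inclusion $\tilde\N\subset\tilde\M$, which by Proposition \ref{theta} is $-$hsm with respect to the balanced functional $\vartheta$. First I would realise $\tilde\M$ in standard form on $\H\otimes\mathbb{C}^2\otimes\mathbb{C}^2$, with $\tilde\M=\M\otimes\mathrm{Mat}_2(\mathbb{C})\otimes1$ and balanced cyclic separating vector $\tilde\xi=\xi\otimes e_1\otimes e_1+\eta\otimes e_2\otimes e_2$ representing $\vartheta$; here $\tilde\xi$ is cyclic and separating for $\tilde\N$ as well, using that $\xi$ is cyclic separating for $\N$, that $\eta$ is cyclic for $\N$ (by the preceding Proposition) and separating for $\N$ (being separating for $\M\supset\N$). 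Then \eqref{WAZ} yields that $\tilde H=\frac1{2\pi}\big(\log\Delta_{\tilde\xi,\tilde\N}-\log\Delta_{\tilde\xi,\tilde\M}\big)$ is positive and essentially selfadjoint, that $\tilde T(t)=e^{it\tilde H}$ fixes $\tilde\xi$ and satisfies $\tilde T(t)\tilde\M\tilde T(-t)=\tilde\M_t$, and that $\Delta_{\tilde\xi,\tilde\M}^{is}\tilde T(t)\Delta_{\tilde\xi,\tilde\M}^{-is}=\tilde T(e^{-2\pi s}t)$.

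The crux is then to read off $H$, $\Delta_{\eta,\xi}$ and $T$ from their amplified counterparts by the corner decomposition of the balanced modular theory. Writing $\zeta_1=\xi$, $\zeta_2=\eta$, the Tomita operator of $(\tilde\M,\tilde\xi)$ maps $x\zeta_j\otimes e_i\otimes e_j\mapsto x^*\zeta_i\otimes e_j\otimes e_i$ for $x\in\M$, so that $\Delta_{\tilde\xi,\tilde\M}$ is block diagonal along $\H\otimes\mathbb{C}^2\otimes\mathbb{C}^2=\bigoplus_{i,j}\H\otimes e_i\otimes e_j$, acting on the $(i,j)$-block as the relative modular operator $\Delta_{\zeta_i,\zeta_j,\M}$; likewise $\Delta_{\tilde\xi,\tilde\N}$ acts there as $\Delta_{\zeta_i,\zeta_j,\N}$. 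In particular, on the $(2,1)$-block $\tilde H$ restricts exactly to $H=\frac1{2\pi}\big(\log\Delta_{\eta,\xi,\N}-\log\Delta_{\eta,\xi}\big)$ and $\Delta_{\tilde\xi,\tilde\M}^{is}$ restricts to $\Delta_{\eta,\xi}^{is}$. Since $\tilde H$ is block diagonal and positive essentially selfadjoint, so is its restriction $H$; moreover $T(t)=e^{itH}$ is the restriction of $\tilde T(t)$ to this block, and restricting the $ax+b$ relation of $\tilde T$ immediately gives $\Delta_{\eta,\xi}^{is}T(t)\Delta_{\eta,\xi}^{-is}=T(e^{-2\pi s}t)$.

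It remains to transfer the tunnel and shift relations, and here I would use the sharp form $\tilde\M_t=\M_t\otimes\mathrm{Mat}_2(\mathbb{C})\otimes1$ for $0\le t\le R$ from Proposition \ref{theta}, with $\M_t$ the tunnel \eqref{tt}. Writing $\tilde T(t)=\bigoplus_{i,j}T_{ij}(t)$ for the block restrictions (so $T=T_{21}$), the element $x\otimes E_{22}\otimes1$ is conjugated by $\tilde T(t)$ to an operator acting on the $(2,j)$-block as $T_{2j}(t)\,x\,T_{2j}(-t)$; membership in $\M_t\otimes\mathrm{Mat}_2(\mathbb{C})$ forces this to be a single $y\in\M_t$ independent of $j$, and as $x$ runs over $\M$ the value $y$ runs over all of $\M_t$ since $\mathrm{Ad}\,\tilde T(t)$ is an isomorphism onto $\M_t\otimes\mathrm{Mat}_2(\mathbb{C})$. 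Hence $T(t)\M T(-t)=T_{21}(t)\M T_{21}(-t)=\M_t$. Finally, because $\tilde T(t)\tilde\xi=\tilde\xi$, modular covariance gives $\tilde T(t)\log\Delta_{\tilde\xi,\tilde\M}\tilde T(-t)=\log\Delta_{\tilde\xi,\tilde\M_t}$; restricting to the $(2,1)$-block and using the corner identification $\Delta_{\tilde\xi,\tilde\M_t}\big|_{(2,1)}=\Delta_{\eta,\xi,\M_t}$ yields $T(t)\log\Delta_{\eta,\xi}T(-t)=\log\Delta_{\eta,\xi,\M_t}$ for $0\le t\le R$.

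The main obstacle is precisely this last transfer. Because $\tilde T$ acts by the \emph{different} translations $T_{ij}$ on the four blocks, the amplified identity $\tilde T(t)\tilde\M\tilde T(-t)=\tilde\M_t$ genuinely mixes blocks through the matrix units, and one must argue carefully, via the exact equality $\tilde\M_t=\M_t\otimes\mathrm{Mat}_2(\mathbb{C})$, that the single corner $T_{21}$ already implements the tunnel on $\M$ (this also encodes that all four $T_{ij}(t)$ induce the same isomorphism $\M\to\M_t$). Along the way one must control operator domains when restricting the block-diagonal essentially selfadjoint $\tilde H$ to an invariant block, and track the range $0\le t\le R$: this is exactly where the localisation hypothesis \eqref{ws} enters, through Lemma \ref{nt} and Proposition \ref{theta}, since outside this range the identification $\tilde\M_t=\M_t\otimes\mathrm{Mat}_2(\mathbb{C})$ and the corner statement for $\log\Delta_{\tilde\xi,\tilde\M_t}$ need not hold.
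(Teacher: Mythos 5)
Your proposal is correct and follows essentially the same route as the paper's own proof: amplify to $\tilde\N\subset\tilde\M$ with the balanced functional $\vartheta$, apply the Wiesbrock--Araki--Zsido theorem \eqref{WAZ} via Proposition \ref{theta}, and read off $H$, $T=T_{21}$ and the relations from the $(2,1)$ corner of the block decomposition, with the localisation \eqref{ws} entering through $\tilde\M_t=\M_t\otimes{\rm Mat}_2(\mathbb C)$ for $t\leq R$. Your added care about the block-mixing in transferring $\tilde T(t)\tilde\M\tilde T(-t)=\tilde\M_t$ to the single corner, and about domains when restricting $\tilde H$, merely makes explicit steps the paper leaves implicit.
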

\begin{proof}
The GNS Hilbert space $\tilde\H$ of $\vartheta$ is the direct sum of four copies of $\H$
\[
\tilde\H = \bigoplus_{i,j}\H_{ij}, \quad \H_{ij} = \H,\quad i,j = 1,2\ ,
\]
$\vartheta$ is given by the vector $\th = \xi\oplus\eta$ on $\H_{11}\oplus\H_{22}$ and the modular operator $\Delta_\th \equiv \Delta_{\th,\tilde\M}$ decomposes as
\[
\Delta_\th = {\sum_{i,j}}^\oplus\Delta_{ij}
\]
with
\[
\Delta_{11} = \Delta_{\xi,\M}, \ \ \Delta_{22} = \Delta_{\eta,\M}, \ \ \Delta_{12} = \Delta_{\xi,\eta,\M}, \ \ \Delta_{21} = \Delta_{\eta,\xi,\M}.
\]
We can make the corresponding decomposition for $\vartheta|_{\tilde\N}$ and $\Delta_{\th,\tilde\N}$
\[
\Delta_{11,\tilde\N} = \Delta_{\xi,\N}, \ \ \Delta_{22,\tilde\N} = \Delta_{\eta,\N}, \ \ \Delta_{12,\tilde\N} = \Delta_{\xi,\eta,\N}, \ \ \Delta_{21,\tilde\N} = \Delta_{\eta,\xi,\N}.
\]
Now, by the -hsm property in Prop. \ref{theta}, it follows by \eqref{WAZ} that
\ben\label{Ht}
\tilde H = \frac{1}{2\pi}(\log\Delta_{\th, \tilde\N} - \log\Delta_\th)
\een
is a positive, essentially selfadjoint operator on $\tilde\H$ and we have
\[
\tilde T(1 - e^{2\pi s}) = \Delta_{\th,\tilde\N}^{is} \Delta^{-is}_\th,
\]
with $\tilde T$ the one-parameter unitary group generated by the closure of $\tilde H$, that satisfies
\[
\tilde\M_t \equiv\tilde T(t) \tilde\M \tilde T(-t) \subset \tilde\M, \quad t\geq 0,\qquad \tilde T(1) \tilde\M \tilde T(-1) = \tilde\N\ .
\]
Clearly $\tilde T(t)\theta = \theta$ and we have the decomposition
\[
\tilde T(t) = {\sum_{i,j}}^\oplus T_{ij}(t), \quad i,j = 1\ {\rm or}\ 2,
\]
with $T_{11}$ and $T_{22}$ the translation unitary group given by \eqref{WAZ} associated with  the -hsm inclusion $\N\subset\M$ w.r.t. $\x$ and $\eta$. 

Since from \eqref{mat} 
\[
\tilde\M_t = \M_t\otimes{\rm Mat}_2(\mathbb C), \quad t \leq R,
\]
and we have $\tilde T(t) \Delta_{\th,\tilde\M} \tilde T(-t) = \Delta_{\th,\tilde\M_t}$, we get
\ben\label{tmt}
T(t)\M T(-t) = \M_t\quad  {\rm and}\quad 
T(t)\Delta_{\eta,\xi,\M}T(-t) = \Delta_{\eta,\xi,\M_t}, \quad t \leq R ,
\een
with $T = T_{21}$. 

Finally, from eq. \eqref{Ht}, we have
\ben\label{H21}
\log\Delta_{\eta,\xi, \N} - \log\Delta_{\eta,\xi} = 2\pi  H\ ,
\een
where $H=H_{21}$, the  restriction of $\tilde H$ to $\H_{21}$, is the generator of $T$. 
\end{proof}
We put now
\[
S(t) \equiv S(\f_t |\!| \psi_t )\ ,
\]
with $\f_t \equiv \f\vert_{\M_t}$, $\psi_t \equiv \psi\vert_{\M_t}$. 
Clearly $S(t)\geq 0$ and, by the monotonicity of the relative entropy, $S(t)$ is a non-decreasing function, thus $S$ is almost everywhere 
differentiable and
\[
S'(t) \leq 0\ .
\]
We are interested in the behaviour of the function $S(t)$. 
\begin{corollary}
In the above setting,
\ben\label{deltaS}
S(t)  = S(\f |\!| \psi) -2\pi t (\xi, H \xi), \quad  0\leq t \leq R \ .
\een
\end{corollary}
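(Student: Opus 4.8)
The plan is to reduce the computation of $S(t)$ to a single operator identity that Theorem \ref{rw} already puts at our disposal, namely the relative-modular analogue of the Borchers shift relation \eqref{shift}. I would begin by writing the restricted relative entropy straight from its definition. For $0\leq t\leq R$ the vector $\xi$ is cyclic and separating for each $\M_t$ (the vacuum half-sided translations of Section \ref{transmod} also realise $\M_t=T_0(t)\M T_0(-t)$ while fixing $\xi$, so cyclicity and separating pass from $\M$ to $\M_t$), hence
\[
S(t) = S(\f_t \,|\!|\, \psi_t) = -(\xi,\ \log\Delta_{\eta,\xi,\M_t}\,\xi),\qquad 0\leq t\leq R,
\]
where $\Delta_{\eta,\xi,\M_t}$ is the relative modular operator of the pair $(\eta,\xi)$ for $\M_t$. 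The entire problem is then to control the $t$-dependence of $\log\Delta_{\eta,\xi,\M_t}$.

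Next I would invoke eq. \eqref{tmt} of Theorem \ref{rw}, which gives $\Delta_{\eta,\xi,\M_t} = T(t)\Delta_{\eta,\xi}T(-t)$ for $0\leq t\leq R$; since $T(t)$ is unitary this is the same as $\log\Delta_{\eta,\xi,\M_t} = T(t)\log\Delta_{\eta,\xi}\,T(-t)$. To convert this conjugation into an additive shift I would use the $ax+b$ commutation relation $\Delta_{\eta,\xi}^{is}T(t)\Delta_{\eta,\xi}^{-is} = T(e^{-2\pi s}t)$ supplied by the same theorem. Differentiating it formally in $t$ at $t=0$ gives $e^{is\log\Delta_{\eta,\xi}}He^{-is\log\Delta_{\eta,\xi}} = e^{-2\pi s}H$, and a further differentiation in $s$ yields the commutator $[\log\Delta_{\eta,\xi},H]=2\pi i\,H$; integrating in $t$ then produces
\[
T(t)\log\Delta_{\eta,\xi}\,T(-t) = \log\Delta_{\eta,\xi} + 2\pi t\,H,
\]
exactly as \eqref{shift} is obtained from \eqref{borch} in the vacuum Borchers situation. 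Combining the two displays gives the clean operator equation $\log\Delta_{\eta,\xi,\M_t}=\log\Delta_{\eta,\xi}+2\pi t\,H$; as a consistency check, at $t=1$ one has $\M_1=\N$ and this reproduces \eqref{H21}.

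Finally I would pair this identity with $\xi$ and negate, obtaining
\[
S(t) = -(\xi,\log\Delta_{\eta,\xi}\,\xi) - 2\pi t\,(\xi,H\xi) = S(\f\,|\!|\,\psi) - 2\pi t\,(\xi,H\xi),
\]
which is \eqref{deltaS}; note that $H\geq 0$ makes this consistent with the already observed monotonicity $S'(t)\leq 0$. The main obstacle I anticipate is not the algebra but the unbounded-operator bookkeeping: both $\log\Delta_{\eta,\xi}$ and $H$ are unbounded, so the shift identity and the subsequent pairing with $\xi$ must be justified on a common core (or read off at the level of the unitary groups $\Delta_{\eta,\xi,\M_t}^{is}$ and only afterwards passed to generators), and one should allow the limiting value $S(\f\,|\!|\,\psi)=+\infty$. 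Since this parallels precisely the rigorous passage from \eqref{borch} to \eqref{shift}, I expect it to go through without genuinely new difficulties.
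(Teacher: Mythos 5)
Your proof is correct and follows essentially the same route as the paper: the paper's proof likewise combines \eqref{tmt} with the ``$ax+b$'' commutation relation (via \eqref{cr}) to obtain $\log\Delta_{\eta,\xi,\M_t}-\log\Delta_{\eta,\xi}=2\pi t H$ for $0\leq t\leq R$, and then takes expectation values on $\xi$. You merely spell out the passage from the group commutation relation to the additive shift (which the paper leaves implicit, as in the derivation of \eqref{shift} from \eqref{borch}), and your sign bookkeeping and the $t=1$ consistency check against \eqref{H21} are accurate.
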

\begin{proof}
By \eqref{tmt} and \eqref{cr}, we have
\[
\log\Delta_{\eta,\xi, \M_t} - \log\Delta_{\eta,\xi} = 2\pi t H, \quad  0\leq t \leq R \ .
\]
By taking expectation values on the vector $\xi$, we immediately get
\[
-(\xi, \log\Delta_{\eta,\xi \M_t}\xi)+ (\xi,\log\Delta_{\eta,\xi}, \xi) = 2\pi t (\xi, H \xi), \quad  0\leq t \leq R,
\]
thus \eqref{deltaS} holds. 
\end{proof}
As a first example for a state $\psi$ with the above property \eqref{ws}, let $\psi = \f(U\cdot U^*)$ with $U$ a unitary in $\M_R$. Then
\[
(D\psi : D\f)_s = U\s^\f_s(U^*)
\]
so $(D\psi : D\f)_s \in \M_R$, $s\leq 0$. 

If $\M$ is a type $III_1$ factor, it can be shown that, given any faithful normal state $\psi$ of $\M$ and $R>0$, there exists a normal faithful state $\psi'$ on $\M$ such that $\psi'|_{\M_R} = \psi|_{\M_R}$ and $(D\psi' : D\f)_s \in \M_R$, $s\leq 0$ (this applies to the Remark \ref{rem}). 
\section{Relative entropy in QFT}
This section contains results in local Quantum Field Theory. We first briefly recall the operator algebraic context and we then proceed to describe the relative entropy behaviour. General references for background concepts in the following are \cite{H,OP,BR}.
\subsection{Preliminaries}
We first recall  some of the basics in the operator algebraic approach to QFT. 
\subsubsection{Local von Neumann algebras}
Let $\A$ be a local QFT net of von Neumann algebra on a Hilbert space $\H$ (any spacetime dimension $1 + n$).  Thus,  with $\K$ the family of double cone regions of the Minkowski spacetime, if $O\in \K$ we have a von Neumann algebra $\A(O)$ acting on $\H$. The map 
\[
O\in \K\mapsto \A(O)
\] 
is isotonous, local and Poincar\'e covariant, namely there exists a positive energy, unitary representation of the Poincar\'e group $\Poi$ on $\H$ such that
\[
U(g)\A(O)U(g)^* = \A(gO), \quad g\in\Poi \ , \ O\in\K\ .
\]
Moreover, there exists a unique, up to a phase, unit  $U$-invariant vector $\xi\in\H$, the vacuum vector, and $\xi$ is cyclic for the algebra generated by all the $\A(O)$'s, $O\in \K$. 

With $F$  a spacetime region, let $\mathfrak A(F)$ be the $C^*$-algebra generated by all the von Neumann algebras $\A(O)$ where $O$ runs on the double cones contained in $F$; when the region is the entire Minkowski spacetime, $\mathfrak A \equiv \mathfrak A(\mathbb R^{1+n})$ is called the $C^*$-algebra of $\A$. Denote by $\A(F)$ the weak closure $\mathfrak A(F)''$ of $\mathfrak A(F)$. We assume weak additivity, so that, by the Reeh-Schlieder theorem, $\xi$ is cyclic and separating for $\A(F)$ if $F$ and its spacelike complement $F'$ have non-empty interiors. 

We also assume the Bisognano-Wichmann property, that can be proven under very general assumptions \cite{BW}, in particular we have
\[
\Delta^{-is}_W = U\big(\Lambda_W(2\pi s)\big)\ .
\]
Here $W$ is the wedge region $x_1 > |x_0|$ of the spacetime, $\Delta_W$ is the modular operator associated with $\A(W)$ and $\xi$, and $\Lambda_W$ is the one-parameter group of boosts preserving $W$ (the same property then holds for any other wedge by Poincar\'e covariance). 

Haag duality for wedges then follows:
\[
A(W)' = \A(W')
\]
and we assume Haag duality for double cones
\[
\A(O)' = \A(O'), \quad O\in \K
\]
(otherwise, just redefine $\A(O)$ as $\A(O')'\,$) and similarly for spacelike cones. 

It turns out (see \cite{L82}) that  $\A(W)$ is a factor of type $III_1$ in Connes' classification \cite{C73}. 

\subsubsection{DHR charges}
\label{charges}
The DHR theory was developed in \cite{DHR}, yet we illustrate the charge, or sector, concept also by the later work in \cite{BF}. Let $\r$ be a covariant representation of $\mathfrak A$ on a Hilbert space $\H_\r$, so there exists a positive energy unitary representation  $U_\r$ of the universal cover $\uPoi$ of $\Poi$ on $\H_\r$ such that
\ben\label{covar}
\r(U(g)XU(g)^*) = U_\r(g)\r(X)U_\r(g)^*, \quad X\in \mathfrak A, \  g\in \uPoi .
\een
Assume for the moment that $U_\r$ is massive, namely the energy-momentum spectrum has an isolated lower mass shell. Then, for any spacelike cone $\cal S$ in the Minkowski spacetime, the restriction $\r|_{\mathfrak A(\cal S')}$ is unitarily equivalent to ${\rm id}|_{\mathfrak A(\cal S')}$, with ${\rm id}$ the vacuum representation. Thus, up to unitary equivalence, we may choose a spacelike cone ${\cal S}_0$,  identify $\H_\r$ with $\H$ and assume that $\r(X) = X$, $X\in \mathfrak A({\cal S}_0')$. We then say that $\r$ is localised in ${\cal S}_0$. By Haag duality (for spacelike cones), then $\r$ maps $\mathfrak A({\cal S}_0)$ to $\A({\cal S}_0)$.
Now, $\r|_{\mathfrak A({\cal S}_0)}$ is normal because ${\cal S}_0 \subset \cal S'$ for some spacelike con $\cal S$. So $\r|_{\mathfrak A({\cal S}_0)}$ extends to a normal endomorphism $\r_{{\cal S}_0}$ of $\A({\cal S}_0)$, and similarly to a normal endomorphism $\r_W$ of $\A(W)$ if $W\supset{\cal S}_0$ is a wedge. We may loosely say that $\r_{{\cal S}_0}$ and $\r_W$ are the restrictions of $\r$ to $\A({\cal S}_0)$ and $\A(W)$ and still denote them simply by $\r$ if it is clear from the context that we are dealing with restrictions. 

Moreover there exists a conjugate representation, and this is equivalent to say that $\r$ has finite Jones index; the DHR dimension $d(\r)$ of $\r$ turns out to be the square root Jones index of $\r_W$ \cite{L89, L90}. 

Now, a DHR charge (or sector) is the unitary equivalence class of a representation $\r$ of $\mathfrak A$ such that $\r|_{\mathfrak A(O')}$ is equivalent to ${\rm id}|_{\mathfrak A(O')}$ for all $O\in\K$. By the above arguments, for any given $O\in\K$, we can choose in this class an endomorphism $\r$ of $\mathfrak A$ that is localised in $O$, namely $\r$ acts identically on $\mathfrak A(O')$. We shall thus deal directly with localised endomorphisms. We assume that $\r$ is Poincar\'e covariant (this follows by other requirements, see \cite{GL92}). Also, we assume that $\r$ has finite Jones' index.

Thus there exists a standard left inverse $\Phi$ of $\r$, namely a completely positive map $\Phi : \mathfrak A \to \mathfrak A$ such that
\[
\Phi \cdot\r = {\rm id}\ ,
\]
indeed $\Phi_\r = \r^{-1}\cdot \e$ with $\e$ the minimal conditional expectation $\mathfrak A \to \r(\mathfrak A)$ \cite{L90}.  

Given an endomorphism $\r$ localised in $O$ as above, we shall consider the charged state $\psi$ given by 
\ben\label{psi}
\psi = \f\cdot \Phi \ .
\een
Note that $\psi$ is localised in $O$, namely $\psi = \f$ on $\mathfrak A(O')$, and that, by composing $\psi$ with the adjoint action of a localised unitary, we get a state localised in any given double cone.
If $W$ is a wedge region containing $O$, then $\psi|_{\mathfrak A(W)}$ extends to a normal faithful state of $\A(W)$ ($\psi$ is inner automorphism equivalent to a state localised in $W'$) that we denote by $\psi_W$, and similarly $\f_W = \f |_{\A(W)}$.  .

Let now $u^\r_g \equiv U_\r(g)U(g)^*$, $ g\in \uPoi$, be the covariance cocycle of $\r$. Thus
\[
\r = u^\r_g\r_g(\cdot){u^\r_g}^* \ ,
\]
where $\r_g = U(g)\r\big(U(g)^*\cdot U(g)\big)U(g)^*$; if $\r$ is localised in $O$ then the charge $\r_g$ is localised in $gO$. If $\tilde O$ is a double cone containing both $O$ and $gO$, then both $\r$ and $\r_g$ act identically on $\mathfrak A(\tilde O')$, thus 
\ben\label{loc}
u^\r_g \in \A(\tilde O) \quad {\rm if}\quad \tilde O \supset O\cup O_g 
\een
as, by Haag duality, $u^\r_g \in \A(\tilde O')' = \A(\tilde O)$. 

With $W$ a wedge region and $\Lambda_W$ the  corresponding boost one parameter group, let $\r$ be localised in a (possibly unbounded) region contained in $W$. Then $u^\r_{\Lambda_W(s)}\in \A(W)$, $s\in \mathbb R$. We shall make use of the following relation with Connes' cocycle \cite{L97}:
\ben\label{KW}
u^\r_{\Lambda_W(-2\pi s)} = d(\r)^{is}(D\psi_W : D\f_W)_s, \quad s\in \mathbb R \ .
\een
Thus, while the Bisognano-Wichmann theorem sets up a connection between modular theory and the vacuum boost symmetries, formula \eqref{KW} sets up a connection between the relative modular operator and the boost symmetries in the charged representation; this formula is indeed equivalent to the equality\ben\label{KW1}
2\pi K_{\r,W} =- \log \Delta_{\eta,\xi, W} - \log d(\r) \ ,
\een
therefore, evaluating on $\xi$,
\ben\label{sfp}
S(\f_W |\!| \psi_W) = 2\pi (\xi, K_{\r,W}\xi) + \log d(\r)   \ .
\een
Here $K_{\r,W}$ is the selfadjoint generator of the boost one parameter unitary group $U_\r(\Lambda_W(\cdot))$,
 $\xi$ is the vacuum vector, $\eta$ is any cyclic and separating vector for $\A(W)$ giving the state $\psi_W$ and $\Delta_{\eta,\xi, W}$ is the relative modular operator for $\A(W)$ associated with $\xi,\eta$. 

Formula \eqref{KW1} is valid only if $\r$ is localised within $W$. We shall move the wedge $W$ in formula \eqref{KW1} to another wedge $\bar W$ still containing the localisation region of $\r$. After that, we will be able to compare $\log \Delta_{\eta,\xi, W}$ and $\log \Delta_{\eta,\xi, \bar W}$, hence the relative entropies, by comparing $K_{\r,W}$ and $K_{\r,\bar W}$ by the Poincar\'e unitary action $U_\r$. 

For the direct relation to the QNEC, we also give now the formula\footnote{If we choose the vector representatives $\eta$ of $\f_W$ in the natural cone $\cal P_\xi^\natural \subset \H$ relative to $\A(W)$ (see \cite{T, BR}), then
$\Delta_{\xi,\eta, W} = \Delta_{\eta,\xi, W}^{-1}$; thus, by evaluating \eqref{KW1} on $\eta$, we also have the formula
\[
S(\psi_W |\!| \f_W) = - 2\pi (\eta, K_{\r,W}\eta) - \log d(\r)   \ .
\]
However, by restricting to a von Neumann subalgebra, $\eta$ is no longer canonical.}
for $S(\psi_W |\!| \f_W)$. 
\begin{lemma}
\label{foot}
Let $\r$ be localised in $W$ and choose a conjugate charge $\bar\r$ localised in $W'$. There exists a vector $\eta\in \H$, giving the state $\psi_W$, such that
\ben\label{lemma}
S(\psi_W|\!|\f_W) = 2\pi(\eta,  K_{\bar\r, W}\eta)  -\log d(\r)\ .
\een
\end{lemma}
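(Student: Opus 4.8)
The plan is to reduce the assertion to the alternative formula for $S(\psi_W|\!|\f_W)$ recorded in the footnote to \eqref{KW1}, namely
\[
S(\psi_W |\!| \f_W) = -2\pi(\eta, K_{\r,W}\eta) - \log d(\r),
\]
which holds precisely when $\eta$ is taken to be the representative of $\psi_W$ in the natural cone $\mathcal P^\natural_\xi$ of $\A(W)$ relative to the vacuum $\xi$. The whole point is then to trade $-K_{\r,W}$ for $+K_{\bar\r,W}$ by exploiting that the modular conjugation $J_W$ of $(\A(W),\xi)$ implements charge conjugation. So I would first fix this natural-cone vector $\eta$: since $\psi_W$ is faithful and normal, $\eta$ is cyclic and separating for $\A(W)$, and the defining property of the natural cone gives $J_W\eta = \eta$.

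Next I would choose the conjugate charge not arbitrarily but as the $J_W$-reflection of $\r$, setting $\bar\r = \Ad(J_W)\circ\r\circ\Ad(J_W)$. Using Haag duality for wedges and $J_W\A(W)J_W = \A(W')$ one checks that $\bar\r$ is an endomorphism acting trivially on $\mathfrak A(W)$, hence localised in $W'$, with $d(\bar\r) = d(\r)$; by the general theory it carries the conjugate sector. Its covariant representation is $U_{\bar\r}(g) = J_W\, U_\r(j_W g\, j_W)\, J_W$, where $j_W$ is the geometric reflection implemented by $J_W$ (this is the content of Bisognano--Wichmann together with the PCT interpretation of $J_W$). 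The decisive geometric fact is that $j_W$ acts as $-\id$ in the $(x_0,x_1)$–plane and therefore commutes with the boost, so that $j_W\Lambda_W(t)j_W = \Lambda_W(t)$ and consequently
\[
U_{\bar\r}(\Lambda_W(t)) = J_W\, U_\r(\Lambda_W(t))\, J_W,\qquad t\in\mathbb R .
\]
Differentiating at $t=0$ and using the antilinearity of $J_W$ (which flips the sign of $i$ in the exponent) yields the generator identity $K_{\bar\r,W} = -J_W K_{\r,W}J_W$.

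Finally I would combine the two ingredients. Since $J_W\eta = \eta$ and $J_W$ is antiunitary, for the selfadjoint operator $K_{\r,W}$ one has $(\eta,K_{\r,W}\eta) = (J_W\eta, K_{\r,W}J_W\eta) = \overline{(\eta, J_WK_{\r,W}J_W\,\eta)} = -(\eta, K_{\bar\r,W}\eta)$, the last expression being real because $K_{\bar\r,W}$ is selfadjoint. Substituting into the footnote formula turns $-2\pi(\eta,K_{\r,W}\eta)$ into $+2\pi(\eta,K_{\bar\r,W}\eta)$ and gives \eqref{lemma}. I expect the only genuine obstacle to lie in the second step: rigorously establishing that $J_W$ implements charge conjugation, i.e. that the reflected $\bar\r$ is a representative of the conjugate sector and that its boost representation is exactly $J_W U_\r(\Lambda_W(\cdot))J_W$ (with the attendant essential selfadjointness of the generators). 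This is where Bisognano--Wichmann (geometric $J_W$) and the DHR/PCT analysis of conjugate charges are needed; everything else is routine bookkeeping with the natural cone and antilinearity.
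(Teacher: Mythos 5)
Your argument is correct in substance, but it takes a genuinely different route from the paper's. You anchor everything on the natural-cone formula in the footnote to \eqref{KW1} together with the PCT input $K_{\bar\r,W} = -J_W K_{\r,W}J_W$ for the reflected conjugate $\bar\r = {\rm Ad}J_W\cdot\r\cdot{\rm Ad}J_W$, and then use $J_W\eta = \eta$ plus antilinearity to trade $-2\pi(\eta,K_{\r,W}\eta)$ for $+2\pi(\eta,K_{\bar\r,W}\eta)$; your signs and the fact that the wedge reflection commutes with $\Lambda_W$ both check out, and your choice of conjugate agrees with the paper's $\bar\r = j\cdot\r\cdot j$. The paper instead constructs the vector explicitly as $\eta = V\xi$, with $V$ the standard intertwiner $VX = \bar\r\r(X)V$, and works at the level of covariance cocycles: the identities $u^{\bar\r\r}_s = u^{\bar\r}_s u^\r_s$ and $u^{\bar\r\r}_s V = V$ convert the Connes-cocycle derivative \eqref{w*}, via \eqref{KW}, directly into $2\pi(\eta,K_{\bar\r,W}\eta) - \log d(\r)$. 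The paper's construction buys two things your route does not. First, rigor: the footnote identity is obtained by formally evaluating \eqref{KW1} on $\eta$ with no control of operator domains, and the paper's proof deliberately replaces exactly this formal step by the cocycle-derivative computation; your proof inherits the formal status of the footnote unless you redo that derivative computation with your natural-cone vector. Second, $W$-independence: $V$ does not depend on $W$, so $\eta = V\xi$ satisfies \eqref{lemma} simultaneously for all admissible wedges --- which is what the remark following the lemma and Theorem \ref{eta} exploit --- whereas the natural-cone representative of $\psi_W$ varies with $W$ (this is precisely the caveat in the paper's footnote that, upon restricting to a subalgebra, $\eta$ is no longer canonical).

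For the lemma as literally stated (fixed $W$, existence of some $\eta$) your proof is adequate, granting the GL92 facts you correctly flag as the nontrivial input ($j\cdot\r\cdot j$ is a conjugate and $U_{\bar\r}(\cdot) = J_W U_\r(j_W\,\cdot\,j_W)J_W$, which under Bisognano--Wichmann are available here). One small loose end: the lemma allows $\bar\r$ to be an \emph{arbitrary} conjugate localised in $W'$, while you (like the paper's own proof) treat only $j\cdot\r\cdot j$; the general case follows by replacing $\eta$ with $U\eta$, where $U\in\A(W')$ is the unitary with $\bar\r_1 = U\bar\r(\cdot)U^*$, since $U$ commutes with $\A(W)$ and $K_{\bar\r_1,W} = UK_{\bar\r,W}U^*$.
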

\begin{proof}
We omit here the suffix $W$. First we give a formal argument that does not take into account operator domains. Since  
\[
(D\psi : D\f)_s = (D\f : D\psi)^*_s \ ,
\]
by differentiating at zero the adjoint of both sides in \eqref{KW}, we have by eq. \eqref{w*} that
\ben\label{kk}
2\pi(K_\r - K_0) = \log \Delta_{\xi,\eta} -  \log\Delta_\eta    -\log d(\r) \ .
\een
Let $\bar \r$ be a conjugate charge localised in $W'$, say $\bar\r = j\cdot\r\cdot j$ with $j ={\rm Ad}J$ the adjoint action of the modular conjugation $J$ of $\A(W),\xi$ \cite{GL92}. Since the boost covariance unitary cocycles $u^\r$ and $u^{\bar \r}$ are localised in spacelike separated regions, we have 
$u^{\bar\r\r} = \r(u^{\bar\r})u^\r =  u^{\bar\r}u^\r$ (see \eqref{ur} below), thus the following relation among the selfadjoint boost generators in the vacuum and in the charged representations holds:
\[
 K_{\r} -  K_0 =  K_{\bar\r\r} - K_{\bar\r}   \ .
\]
Let $V$ be a canonical isometry that intertwines the vacuum and the $\bar\r\r$ representations, namely 
\ben\label{V}
VX =\bar\r\r(X)V, \quad X\in\mathfrak A ,
\een
(see \cite{LRob}; if $\r$ is irreducible, $V$ is uniquely determined by \eqref{V} up to a phase). The vector $\eta \equiv V\xi$ gives the state $\psi$ \eqref{psi} on $\A(W)$; indeed $\Phi = V^*\bar\r(\cdot)V$ is the minimal left inverse on $\A(W)$, so
\[
\psi(X) = (\xi,  V^*\bar\r(X)V\xi) = (\eta, \bar\r(X)\eta) = (\eta, X\eta), \quad X\in \A(W)  ,
\]
because $\bar\r$ acts identically on $\A(W)$.   

Moreover $\eta$ is cyclic for $\A(W)$ ($VV^*$ is the Jones projection for the inclusion $\r(\A(W))\subset \A(W)$ \cite{L90}). 

By evaluating both sides of \eqref{kk} on $\eta$, we conclude that
\[
S(\psi|\!|\f) = 2\pi(\eta,  K_{\bar\r}\eta) - \log d(\r)\ .
\]
Indeed,
\begin{multline*}
S(\psi|\!|\f)  = - (\eta, \log\Delta_{\xi,\eta}\eta)  =  (\eta, ( \log\Delta_{\eta} - \log\Delta_{\xi,\eta})\eta) 
= 2\pi(\eta, (K_0 - K_\r )\eta)  - \log d(\r) \\
=  2\pi(\eta, (K_{\bar\r} - K_{\bar\r\r})\eta) - \log d(\r) = 2\pi(\eta,  K_{\bar\r}\eta) - \log d(\r) 
\end{multline*}
because
\[
(\eta, K_{\bar\r\r} \eta) = (V\xi, K_{\bar\r\r} V\xi) = (\xi , K_0 \xi) = 0\ ,
\]
since $VK_0 =  K_{\bar\r\r} V$. 

More rigorously, by \eqref{w*} and \eqref{KW}, setting $u^\r_s \equiv u^\r_{\Lambda_W( s)}$,
we have 
\begin{multline*}
S(\psi|\!|\f)  = -i\frac{\rm d}{{\rm d}s}\psi(w_s)|_{s=0} = i\frac{\rm d}{{\rm d}s}d(\r)^{is}\psi(u^\r_{2\pi s})\big|_{s=0}
= i\frac{\rm d}{{\rm d}s}d(\r)^{is}(\eta, u^\r_{2\pi s}\eta)\big|_{s=0}\\
= i\frac{\rm d}{{\rm d}s}d(\r)^{is}(V\xi, {u^{\bar\r\ *}_{2\pi s}} u^{\bar\r \r}_{2\pi s} V\xi)\big|_{s=0}
= i\frac{\rm d}{{\rm d}s}d(\r)^{is}(\eta, {u^{\bar\r\ *}_{2\pi s}}  \eta)\big|_{s=0}\\
= -i\frac{\rm d}{{\rm d}s}d(\r)^{-is}(\eta, {u^{\bar\r}_{2\pi s}}  \eta)\big|_{s=0}= 2\pi(\eta, K_{\bar\r} \eta) - \log d(\r)
\end{multline*}
because
\ben\label{ur}
u^{\bar\r \r}_s = \r(u^{\bar\r}_s) u^{\r}_s =u^{\bar\r}_s u^{\r}_s
\een
and
\[
u^{\bar\r \r}_s V = V
\]
by the two-variable tensor categorical cocycle property of $u^\r_s$ and \eqref{V}, \cite{L97}. 
\end{proof}
Note that, in the above proof, the isometry $V$ does not depend on $W$, due to eq. \eqref{V}. So  \eqref{lemma} holds true for all wedges $W$ such that $\r$ is localised in $W$ and $\bar\r$ is localised in $W'$. 
\begin{theorem}\label{eta}
Let $\r$ be localised in $\cal S$ and $\eta\in\H$ a vector which is cyclic and gives the state $\psi$ on $\A({\cal S}_1')$. Here $\cal S$ and ${\cal S}_1$ may be double cones, spacelike cones or wedges and are causally disjoint. 

There exists a conjugate charge $\bar\r$, localised in ${\cal S}_1'$, such that
\[
S(\psi_W |\!|\f_W) = 2\pi(\eta,  K_{\bar\r,W}\eta)  -\log d(\r)\ ,
\]
for every wedge $W$ with ${\cal S} \subset W\subset {\cal S}_1'$. 
\end{theorem}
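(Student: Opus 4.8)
\noindent\emph{Proof strategy.}
The plan is to deduce the identity from Lemma~\ref{foot} and the remark following it, which already establish $S(\psi_W|\!|\f_W)=2\pi(\eta,K_{\bar\r,W}\eta)-\log d(\r)$ \emph{simultaneously} for all wedges $W$ with $\r$ localised in $W$ and $\bar\r$ localised in $W'$, provided $\eta$ is the canonical vector $V\xi$ attached to $\bar\r$ by the intertwiner $V$ of \eqref{V}. Since ${\cal S}$ and ${\cal S}_1$ are causally disjoint, ${\cal S}\subset{\cal S}_1'$, so wedges $W$ with ${\cal S}\subset W\subset{\cal S}_1'$ exist; for any such $W$ one has $\r\subset{\cal S}\subset W$ automatically, while a conjugate charge localised in ${\cal S}_1$ lies in $W'$, because $W\subset{\cal S}_1'$ forces ${\cal S}_1\subset W'$. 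Hence everything reduces to producing a conjugate charge localised in ${\cal S}_1$ whose canonical vector is the prescribed $\eta$.

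First I would fix an auxiliary conjugate charge $\bar\r_0$ localised in ${\cal S}_1$, with canonical isometry $V_0$ obeying $V_0X=\bar\r_0\r(X)V_0$, and set $\eta_0\equiv V_0\xi$. As in the proof of Lemma~\ref{foot}, $\Phi=V_0^*\bar\r_0(\cdot)V_0$ is the minimal left inverse of $\r$, so $\psi(X)=\f(\Phi(X))=(\eta_0,\bar\r_0(X)\eta_0)$; since $\bar\r_0$ acts identically on $\A({\cal S}_1')$ this gives $(\eta_0,X\eta_0)=\psi(X)$ for all $X\in\A({\cal S}_1')$, and $\eta_0$ is cyclic for $\A({\cal S}_1')$ since it is already cyclic for the smaller algebra $\A(W)$. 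Thus $\eta_0$ and the given $\eta$ are two cyclic vectors inducing the same state $\psi$ on $\A({\cal S}_1')$.

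The crucial step is to move from $\eta_0$ to $\eta$ while transporting the charge. As $\eta_0$ and $\eta$ are cyclic and induce the same state on $\A({\cal S}_1')$, the densely defined map $X\eta_0\mapsto X\eta$, $X\in\A({\cal S}_1')$, extends to a unitary $u$ in $\A({\cal S}_1')'=\A({\cal S}_1)$ (Haag duality), so $\eta=u\eta_0$ with $u\in\A({\cal S}_1)$ unitary. I would then set $\bar\r\equiv\Ad(u)\cdot\bar\r_0=u\,\bar\r_0(\cdot)\,u^*$ and $V\equiv uV_0$. A one-line check gives $VX=u\bar\r_0\r(X)V_0=u\bar\r_0\r(X)u^*\,uV_0=\bar\r\r(X)V$ and $V\xi=u\eta_0=\eta$, so $V$ is the canonical intertwiner \eqref{V} for $\bar\r$. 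Moreover $\bar\r$ is unitarily equivalent to $\bar\r_0$, hence a conjugate of $\r$ with $d(\bar\r)=d(\r)$; and as $u\in\A({\cal S}_1)$ commutes with $\A({\cal S}_1')$ while $\bar\r_0$ is trivial there, $\bar\r$ acts identically on $\A({\cal S}_1')$, i.e. $\bar\r$ is again localised in ${\cal S}_1$.

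With $\bar\r$ localised in ${\cal S}_1$ and canonical vector $V\xi=\eta$, Lemma~\ref{foot} and its remark now apply verbatim and give $S(\psi_W|\!|\f_W)=2\pi(\eta,K_{\bar\r,W}\eta)-\log d(\r)$ for every wedge $W$ with $\r\subset W$ and $\bar\r\subset W'$, in particular for all $W$ with ${\cal S}\subset W\subset{\cal S}_1'$. The main obstacle I anticipate is the third step: one must check that dressing the auxiliary conjugate charge by the commutant unitary $u$ yields a genuine localised conjugate charge carrying the correct intertwiner and vector $\eta$, and one must justify $\eta=u\eta_0$ with $u\in\A({\cal S}_1)$, which rests on the cyclicity of both vectors for $\A({\cal S}_1')$ together with Haag duality $\A({\cal S}_1')'=\A({\cal S}_1)$.
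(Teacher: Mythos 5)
Your proof is correct and takes essentially the same route as the paper's: fix a reference conjugate charge $\bar\r_0$ localised in ${\cal S}_1$ with canonical vector $\eta_0=V_0\xi$, note that the prescribed $\eta$ equals $u\eta_0$ for a unitary $u\in\A({\cal S}_1')'=\A({\cal S}_1)$, and dress to $\bar\r=u\,\bar\r_0(\cdot)\,u^*$ with intertwiner $V=uV_0$ so that Lemma~\ref{foot} and the remark following it apply. The only difference is cosmetic: you spell out the standard GNS-uniqueness argument (two cyclic vectors inducing the same state on $\A({\cal S}_1')$ are related by a commutant unitary, which lands in $\A({\cal S}_1)$ by Haag duality), a step the paper asserts without proof.
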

\begin{proof}
Choose a conjugate charge $\bar \r$ localised in ${\cal S}_1$. Then the statement holds with $\eta = V\xi$ given in the proof of Lemma \ref{foot}. 
So it is sufficient to show that every vector $\eta'$ giving the state $\psi$ on $\A({\cal S}_1')$ is of this form. Indeed $\eta' = U\eta$ with $U$ a unitary in $\A({\cal S}_1)$, so $\eta_1 = V_1\xi$ with $V_1 = UV$ the isometry that intertwines the identity and $\bar\r_1\r$ as in the proof of the lemma, where $\bar\r_1 = U\bar\r(\cdot) U^*$ is a conjugate of $\r$ localised in ${\cal S}_1$. 
\end{proof}
Due to the above theorem, the analyses in the following for $S(\f_W |\!|\psi_W)$ have analogue corresponding analyses for $S(\psi_W |\!|\f_W)$. 

\subsection{Relative entropy and geometric deformations}\label{egd}
We now study the behaviour of the relative entropy associated with a local von Neumann algebras by varying the reference spacetime region. 
\subsubsection{Constant spatial shifts}
Let $\A$ be a local QFT net on the Minkowski spacetime $\mathbb R^{1+n}$. 
With $W$ the wedge $x_1 > |x_0|$ and ${\bf z}\in W$ a point in $W$, we denote by $W_{\bf z}\subset W$ the
subwedge   $W_{\bf z} \equiv W + {\bf z}$. Set ${\bf u} = (1,1,0\dots,0)$ as above and ${\bf v} = (1,-1,0\dots,0)$ and consider two points ${\bf z_k} = a_k{\bf u} - b_k{\bf v}$, $k = 1,2$, with $a_k,b_k\geq 0$, thus ${\bf z_k}\in W$ and set
\[
O_{\bf z_1, z_2} \equiv W_{\bf z_1} \cap W'_{\bf z_2}
\]
if ${\bf z_2}\in W_{\bf z_1}$, namely $a_2\geq a_1, b_2\geq b_1$. Note that the local von Neumann algebra $\A(O_{\bf z_1, z_2})$ is cyclic on the the vacuum vector by the Reeh-Schlieder theorem if $a_2> a_1, b_2> b_1$. 
\paragraph{Case of one charge.}\label{1charge}

Let's consider the case of one DHR charge $\r$ localised in a region contained in $O_{\bf z_1, z_2}$. 

As above, let $\f$ be the vacuum state on $\mathfrak A$ and $\psi = \f\cdot\Phi$ the charged state, where $\Phi$ is the left inverse of $\r$. Obviously, $\f|_{{\mathfrak A}(W_{\bf z})}$ extends to a faithful normal state on $\A(W_{\bf z})$, that we denote by $\f_{\bf z}$. As $\Phi$ is normal on ${\mathfrak A}(W_{\bf z})$, also $\psi|_{{\mathfrak A}(W_{\bf z})}$ extends to a faithful normal state on $\A(W_{\bf z})$, that we denote by $\psi_{\bf z}$. 

We set 
\[
S({\bf z})  \equiv S(\f_{\bf z} |\!| \psi_{\bf z}), \quad {\bf z}\in W\ .
\]
Let $U_\r$ be the covariance positive energy unitary representation of $\uPoi$ associated with $\r$ \eqref{covar} and denote by 
$H_{\r,+}$ and $H_{\r,-}$ the positive generators of the light-like translation unitary one parameter groups given by $U_\r$ in the $\bf u$ and $\bf v$ direction.  
\begin{proposition}\label{stP}
If $\r$ is localised in $O_{\bf z_1, z_2}$ as above, then
\ben\label{st}
S({\bf z}) = S(\f_{\bf 0} |\!| \psi_{\bf 0}) -  2\pi \big(a(\xi, H_{\r_+} \xi) + b(\xi, H_{\r_-} \xi)\big)\ ,
 \quad {\bf z} \in O_{\bf 0, z_1}\  ,
\een
and
\ben\label{st2}
S({\bf z}) = 0, \quad {\bf z} \in W_{\bf z_2} \ ,
\een
where ${\bf z} = a{\bf u }- b\bf v$. 
\end{proposition}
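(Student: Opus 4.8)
The plan is to reduce Proposition \ref{stP} to the key identity \eqref{sfp}, applied not to the fixed wedge $W$ but to each translated wedge $W_{\bf z}$, and then to control the ${\bf z}$-dependence of the charged boost generator by Poincar\'e covariance. First I would record the geometric bookkeeping. Writing ${\bf z}=a{\bf u}-b{\bf v}$ and using null coordinates $x^{\pm}=x_1\pm x_0$, so that $W_{\bf w}=\{x^+>2\alpha,\ x^->2\beta\}$ for ${\bf w}=\alpha{\bf u}-\beta{\bf v}$, one checks that ${\bf z}\in O_{\bf 0,z_1}=W\cap W'_{\bf z_1}$ forces $0<a<a_1$, $0<b<b_1$, hence $W_{\bf z}\supset W_{\bf z_1}\supset O_{\bf z_1,z_2}$, so the charge $\r$ is still localised inside $W_{\bf z}$ and \eqref{sfp} applies verbatim with $W$ replaced by $W_{\bf z}$:
\[
S({\bf z})=S(\f_{\bf z}\,\|\,\psi_{\bf z})=2\pi(\xi,K_{\r,W_{\bf z}}\xi)+\log d(\r).
\]
Thus the entire ${\bf z}$-dependence is carried by the vacuum expectation of $K_{\r,W_{\bf z}}$, the generator of the charged boost $U_\r(\Lambda_{W_{\bf z}}(\cdot))$.

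Second I would compute $K_{\r,W_{\bf z}}$ by covariance. Since $W_{\bf z}=W+{\bf z}$, the associated boost is the conjugate $\Lambda_{W_{\bf z}}(s)=\tau_{\bf z}\,\Lambda_W(s)\,\tau_{\bf z}^{-1}$, with $\tau_{\bf z}$ the translation by ${\bf z}$; applying the representation $U_\r$ gives $K_{\r,W_{\bf z}}=U_\r(\tau_{\bf z})\,K_{\r,W}\,U_\r(\tau_{\bf z})^{-1}$. The directions ${\bf u},{\bf v}$ are eigenvectors of $\Lambda_W(s)$ with eigenvalues $e^{\pm s}$, so the translation generator along ${\bf z}$ is $aH_{\r,+}-bH_{\r,-}$ and the commutation relations $[K_{\r,W},H_{\r,\pm}]=\mp iH_{\r,\pm}$ hold. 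Consequently the double commutator of $aH_{\r,+}-bH_{\r,-}$ with $K_{\r,W}$ vanishes, the adjoint series truncates at first order, and
\[
K_{\r,W_{\bf z}}=K_{\r,W}-aH_{\r,+}-bH_{\r,-}.
\]
Taking the vacuum expectation and subtracting the value at ${\bf z}={\bf 0}$, where $W_{\bf 0}=W$ and $S({\bf 0})=S(\f_{\bf 0}\|\psi_{\bf 0})=2\pi(\xi,K_{\r,W}\xi)+\log d(\r)$, yields \eqref{st}. As a sanity check, the positivity of $H_{\r,\pm}$ together with $a,b\geq 0$ makes $S$ decrease as ${\bf z}$ moves into $W$, consistent with the monotonicity of relative entropy under $W_{\bf z}\subset W$.

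Third, for \eqref{st2} I would argue purely geometrically. If ${\bf z}\in W_{\bf z_2}$ then $a>a_2$, $b>b_2$, so $W_{\bf z}\subset W_{\bf z_2}=(W'_{\bf z_2})'$, which is spacelike to $W'_{\bf z_2}\supset O_{\bf z_1,z_2}$, the localisation region of $\r$. Hence $\r$ acts identically on $\A(W_{\bf z})$, so that $\psi_{\bf z}=\f\cdot\Phi$ restricts to $\f_{\bf z}$ on $\A(W_{\bf z})$, and therefore $S({\bf z})=S(\f_{\bf z}\|\f_{\bf z})=0$.

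The main obstacle is the covariance computation of the second step at the level of unbounded generators: the truncation of the adjoint series is the Lie-algebraic shortcut, but what is actually needed is the clean unitary identity $K_{\r,W_{\bf z}}=U_\r(\tau_{\bf z})K_{\r,W}U_\r(\tau_{\bf z})^{-1}$ and the fact that conjugation by the null translation shifts $K_{\r,W}$ exactly by $-aH_{\r,+}-bH_{\r,-}$. One must also check that the vacuum expectations $(\xi,H_{\r,\pm}\xi)$ are finite, which they are: being positive quadratic forms, they are bounded above by $S({\bf 0})/2\pi<\infty$ in the range where \eqref{sfp} applies. The remaining inclusion bookkeeping (that $\r\subset W_{\bf z}$ on $O_{\bf 0,z_1}$, and that $\r$ is spacelike to $W_{\bf z}$ on $W_{\bf z_2}$) is routine once the null coordinates are set up, but it is exactly what delimits the domains of validity of the two formulas.
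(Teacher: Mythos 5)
Your proof is correct, but it takes a genuinely different route from the paper's printed proof of Proposition \ref{stP}. The paper writes $S({\bf z})-S({\bf 0})$ as two successive null-shift increments, $\big(S(a{\bf u}-b{\bf v})-S(a{\bf u})\big)+\big(S(a{\bf u})-S({\bf 0})\big)$, and evaluates each one by Theorem \ref{main} (the constant-$f$ case of the null-deformation theorem), using that $\A(W_{a{\bf u}})\subset\A(W_{a{\bf u}-b{\bf v}})$ is $+$hsm with translation Hamiltonian $H_{\r_-}$; that argument rests on the extended half-sided modular machinery of Section \ref{Wies} (Theorem \ref{rw}, Corollary \ref{deltaS}), i.e.\ on the relative modular identity $\log\Delta_{\eta,\xi,\M_t}-\log\Delta_{\eta,\xi}=2\pi t H$ with $H$ positive. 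You instead never leave \eqref{KW1}/\eqref{sfp}: you apply the entropy formula wedge by wedge and track the charged boost generator under translations, $K_{\r,W_{\bf z}}=U_\r(\tau_{\bf z})K_{\r,W}U_\r(\tau_{\bf z})^{*}=K_{\r,W}-aH_{\r,+}-bH_{\r,-}$, the truncation of the adjoint series being rigorous because it is exactly the ``$ax+b$'' Lie algebra relation \eqref{cr} integrated in the unitary representation $U_\r$ (cf.\ \eqref{shift}), with $[H_{\r,+},H_{\r,-}]=0$ killing the higher terms. This is in fact the strategy the paper announces in prose right after \eqref{sfp} (``comparing $K_{\r,W}$ and $K_{\r,\bar W}$ by the Poincar\'e unitary action $U_\r$'') and the mechanism it actually uses in the conformal case (Theorem \ref{main2}), so it is fully legitimate; it is more elementary and self-contained, handling the spacelike shift in one stroke without hsm inclusions. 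What the paper's route buys in exchange is operator-level positivity of the deformation generator and applicability beyond this proposition: Theorem \ref{rw}/Corollary \ref{deltaS} cover non-constant null deformations $f$ and states satisfying only the cocycle localisation \eqref{ws}, which is what the ANEC analysis of Section \ref{null} requires. Two minor points: your finiteness argument for $(\xi,H_{\r,\pm}\xi)$ via the bound $S({\bf 0})/2\pi$ is slightly circular as stated, since it invokes the identity being proved; it is cleaner to regard these as quadratic forms of positive selfadjoint operators, with finiteness following a posteriori from the established identity and $S({\bf 0})<\infty$. Also, your sign bookkeeping agrees with \eqref{st}, whereas the paper's own proof text has a sign slip ($S(a{\bf u})-S(0)=2\pi a(\xi,H_{\r_+}\xi)$ should carry a minus to be consistent with \eqref{st}).
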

\begin{proof}
We write
\[
S({\bf z}) - S(0) = \big(S(a{\bf u} - b{\bf v}) - S(a{\bf u})\big) + \big(S(a{\bf u}) - S(0)\big) \ .
\]
Now $S(a{\bf u}) - S(0) = 2\pi a (\xi, H_{\r_+} \xi)$ by Theorem \ref{main}. The inclusion $\A(W_{a{\bf u}})\subset \A(W_{a{\bf u} - b{\bf v}})$ is +half-sided modular w.r.t. $\xi$ and the associated translation Hamiltonian is $H_{\r_-}$ because it is the -lightlike translation Hamiltonian in the representation $\r$, thus the equation \eqref{st} follows. 
\end{proof}
We make explicit the space-translation case where $\bf z_1, z_2$ belong to the time-zero hyperplane.  Thus $a = b$, ${\bf z_1} = (0, R,0\dots,0)$, ${\bf z_2} = (0, \tilde R,0\dots,0)$ and we set  $W_t = W_{(0,t,0\dots,0)}$, $t\geq 0$. 

Setting in \eqref{st} $a= b =   \frac{t}{\sqrt2}$ we have
\[
S(t) \equiv S(\f_{t} |\!| \psi_{t}) = S(\f_{\bf 0} |\!| \psi_{\bf 0}) -  2\pi t(\xi, H_{\r} \xi) , \quad t \leq R,
\]
\[
S(t) = 0, \quad t\geq \tilde R,
\]
with $\f_t , \psi_t$ the restrictions of $\f,\psi$ to $\A(W_t)$. 
Here $H_\r = \frac{1}{\sqrt2}(H_{\r_+} +  H_{\r_-})$ is  the Hamiltonian, i.e. the generator of the time translation unitary one parameter group in the representation $\r$, and $t$ may be also negative as the reference wedge $W$ could be replaced with $W_{t'}$ for any $t'\leq 0$. 

Now, as shown in \cite{L97}, we have
\ben\label{S0}
S(0) = 2\pi (\xi, K_\r\,\xi) + \log d(\r)\ ,
\een
thus
\ben\label{st1}
S(t)  = 2\pi (E_{\rm loc} -   tE) + \log d(\r)  , \quad t \leq R,
\een
where $E_{\rm loc} \equiv (\xi, K_\r\,\xi)$ is the is mean vacuum energy for the Rindler observer and $E$ is mean vacuum energy in the rest frame.
Notice that the derivative 
\[
S'(t) = -2\pi E, \quad E \equiv (\xi, H_{\r} \xi) ,
\]
is independent of the spatial direction as it is related to $E$ and not to $E_{\rm loc}$.
Indeed, the asymptotic mean entropy in any direction is
\ben\label{end}
\lim_{t\to+\infty}\frac{S(-t) - S(t)}{2t} = \pi E \ ,
\een
which is related to Bekenstein's bound (cf. \eqref{end2}).  

\begin{remark}
By our arguments, one can also study the dependence of $S(\f_{gW} |\!| \psi_{gW})$ as $g\in\Poi$ varies so that $gW$ contains a fixed spacelike cone $\cal S$ with of $\r$ localised in $\cal S$. For example, one may rotate $W$ by a suitably small angle. 
\end{remark}
\begin{remark}\label{rem}
In Proposition \ref{stP}, one may take in particular $\r$ to be the inner automorphism implemented by a unitary $U\in\A(O_{{\bf z_1} , {\bf z_2}})$.  Then 
\[
\psi = \f_U, \quad \f_U(X) = \f(U^*XU), \quad X \in \mathfrak A \ .
\] 
In this case $\psi$ belongs to the same folium of the vacuum state, indeed $\psi$ is the expectation value on the vector $\eta = U\xi$ in the vacuum Hilbert space, $\psi(X) = (\eta, X\eta)$. In other words, the representation $\r$ of $\mathfrak A$ is unitarily equivalent to the vacuum representation, so it has zero charge. 

A state $\psi = (\eta, \cdot\, \eta)$ on $\mathfrak A$ is of the form $\f_U$ with $U$ a unitary in $\mathfrak A(O)$, $O\in\K$, iff $\psi|_{\mathfrak A(O')} = \f|_{\mathfrak A(O')}$ and $\eta$ cyclic for $\A(O)$ (if $\eta$ is not cyclic then $U$ is an isometry). Since, in this case, $K_\r = UKU^*$ and $H_\r = UHU^*$, formula \eqref{st1} reads
\[
S(t) = 2\pi (\eta, K\eta) -   t(\eta, H\eta)  , \quad t \leq R,
\]
with $H$ and $K$ the Hamiltonian and the Rindler Hamiltonian in the vacuum sector.

This class of localised states is large. If $\psi$ is any faithful normal state of $\A(W)$, given $\e >0$ there exists a unitary $U\in \A(O)$ for some double cone $O\subset W$ such that
\[
||(\psi - \f_U )|_{\A(W)}||<\e \ .
\]
This follows because $\A(W)$ is a factor of type $III_1$, see \cite{L82}, 
these unitaries form a dense set in the unitary group of $\A(W)$ and, in a $III_1$-factor, by Connes-St\o rmer's theorem \cite{CS}, the orbit of any faithful normal state under inner automorphisms is norm dense in the set of all faithful normal states. 
\end{remark}
\paragraph{Case of multiple charges.}

We now describe the space translation case in presence of multiple charges when the localisation regions are the casual envelop of time-zero regions, the discussion in a more general case as in Section \ref{1charge} follows the same lines.  

Suppose we have charges $\r_1 ,\r_2,\dots \r_\nu$ with $\r_k$ localised in 
$O_{R_i , \tilde R_i}$, where $R_{i+1} > \tilde R_i > R_i$ and $O_{R_i , \tilde R_i}= W_{R_i}\cap W'_{\tilde R_i}$. As the $\r_i$'s are localised in spacelike separated regions, they mutually commute. Let 
\[
\r = \r_1\r_{2}\cdots\r_\nu 
\]
be the composition of all the charges $\r_k$. Let $\Phi$, $\Phi_k $ be the (minimal) left inverses of $\r$, $\r_k$ and $\psi = \f\cdot\Phi$,  $\psi_k = \f\cdot\Phi_k$ that we consider as states of $\A(W)$.  
\begin{proposition}\label{additivity}
\[
S(\f|\!| \psi) = \sum_{k=1}^\nu S(\f_k|\!| \psi) \ .
\]
\end{proposition}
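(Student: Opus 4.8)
Here is how I would prove the identity as stated.

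The plan is to reduce the additivity to two multiplicative features of the composite charge $\r=\r_1\cdots\r_\nu$: the factorisation of the Connes cocycle and the multiplicativity of the dimension, both consequences of the fact that the $\r_k$ live in mutually spacelike shells. Throughout I work on $\A(W)$, write $w_s\equiv(D\psi:D\f)_s$ and $w^{(k)}_s\equiv(D\psi_k:D\f)_s$, and recall from \eqref{KW} that $w^{(k)}_s=d(\r_k)^{-is}\,u^{\r_k}_{\Lambda_W(-2\pi s)}$. I read $\f_k$ as the state $\f\cdot\Phi_{\hat k}$ carrying all charges but $\r_k$, where $\Phi_{\hat k}$ is the minimal left inverse of $\prod_{j\neq k}\r_j$.

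First I would prove the factorisation
\[
w_s=\prod_{k=1}^\nu w^{(k)}_s,\qquad s\ \text{near}\ 0,
\]
with mutually commuting factors. Iterating the categorical cocycle identity (cf. \eqref{ur}) for $\r=\r_1\cdots\r_\nu$ gives $u^\r_g=\prod_k u^{\r_k}_g$: each $\r_j$ fixes the covariance cocycle $u^{\r_k}_g$ of a spacelike separated charge, so the correction terms $\r_j(u^{\r_k}_g)$ collapse to $u^{\r_k}_g$. Together with $d(\r)=\prod_k d(\r_k)$ this yields the displayed product. The factors commute because, for $s$ close to $0$, $u^{\r_k}_{\Lambda_W(s)}$ is localised by \eqref{loc} in a double cone obtained from $O_{R_k,\tilde R_k}$ by a small boost, which stays spacelike to the remaining shells thanks to the gaps $R_{k+1}>\tilde R_k$.

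Next I would differentiate at $s=0$. Since each $w^{(k)}_0=1$ and the factors commute, Leibniz' rule gives $\frac{d}{ds}w_s|_{0}=\sum_k\frac{d}{ds}w^{(k)}_s|_{0}$, whence, recalling $S(\f|\!|\psi)=i\frac{d}{ds}\f(w_s)|_{0}$,
\[
S(\f|\!|\psi)=i\frac{d}{ds}\f(w_s)\Big|_{0}=\sum_{k=1}^\nu i\frac{d}{ds}\f(w^{(k)}_s)\Big|_{0}=\sum_{k=1}^\nu S(\f|\!|\psi_k).
\]
It then remains to match the $k$-th summand with the stated first-slot term $S(\f_k|\!|\psi)$. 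Using the commuting factorisation once more, and $(D\f:D\f_k)_s=\big((D\f_k:D\f)_s\big)^{*}=\big(\prod_{j\neq k}w^{(j)}_s\big)^{*}$, one computes $(D\psi:D\f_k)_s=w_s\big(\prod_{j\neq k}w^{(j)}_s\big)^{*}=w^{(k)}_s$; and since $w^{(k)}_s$ is built from $u^{\r_k}$, hence localised spacelike to the charges $\r_j$ with $j\neq k$, the left inverse $\Phi_{\hat k}$ fixes it, so $\f_k(w^{(k)}_s)=\f(\Phi_{\hat k}(w^{(k)}_s))=\f(w^{(k)}_s)$ for $s$ near $0$. Differentiating gives $S(\f_k|\!|\psi)=S(\f|\!|\psi_k)$, and summing over $k$ closes the argument.

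The main obstacle is analytic rather than algebraic: the cocycles $w^{(k)}_s$ are generated by the unbounded boost Hamiltonians $K_{\r_k}$, so the Leibniz differentiation, the commutation of the factors, and the identity $\Phi_{\hat k}(w^{(k)}_s)=w^{(k)}_s$ must be justified on a common core and only for $s$ in a neighbourhood of $0$ where the boosted shell $\Lambda_W(s)O_{R_k,\tilde R_k}$ is still spacelike to the other shells. One must also ensure that $\f(w_s)$ and each $\f(w^{(k)}_s)$ are genuinely differentiable at $0$ (finiteness of the relative entropies), which here follows from \eqref{sfp} since each $(\xi,K_{\r_k}\xi)$ is finite; granting this, the spacelike localisation makes every step reduce to the single-charge formula already established.
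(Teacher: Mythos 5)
Your proof is correct, and its core coincides with the paper's own argument: the same factorisation $u^\r_g=\prod_k u^{\r_k}_g$ obtained by iterating the categorical cocycle identity \eqref{ur} and using that each $\r_j$ acts trivially on the cocycle of a spacelike-separated charge, the same multiplicativity $d(\r)=\prod_k d(\r_k)$ of the minimal dimension, and the same differentiation of $\f(w_s)$ at $s=0$, which the paper carries out explicitly for $\nu=2$. Where you genuinely go beyond the paper is in taking the printed first-slot notation $S(\f_k|\!|\psi)$ seriously: the paper never defines $\f_k$ and its proof in fact establishes $S(\f|\!|\psi)=\sum_k S(\f|\!|\psi_k)$, so the statement is best read with $\psi_k$ in the second slot; you instead interpret $\f_k=\f\cdot\Phi_{\hat k}$ (all charges but $\r_k$) and supply a correct additional step — the Connes chain rule giving $(D\psi:D\f_k)_s=w^{(k)}_s$ for $s$ near $0$, combined with the left-inverse invariance $\Phi_{\hat k}(w^{(k)}_s)=w^{(k)}_s$ from spacelike localisation — to conclude $S(\f_k|\!|\psi)=S(\f|\!|\psi_k)$, so your argument covers both readings of the proposition. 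Your closing caveats (differentiability of $\f(w^{(k)}_s)$ at $0$ via finiteness of $(\xi,K_{\r_k}\xi)$ through \eqref{sfp}, and the restriction of the factorisation and commutation to $s$ in a neighbourhood of $0$ where the boosted shells remain mutually spacelike) match, and are if anything slightly more careful than, the level of rigour in the paper's proof.
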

\begin{proof}
 Let $u^{\r}$ be the covariance unitary cocycle for the boost action associated with $\r$, namely
$u^{\r}_s = U_\r\big(\Lambda_W( s)\big)U\big(\Lambda_W(-s)\big)$, and similarly for $u^{\r_k}$. 

Let $\eta$ and $\eta_k$ be cyclic and separating vectors on $\H$ giving the states $\psi$ and $\psi_k$ on $\A(W)$. 
We have 
\begin{align}
u^{\r}_{-2\pi s} &=  d(\r)^{is}(D\psi : D\f)_{s} = d(\r)^{is}\Delta_{\eta,\xi}^{is}\Delta_\xi^{-is}\ , \\
u^{\r_k}_{-2\pi s} &=  d(\r_k)^{is}(D\psi_k : D\f)_{s} = d(\r_k)^{is}\Delta_{\eta_k,\xi}^{is}\Delta_\xi^{-is}\ .
\end{align}
For simplicity set now $\nu=2$.
 For small $s,s'$,  $u^{\r_1}_s$ and $u^{\r_2}_{s'}$ are localised in spacelike separated regions, in particular they each other commute. By \cite[Prop. 1.4]{L97} we have
\[
u^{\r}_s  = u^{\r_1\r_2}_s = \r_1(u^{\r_2}_s) u^{\r_1}_s = u^{\r_2}_s u^{\r_1}_s
\]
because $\r_1$ acts identically on $u^{\r_2}_s$ by spacelike separation for small $s$. Thus
\[
\frac{\rm d}{{\rm d}s} u^{\r}_s \big|_{s=0} = \frac{\rm d}{{\rm d}s} u^{\r_1}_s u^{\r_2}_s\big|_{s=0} =
\frac{\rm d}{{\rm d}s} u^{\r_1}_s \big|_{s=0} + \frac{\rm d}{{\rm d}s} u^{\r_2}_s \big|_{s=0} \ .
\]
Therefore
\begin{align*}
S(\f |\!| \psi) &= i\frac{\rm d}{{\rm d}s}(\xi, (D\psi : D\f)_s\,\xi)\big|_{s=0} \\
&= i\frac{\rm d}{{\rm d}s}(\xi, u^\r_{-2\pi s}\xi)\big|_{s=0} + \log d(\r)\\
&= i\frac{\rm d}{{\rm d}s}(\xi, u^{\r_1}_{-2\pi s}\xi)|_{s=0} + i\frac{\rm d}{{\rm d}s}(\xi, u^{\r_2}_{-2\pi s}\xi)\big|_{s=0}+ \log d(\r)\\
&= i\frac{\rm d}{{\rm d}s}(\xi, u^{\r_1}_{-2\pi s}\xi)|_{s=0} + i\frac{\rm d}{{\rm d}s}(\xi, u^{\r_2}_{-2\pi s}\xi)\big|_{s=0}+ \log  d(\r_1 ) + \log d(\r_2)\\
&= i\frac{\rm d}{{\rm d}s}(\xi, (D\psi_1 : D\f)_{s}\,\xi)|_{s=0} + i\frac{\rm d}{{\rm d}s}(\xi, (D\psi_2 : D\f)_{s}\,\xi)\big|_{s=0}\\
&= S(\f |\!| \psi_1) + S(\f |\!| \psi_2)
\end{align*}
by the multiplicativity of the minimal dimension. 
\end{proof}
Since $\r_k$ acts identically on $\A(O_t)$ if $t \leq  R_k$, $O_t \equiv  W'_t\cap W$,
we have
\[
\psi_1 \big| _{\A(W_t)} = \psi_k \big| _{\A(W_t)}, \quad t\geq \tilde R_k \ .
\]
By Proposition \ref{additivity} we have
\[
S(t) \equiv S(\f_t |\!| \psi_t) =  \sum_{j=k}^n S({\f_k}_t |\!| \psi_t), \quad \tilde R_{k-1}\leq t \leq R_k \ .
\]
By the discussion in the previous  paragraph, we then have (see Fig. 1):
\begin{figure}
\centering
\begin{tikzpicture} 
\draw [magenta,very thick,dotted](0.5,3) to[out=-15,in=170] (1.2,2.5);
\draw [magenta,very thick,dotted](2,2.5) to[out=-15,in=170] (3.3,1.2);
\draw [magenta,very thick,dotted](5.1,1.2) to[out=-15,in=170] (7,0.7);
\draw [magenta,very thick,dotted](7.7,0.7) to[out=-15,in=170] (8.5,0);
\draw [->] (-0.3,0) -- (9,0); 
\draw [->] (0,-0.3) -- (0,3.3); 
\draw[magenta,very thick] [-] (0,3) -- (0.5,3); 
\draw[magenta,very thick] [-] (1.2,2.5) -- (2,2.5); 
\draw[very  thin] [-] (0,2.5) -- (1.2,2.5); 
\draw[magenta,very thick] [-] (3.3,1.2) -- (5.1,1.2); 
\draw[very  thin] [-] (0,1.2) -- (3.3,1.2); 
\draw[magenta,very thick] [-] (7,0.7) -- (7.7,0.7); 
\draw[very  thin] [-] (0,0.7) -- (7.7,0.7); 
\draw[magenta,very thick] [-] (8.5,0) -- (9,0); 
\draw[very thick] [-] (0.5,0) -- (1.2,0); 
\draw[very thick] [-] (2,0) -- (3.3,0); 
\draw[very thick] [-] (5.1,0) -- (7,0); 
\draw[very thick] [-] (7.7,0) -- (8.5,0); 
\draw (-0.5,2.76) node {\small $2\pi E_1$};
\draw (-0.5,1.87) node {\small $2\pi E_2$};
\draw (-0.5,0.96) node {\small $2\pi E_3$};
\draw (-0.5,0.38) node {\small $2\pi E_4$};
\draw (0.9,0.2) node {\small $\r_1$};
\draw (2.7,0.2) node {\small $\r_2$};
\draw (6.05,0.2) node {\small $\r_3$};
\draw (8,0.2) node {\small $\r_4$};
\draw[magenta] (4,1.8) node {\small $S'(t)$};
\end{tikzpicture}
\caption{\footnotesize Entropy density plot in a four charge case (in red).  The jump of $S'(t)$ over the support of the charge $\r_k$ (thick segment) is equal to $2\pi$ times the mean energy of $\r_k$. In general, $S''(t)$ is only known in mean on the charge support.}
\end{figure}
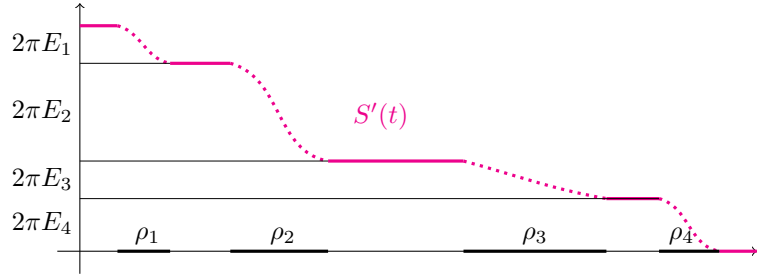
\begin{theorem}
With $E_k \equiv (\xi, H_{\r_k}\xi)$ then mean vacuum energy of the charge $\r_k$, we have
\[
S(t) =   S(0) -2\pi t\sum_{j=k}^\nu E_k, \quad \tilde R_{k-1}\leq t \leq R_k \ ,
\]
$S(t) = S(0) -2\pi t\sum_{j=1}^\nu E_k$ if $t \leq R_1$ and $S(t) = 0$ if $t\geq \tilde R_\nu$. Here $S(0)$ is given by \eqref{S0}. 
\end{theorem}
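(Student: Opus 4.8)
The plan is to reduce the multi-charge statement to the single-charge formula \eqref{st1} by combining the additivity of Proposition \ref{additivity} with a geometric bookkeeping of which charges are seen by the algebra $\A(W_t)$. First I would fix $t$ and restrict all states to $\A(W_t)$; since each $\r_j$ is localised in $O_{R_j,\tilde R_j}=W_{R_j}\cap W'_{\tilde R_j}$, the same cocycle factorisation used in Proposition \ref{additivity}, now for the boost $\Lambda_{W_t}$ fixing $W_t$, yields
\[
S(t)=S(\f_t |\!| \psi_t)=\sum_{j=1}^{\nu} S(\f_t |\!| (\psi_j)_t)\ ,
\]
so that the multi-charge entropy splits into single-charge contributions at every value of $t$.

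The heart of the argument is then purely geometric. I would record the two inclusions $O_{R_j,\tilde R_j}\subset W_t$ whenever $t\leq R_j$ (since $O_{R_j,\tilde R_j}\subset W_{R_j}\subset W_t$) and $O_{R_j,\tilde R_j}\subset W'_t$ whenever $t\geq \tilde R_j$ (since $O_{R_j,\tilde R_j}\subset W'_{\tilde R_j}\subset W'_t$). In the first case $\r_j$ is localised inside $W_t$ and \eqref{st1} applies verbatim to this single charge,
\[
S(\f_t |\!| (\psi_j)_t)=2\pi\big((\xi,K_{\r_j}\xi)-t\,E_j\big)+\log d(\r_j)\ ,\qquad t\leq R_j\ ,
\]
with $E_j=(\xi,H_{\r_j}\xi)$ and both energies taken with respect to the fixed reference wedge $W=W_0$. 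In the second case $\r_j$ acts identically on $\A(W_t)$, so $(\psi_j)_t=\f_t$ and the corresponding term vanishes.

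For $t$ in a gap interval $\tilde R_{k-1}\leq t\leq R_k$ the ordering $R_1<\tilde R_1<R_2<\cdots$ guarantees that no charge straddles the edge of $W_t$: for $j<k$ one has $t\geq\tilde R_{k-1}\geq\tilde R_j$, so those charges contribute zero, whereas for $j\geq k$ one has $t\leq R_k\leq R_j$, so those charges contribute through the single-charge formula. Summing gives
\[
S(t)=2\pi\sum_{j=k}^{\nu}(\xi,K_{\r_j}\xi)-2\pi t\sum_{j=k}^{\nu}E_j+\sum_{j=k}^{\nu}\log d(\r_j)\ ,
\]
which is the asserted linear law of slope $-2\pi\sum_{j\geq k}E_j$; its constant term is the partial sum $\sum_{j\geq k}\big(2\pi(\xi,K_{\r_j}\xi)+\log d(\r_j)\big)$, which for $k=1$ collapses to $S(0)$ as in \eqref{S0}. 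The two extreme cases then follow immediately: for $t\leq R_1$ every charge lies inside $W_t$ and one obtains $S(t)=S(0)-2\pi t\sum_{j=1}^{\nu}E_j$, while for $t\geq\tilde R_\nu$ every charge is spacelike to $W_t$ and $S(t)=0$.

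The main obstacle I anticipate is justifying that the additivity of Proposition \ref{additivity} genuinely transfers from $\A(W)$ to each translated algebra $\A(W_t)$: the covariance cocycle entering that proof is attached to the boost $\Lambda_W$, so one must check that the factorisation $u^{\r_i\r_j}_s=u^{\r_i}_s u^{\r_j}_s$ and the differentiation at $s=0$ persist when $W$ is replaced by $W_t$ and only the charges actually localised inside $W_t$ are retained. Once this is settled, the remaining work is the elementary geometric inclusions above together with a direct appeal to \eqref{st1}, and the identification of the slopes pictured in Fig. 1.
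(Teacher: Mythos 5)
Your proof is correct and follows essentially the paper's own route: the paper likewise combines Proposition \ref{additivity} (applied at the shifted wedge $W_t$), the observation that $(\psi_j)_t=\f_t$ once $t\geq\tilde R_j$, and the single-charge formula \eqref{st1}; the obstacle you flag is harmless, since $W_t$ is again a wedge containing the localisation regions of all surviving charges and the proof of Proposition \ref{additivity} uses only the cocycle localisation and spacelike commutativity for small $s$, which hold verbatim for the boosts $\Lambda_{W_t}$. One remark: your intercept $\sum_{j\geq k}\bigl(2\pi(\xi,K_{\r_j}\xi)+\log d(\r_j)\bigr)$ is indeed what the argument yields on $[\tilde R_{k-1},R_k]$, and it equals the printed constant $S(0)$ of \eqref{S0} only for $k=1$ (reading the theorem's $S(0)$ literally for $k\geq 2$ would even contradict monotonicity of $S(t)$ as $\tilde R_k\to R_k$), so the substantive content on the inner gaps is the slope $-2\pi\sum_{j\geq k}E_j$ and the jumps \eqref{Av} of $S'$, exactly as in Fig.~1 — your bookkeeping actually sharpens the statement.
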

Thus the  second derivative satisfies
\[
S''(t) = 0, \quad \tilde R_{k-1}\leq t \leq R_k \ .
\]
In the intervals  $(R_k, \tilde R_k)$ the relative entropy $S(t)$ monotonically decreases and $S''$ is positive in the average, namely
\ben\label{Av}
\int_{R_k}^{\tilde R_k}  S''(t) {\rm d} t = 2\pi E_k >0\ .
\een
\paragraph{Conformal case.}
We now assume $\A$ to be conformally covariant, namely the unitary representation $U$ of $\Poi$ extends to a unitary representation on $\H$ of the conformal group, still denoted by $U$, and $\A$ is $U$-covariant. In particular
\[
U(\delta_s)\A(O_1)U(\delta_s)^* = \A(O_{ r}), \quad O\in\K,\ r = e^s\ ,
\]
where $O_r\in\K$ is the causal envelope of the time-zero sphere with center at the origin and radius $r>0$. 

Let ${r_0}>0$ and $\r$ be a (finite index) DHR charge localised in a $O_{r_0}$; we assume that $\r$ is conformally covariant and denote by $U_\r$ the associated covariance unitary representation of the universal cover of the conformal group on $\H$. With $\f$ the vacuum state and $\psi =\f\cdot\Phi$ the charged state as in Sect. \ref{charges}, we set $\f_r = \f|_{\A(O_r)}$, $\psi_r = \psi|_{\A(O_r)}$ and
\[
S(r) = S(\f_r |\!| \psi_r) \ .
\]
Choose $r' > {r_0}$ and let $\eta\in\H$ be a cyclic and separating vector giving the state $\f_{r'}$, thus $\f_r$ for any $0<r\leq r'$. We have
\[
S(r) = -(\xi , \log\Delta_{\eta,\xi, r}\xi)\ ,
\]
with $\Delta_{\eta,\xi,r}$ the relative modular operator with respect to the von Neumann algebra $\A(O_r)$. 

Let $\Lambda_{O_r}$ be the one parameter group of special conformal transformation preserving $O_r$ that is conjugate to the boost one-parameter group $\Lambda_W$, thus \cite{HL}
\[
\Delta_{\xi,r}^{-is} = U\big(\Lambda_{O_r}(2\pi s)\big)\ .
\]
We have 
\ben\label{dk}
 \log\Delta_{\eta,\xi, r} = -2\pi K_{\r, r} - \log d(\r) \ ,
\een
where  $K_{\r, r}$ is the selfadjoint generator of the one-parameter unitary group $U_\r(\Lambda_{O_r}(\cdot))$ \cite{L97}. 

Now, the subgroup of the universal cover of the conformal group generated by time translations, dilations and ray inversion map is  naturally isomorphic to the universal cover $\overline{SL(2,\mathbb R)}$ of $SL(2,\mathbb R)$. With this identification, the lift to $\overline{SL(2,\mathbb R)}$ of the one parameter subgroup $\g$ of $SL(2,\mathbb R)$ \eqref{gs}, still denoted by $\g$, satisfies
\[
\g_s = \Lambda_{O_1}(s)\ ,
\] 
see \cite{BGL}. 
By the Lie algebra relation \eqref{att'}, we then have
\[
K_{\r, 1} = \frac12({H_\r - H'_\r})\ ,
\]
where $H_\r$ is the selfadjoint generator of the time-translation group and $H'_\r$ is conjugate to $H_\r$ by the ray inversion unitary.
Thus, by eq. \eqref{dil}, we have
\[
K_{\r, r} = U_\r(\delta_s)K_{\r, 1}U_\r(\delta_s)^* = \frac12({e^s H_\r - e^{-s}H'_\r})\ , \quad r = e^s\ .
\]  
If $r \geq {r_0}$, it then follows by \eqref{dk} that
\ben\label{dkd}
\log\Delta_{\eta,\xi, r} = -2\pi K_{\r, r} - d(\r) = -\pi({e^s H_\r - e^{-s}H'_\r}) - \log d(\r) \ .
\een
\begin{theorem}
\label{main2} 
If $r\geq {r_0}$, we have
\ben\label{mainf}
S(r) = \pi\Big({r E - \frac1r E'}\Big) + \log d(\r)\ ,
\een
with $E = (\xi, H_\r \xi)$, $E' = (\xi, H'_\r \xi)$ the mean vacuum energies corresponding to the Hamiltonians $H_\r$ and $H'_\r$. 
\end{theorem}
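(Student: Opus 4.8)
The plan is to read off \eqref{mainf} directly from the operator identity \eqref{dkd} by taking its vacuum expectation value. By construction $S(r) = S(\f_r |\!| \psi_r) = -(\xi,\log\Delta_{\eta,\xi, r}\,\xi)$, so everything reduces to evaluating the quadratic form of $\log\Delta_{\eta,\xi, r}$ on the normalised vacuum vector $\xi$. Since the geometric and categorical input is already packaged in \eqref{dk} and in the conformal Lie-algebra computation giving $K_{\r, r} = \tfrac12\big(e^{s}H_\r - e^{-s}H'_\r\big)$ with $r=e^{s}$, no new structural ingredient is required.

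First I would substitute \eqref{dkd}, namely
\[
\log\Delta_{\eta,\xi, r} = -\pi\big(e^{s}H_\r - e^{-s}H'_\r\big) - \log d(\r),\qquad r=e^{s},
\]
into the definition of $S(r)$ and distribute the expectation over the two generators, obtaining
\[
S(r) = \pi\,e^{s}(\xi,H_\r\,\xi) - \pi\,e^{-s}(\xi,H'_\r\,\xi) + \log d(\r)\,\|\xi\|^{2}.
\]
Using $\|\xi\|=1$ and $e^{s}=r$, this is precisely $S(r)=\pi\big(rE-\tfrac1r E'\big)+\log d(\r)$ with $E=(\xi,H_\r\,\xi)$ and $E'=(\xi,H'_\r\,\xi)$, as claimed.

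The point that requires care --- and is the only real content beyond \eqref{dkd} --- is the legitimacy of this termwise evaluation. Here $\log\Delta_{\eta,\xi, r}$ is written as a difference of two unbounded positive operators (both $H_\r$ and $H'_\r$ are positive by the positive-energy property of $U_\r$), so one must rule out any $\infty-\infty$ ambiguity. I would check that $\xi$ lies in the form domains of $H_\r$ and of $H'_\r$ separately, so that $E$ and $E'$ are a priori finite, and that $\xi$ lies in the form domain of $\log\Delta_{\eta,\xi, r}$, which is exactly the finiteness of the relative entropy $S(r)$. Granting these domain facts --- guaranteed by positivity of energy in the charged representation and by finiteness of the local energy of a finite-index charge --- the quadratic form of the difference splits as the difference of the forms, and the computation above becomes rigorous. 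I expect this form-domain bookkeeping, rather than any further algebraic manipulation, to be the main obstacle.
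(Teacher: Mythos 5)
Your proposal is correct and takes essentially the same route as the paper, whose entire proof consists of taking the vacuum expectation value of \eqref{dkd}; your form-domain discussion (finiteness of $E$, $E'$ and of $S(r)$, ruling out an $\infty-\infty$ ambiguity when splitting the quadratic form) is a sensible refinement of a point the paper leaves implicit. The one small step the paper records that you omit is that the identity \eqref{dkd}, hence \eqref{mainf}, is first valid only for $r_0\leq r\leq r'$ (the vector $\eta$ represents $\psi_r$ only for $r\leq r'$) and is then extended to all $r\geq r_0$ because $r'>r_0$ is arbitrary.
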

\begin{proof}
Taking expectation values on $\xi$ in formula \eqref{dkd} we get
\[
(\xi, \log\Delta_{\eta,\xi, r}\xi) = -\pi\big(e^s(\xi, { H_\r\xi) - e^{-s}(\xi, H'_\r}\xi)\big) - \log d(\r) \ ,
\]
so the expression \eqref{mainf} for $S(r)$ is valid for ${r_0}\leq r\leq r'$ by the above arguments, hence for all $r\geq r_0$ as $r'$ is arbitrary. 
\end{proof}
Similarly, by Theorem \ref{eta}, we have
\ben\label{sr}
 S(\psi_r |\!| \f_r) =  \pi\Big({r \bar E - \frac1r \bar E'}\Big) - \log d(\r)\ , \quad r\geq r_0\ .
\een
Here we choose $r' > r$, a conjugate charge $\bar \r$ localised in a double cone contained in $O'_{r}\cap O_{r'}$, a vector $\eta$ giving the state $\psi_r$ and set $\bar E = (\eta, H_{\bar \r} \eta)$, $\bar E' = (\eta, H'_{\bar \r} \eta)$. Note that, as a consequence of formula \eqref{sr}, $\bar E$ and $\bar E'$ are independent of this choice. 

Clearly, the role of $\r$ and $\bar \r$ may be interchanged. So we have
\[
 S(\psi_r |\!| \f_r)  \leq \pi r \bar E  \ ,
\]
namely, by adding a charge in the region $O'_r$, the vacuum relative entropy in the region $O'_r$ is dominated by $2\pi r$ the corresponding energy increase in the region $O_r$. This implies Bekenstein's bound. 

Note also that
\[
S'(r) = \pi(E + E'/r^2) \geq 0, \qquad S''(r) = -2\pi E'/r^ 3 \leq 0, \quad r\geq r_0 \ ,
\]
and the asymptotic mean entropy is
\ben\label{end2}
\lim_{r\to\infty} \frac{S(r)}{r} = \pi E 
\een
(cf.  \eqref{end}). 

The symmetry 
\[
S(r) \leftrightarrow S(-1/r),  \quad E\leftrightarrow E'
\]
is a manifestation of the charge conjugation symmetry, see \cite{GL92}. 

An interesting point in Thm. \ref{main2} is the appearance of $\log d(\r)$ as constant term in the expansion of $S(r)$.
A similar occurrence holds in the increment of the asymptotic expansion of total entropy in chiral CFT (Kac-Wakimoto conjecture), see \cite{KL05}. 

Of course, this section can be generalised to the multiple charge case by the same footing as in the previous section.

\subsubsection{General deformations in a null direction and ANEC}\label{null}
Let $\A$ be a local QFT net on the Minkowski spacetime as above. Let $W$ be the wedge region $x_1 > |x_0|$. We use also the coordinates $u = x_0 + x_1$, $v= x_0 - x_1$, $y_k = x_k$, $k > 1$. 

Let $f(y)$ be a non-negative function of $y = (y_2,\dots y_n)$ and $W_f$ the region obtained by $W$ by the transformation $(u,v, y) \to (u + f(y), v , y)$ (see Fig. 2), thus
\ben\label{Wf}
W_f = \{(u,v,y) : u > f(y), v <0\}\ .
\een
\begin{figure}
\centering
\begin{tikzpicture} \label{deform}
\draw [->] (4,0) -- (5,2); 
\draw [->] (-4,0) -- (-3,2); 
\draw [->] (-4,0) -- (-3,-0.5);
\draw [->] (4,0) -- (5,-0.5); 
\draw [orange, thick, domain= 0:1] plot (\x/0.5 - 1, {\x- \x*\x*\x} ); 
\draw[ thick, orange] [-] (-4,-0) -- (-1,0); 
\draw[ thick, orange] [-] (1,0) -- (4,0); 
\draw[ dotted,thin] [-] (-1,0) -- (1,0); 
\draw
    (5.5,2) node {$u$};
    \draw
 (5.5,-0.5) node {$v$};
       \draw[orange]
 (4.4,1.7) node {$A$};
     \draw[orange]
 (4.2,-0.5) node {$A$};
     \draw
 (0.06,0.1) node { $f$};
  \draw
 (0.34,1.2) node { $f_t$};
 \draw [orange, thick, dashed, domain= 0:1] plot (\x/0.5 - 1, {1.6*\x- 1.6*\x*\x*\x} ); 
 \draw [orange, thick, dashed, domain= 0:1] plot (\x/0.5 - 1, {2.2*\x- 2.2*\x*\x*\x} ); 
\end{tikzpicture}
\caption{\footnotesize The function $f$ is the boundary of the deformed region on the null horizon. The entire deformed region is its causal envelop $A$. }
\end{figure}
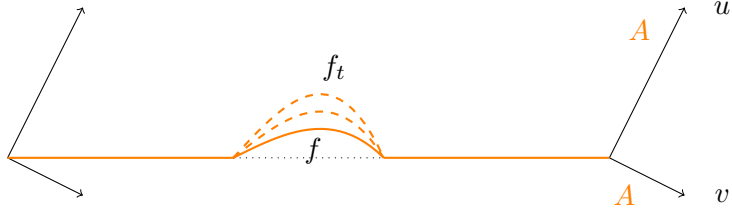
Consider the von Neumann algebras $\M= \A(W)$ and $\M_f = \A(W_f)$ and note that $\M_f\subset \M$ is -hsm with respect to the vacuum vector $\xi$. So we have the tunnel of von Neumann algebras $\M_t$ as in \eqref{tt} with $\N = \M_f$
\ben\label{tun}
\M_t = \s^\f_{s}(\M_f)  = \A(W_{f_t} )\ ,
\een
$ t =   e^{-2\pi s} - 1$, $f_t \equiv e^{-2\pi s} f$. 

The one parameter unitary group $T_f$ generated by the associated Hamiltonian 
\[
H_f = \frac{1}{2\pi}\big(\log\Delta_{\xi, \A(W_f)} - \log\Delta_{\xi, \A(W)}\big)
\]
satisfies
\[
T_f(t)\A(W)T_f(-t) = \A(W_{f_t}), \quad t >0,
\]
namely $T_f(t)\A(W)T_f(-t) = \M_{t}$, in particular $\A(W_f) = T_f(1)\A(W)T_f(-1)$. 

Let $\f$ be the vacuum state and $\psi$ the state obtained by $\f$ by adding a DHR charge $\r$ localised in a double cone $O\subset \M_R$ with $R >0$ as in \eqref{psi}, and let $\eta$ be a cyclic and separating vector giving $\psi$ on $\M$. 

Let $u^\r$ be the covariance unitary cocycle for $\r$. 
If $\Lambda(s) : (u,v, y)\mapsto (e^s u,e^{-s}v, y)$
is the $s$-boost  in the $x_1$-direction, $s\in\mathbb R$, we simply set
$u^\r_s = u^\r_{\Lambda(s)}$ as above. 

Since $\r$ is localised in $W_{f_R}$ and $W_{f_R}$ is mapped into itself by $\Lambda(s)$, with $s\geq 0$,
by \eqref{loc} we have $u^\r_s\in \M_R\equiv \A(W_{f_R})$ if $s\geq 0$. 

Now, by \eqref{KW}, we have
\[
u^\r_{-2\pi s }= d(\r)^{is}(D\psi : D\f)_s \ ,
\]
so $(D\psi : D\f)_s\in \M_R$, $s\leq 0$; here $\f$ and $\psi$ are considered as states on $\M =\A(W)$. 
Therefore we are in the setting of Section \ref{Wies}, with $\N = \A(W_f)$, and we can apply Thm. \ref{rw}. 
We have
\ben
\log \Delta_{\eta, \xi, \M_t} - \log \Delta_{\eta, \xi, \M}  = 2\pi  t H_{\r,f}, \quad 0\leq t\leq R,
\een
with $H_{\r,f}$ a positive, essentially selfadjoint operator.
\begin{theorem}\label{main}
In the above setting, let $\f_t$, $\psi_t$ be the restrictions of $\f$, $\psi$ to $\A(W_{f_t})$ and $S(t) \equiv S(\f_t |\!| \psi_t)$.  
We  have
\ben\label{Sf}
S(t)   -  S(0) = -  2\pi t E_{\r,f},\quad 0 \leq t \leq R\ ,
\een
with $ E_{\r,f} = (\xi, H_{\r,f} \xi)$ the mean relative energy associated with $f$. 

If $f$ is constant, $f(y) = a >0$, then $H_{\r, f} = aH_{\r^+}$ with $H_{\r^+}$ the generator of the one-parameter unitary group of null $\mathbf u$-translations in the representation $\r$. 
\end{theorem}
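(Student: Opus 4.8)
The first assertion is, after the preparation just given, nothing but the Corollary specialised to this geometric situation. The hypotheses of Theorem~\ref{rw} have been checked immediately above (with $\N=\A(W_f)$, $\M=\A(W)$; the localisation \eqref{loc} of the covariance cocycle yields $(D\psi:D\f)_s\in\M_R$ for $s\le0$), so the operator identity $\log\Delta_{\eta,\xi,\M_t}-\log\Delta_{\eta,\xi,\M}=2\pi t\,H_{\r,f}$ holds for $0\le t\le R$. I would simply take the expectation value in the vacuum vector $\xi$: since, by the definition in Section~\ref{RE}, $S(t)=-(\xi,\log\Delta_{\eta,\xi,\M_t}\,\xi)$ and $S(0)=-(\xi,\log\Delta_{\eta,\xi,\M}\,\xi)$, this gives at once
\[
S(t)-S(0)=-2\pi t\,(\xi,H_{\r,f}\,\xi)=-2\pi t\,E_{\r,f},\qquad 0\le t\le R,
\]
which is \eqref{Sf}. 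Thus the first statement is exactly \eqref{deltaS} read in the present context, and requires no further argument.

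For the second assertion I take $f(y)=a>0$, so that $W_f=\{u>a,\ v<0\}$ is the wedge $W$ rigidly translated by $a$ along the null $\mathbf u$-direction; write $W_f=W_{a\mathbf u}$. By Theorem~\ref{rw},
\[
H_{\r,f}=\frac{1}{2\pi}\big(\log\Delta_{\eta,\xi,\A(W_{a\mathbf u})}-\log\Delta_{\eta,\xi,\A(W)}\big)\ .
\]
Since $\r$ is localised in $O\subset\M_R=\A(W_{f_R})$ with $W_{f_R}\subset W_{a\mathbf u}\subset W$, the key formula \eqref{KW1} applies both to $W$ and to $W_{a\mathbf u}$ (the same $\eta$ being cyclic and separating for $\A(W_{a\mathbf u})$ by the results of Section~\ref{Wies}). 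Subtracting the two instances of \eqref{KW1}, the $\log d(\r)$ terms cancel and I obtain
\[
H_{\r,f}=K_{\r,W}-K_{\r,W_{a\mathbf u}}\ ,
\]
a difference of boost generators in the charged representation.

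It then remains to evaluate this difference by a Lie-algebra computation inside $U_\r$. The boost fixing $W_{a\mathbf u}$ is obtained from the boost fixing $W$ by conjugation with the null translation $\tau_a$ (translation by $a\mathbf u$), i.e.\ $\Lambda_{W_{a\mathbf u}}(s)=\tau_a\,\Lambda_W(s)\,\tau_a^{-1}$; as $U_\r$ is a genuine positive-energy unitary representation of $\uPoi$ (cf.\ \eqref{covar}), this gives $K_{\r,W_{a\mathbf u}}=U_\r(\tau_a)\,K_{\r,W}\,U_\r(\tau_a)^{-1}$. The boost and the $\mathbf u$-translation span the two-dimensional ``$ax+b$'' Lie algebra, with relation (the analogue for $U_\r$ of Borchers' \eqref{borch})
\[
U_\r(\Lambda_W(s))\,U_\r(\tau_c)\,U_\r(\Lambda_W(-s))=U_\r(\tau_{e^s c})\ ,
\]
equivalently $[H_{\r^+},K_{\r,W}]=iH_{\r^+}$, with $H_{\r^+}\ge0$ the generator of $U_\r(\tau_\cdot)$. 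The adjoint-action series then truncates after first order, yielding $U_\r(\tau_a)\,K_{\r,W}\,U_\r(\tau_a)^{-1}=K_{\r,W}-a\,H_{\r^+}$ and hence
\[
H_{\r,f}=K_{\r,W}-K_{\r,W_{a\mathbf u}}=a\,H_{\r^+}\ ,
\]
manifestly positive, in agreement with Theorem~\ref{rw}.

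The first statement is immediate and the algebra of the second is short; the one point deserving care is operator-domain rigour in the final step, where $K_{\r,W}$ and $H_{\r^+}$ are unbounded. Rather than manipulating commutators formally, I would establish $K_{\r,W_{a\mathbf u}}=K_{\r,W}-a\,H_{\r^+}$ directly at the level of one-parameter unitary groups from the exact ``$ax+b$'' relation above---a rigorous statement about the positive-energy representation of the translation-dilation subgroup---and combine it with the essential self-adjointness of $H_{\r,f}$ furnished by Theorem~\ref{rw}. With the generators identified on a common core, the conclusion $H_{\r,f}=a\,H_{\r^+}$ follows as an equality of self-adjoint operators; this handling of domains is the only real obstacle.
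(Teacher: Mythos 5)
Your proof is correct and takes essentially the same route as the paper's: the first identity is just Corollary \ref{deltaS} (the expectation value on $\xi$ of the operator identity furnished by Theorem \ref{rw}, whose hypotheses you verify exactly as the paper does via \eqref{loc} and \eqref{KW}), and for constant $f$ the paper likewise subtracts the two instances of \eqref{KW1} (cancelling $\log d(\r)$) and invokes the ``$ax+b$'' commutation relation \eqref{cr} to conclude $H_{\r,a}=K_{\r}-K_{\r,a}=a(K_{\r}-K_{\r,1})=aH_{\r^+}$. Your closing remark on establishing the generator identity $K_{\r,W+a\mathbf u}=K_{\r,W}-aH_{\r^+}$ at the level of one-parameter unitary groups is a sensible elaboration of a domain issue the paper leaves implicit, but it does not change the argument.
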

\begin{proof}
The above discussion shows the relation \eqref{Sf} as a consequence of Corollary \ref{deltaS}. If $f$ is constant equal to $a$, then the equality follows because
\[
H_{\r,a} = \frac{1}{2\pi}
(\log \Delta_{\eta, \xi, \M_a} - \log \Delta_{\eta, \xi, \M}) =   K_{\r} - K_{\r,a}  = a(K_{\r} - K_{\r,1}) = aH_{\r^+} ,
\]
where $K_\r$, resp.  $K_{\r,a}$, is the generator of the boost one-parameter unitary group preserving $W$, resp. $W + a {\mathbf u}$. 
\end{proof}
Let $0\leq R < \tilde R$. 
We now consider the case of a charge $\r$ localised in $W_{\f_R}\cap W'_{f_{\tilde R}}$, thus $\r$ supported on the null horizon of $W$ (cf. \cite{L00}). As above, $\r$ may be the composition of several charged $\r = \r_1\r_2\cdots\r_n$. By Theorem \ref{main} we have:
\[
S'(t)   =  -  2\pi E_{\r,f},\quad  t \leq R\ ,\qquad S'(t)  = 0,\quad t\geq \tilde R \ .
\]
Now, the statement that $H_{\r,f}$ is a positive operator is a form of the ANEC. Taking the expectation value on the vacuum state gives the following. 
\begin{corollary}We have:
\[
E_{\r,f} = \frac{1}{2\pi} \int_{-\infty}^{+\infty} S''(t){\rm d}t  > 0 \ .
\]
\end{corollary}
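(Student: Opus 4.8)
The plan is to read the corollary off Theorem~\ref{main}: the equality is nothing but the telescoping of $S'$ between its two constant asymptotic values, while the strict sign is the vacuum expectation of the positive operator $H_{\r,f}$ supplied by Theorem~\ref{rw}. First I would record the two facts about the slope that are already established. By Theorem~\ref{main}, together with the observation that the reference wedge may be freely translated so that formula \eqref{Sf} is valid for all $t\le R$, the function $S$ is differentiable off $[R,\tilde R]$ with
\[
S'(t) = -2\pi E_{\r,f}\quad (t\le R),\qquad S'(t) = 0\quad (t\ge \tilde R).
\]
In particular $S'$ is constant outside the bounded interval $[R,\tilde R]$, so its limits at $\pm\infty$ exist and equal $-2\pi E_{\r,f}$ and $0$.

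Next I would integrate. Since $S'\le 0$ (monotonicity of the relative entropy) and $S'$ takes the two constant values above off $[R,\tilde R]$, the distributional second derivative $S''$ is a finite measure supported in $[R,\tilde R]$, and the fundamental theorem of calculus yields
\[
\int_{-\infty}^{+\infty} S''(t)\,{\rm d}t = S'(+\infty) - S'(-\infty) = 0 - (-2\pi E_{\r,f}) = 2\pi E_{\r,f},
\]
which is exactly the claimed equality $E_{\r,f} = \frac{1}{2\pi}\int_{-\infty}^{+\infty} S''(t)\,{\rm d}t$. Only the two boundary slopes enter, so no information about $S$ inside $(R,\tilde R)$ is required; the one technical point to phrase carefully is that $\int S''$ is to be read as the total mass of the measure ${\rm d}S'$, which equals $S'(+\infty)-S'(-\infty)$ even if $S$ fails to be $C^2$ on $(R,\tilde R)$.

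Finally, for the sign I would invoke Theorem~\ref{rw}, by which $H_{\r,f}\ge 0$; evaluating on the vacuum gives $E_{\r,f}=(\xi,H_{\r,f}\xi)\ge 0$, and this already proves $\int_{-\infty}^{+\infty}S''\,{\rm d}t\ge 0$, a general form of the ANEC. The genuinely delicate point is the \emph{strict} inequality. By \eqref{Sf} one has $E_{\r,f}=\frac{1}{2\pi R}\big(S(0)-S(R)\big)$, so $E_{\r,f}=0$ is equivalent to $S$ being constant on $[0,R]$, equivalently to $\xi\in\Ker H_{\r,f}$; and the mere monotone profile of $S$ (constant slope, then a drop, then $0$) is by itself compatible with $E_{\r,f}=0$. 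Hence strictness cannot be extracted from monotonicity alone: it must come from the operator $H_{\r,f}$, reflecting that the charge $\r$, being a nontrivial excitation localised strictly between the two deformed horizons, truly carries positive null energy and so is not annihilated by $H_{\r,f}$. This is the step I expect to be the main obstacle, and it is precisely the one made transparent in the explicit $U(1)$-current computation, where $S''(t)=\pi\ell^2(t)>0$ wherever the charge density is non-zero.
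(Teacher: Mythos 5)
Your proposal is correct and is essentially the paper's own argument: the paper proves the equality exactly as you do, observing that it is ``analogous to \eqref{Av}'' --- i.e.\ $\int_{-\infty}^{+\infty}S''(t)\,{\rm d}t = S'(+\infty)-S'(-\infty)=2\pi E_{\r,f}$, using the two constant slopes $S'(t)=-2\pi E_{\r,f}$ for $t\le R$ and $S'(t)=0$ for $t\ge \tilde R$ supplied by Theorem~\ref{main} --- and it gets the sign from the positivity of $H_{\r,f}$ established in Theorem~\ref{rw}. One remark on your final paragraph: the paper's justification of strictness is the single clause ``$E_{\r,f}>0$ because $H_{\r,f}$ is a positive operator,'' so your observation that operator positivity alone yields only $E_{\r,f}\ge 0$ --- strict inequality requiring in addition $\xi\notin\Ker H_{\r,f}$, which fails for trivial $\r$ (where $H_{\r,f}=H_f$ annihilates the vacuum) and thus rests on the nontriviality of the charge --- is, if anything, more careful than the paper's own proof rather than a divergence from it.
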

\begin{proof}
Analogous to \eqref{Av}, with $E_{\r,f} > 0$ because $H_{\r,f}$ is a positive operator. 
\end{proof}
In particular, we may take $\r = {\rm Ad}U$ with $U$ a unitary in $\A(W_f)$ (cf. Remark \ref{rem}). Then $H_{\r,f} = U H_f U^*$ and we have
\[
(\eta, H_{f}\eta) = \frac{1}{2\pi} \int_{-\infty}^{+\infty}\frac{{\rm d}^2}{{\rm d}t^2} S(\psi_t |\!|\f_t){\rm d}t  > 0 \ .
\]
A physical argument in \cite[(A.4)]{LLS} in terms of the stress-energy tensor $T_{\mu\nu}$ gives
\ben\label{Hf}
\int_{-\infty}^{+\infty} \frac{{\rm d}^2}{{\rm d}t^2} S(\psi_t |\!|\f_t){\rm d}t  = 2\pi\int f(y)(\eta, T_{uu}\eta){\rm d}^{n-1}y {\rm d}u  \ ,
\een
(null coordinates $u = \frac{1}{\sqrt{2}}(x_0 + x_1)$, $v = \frac{1}{\sqrt{2}}(x_0 - x_1)$, $y_k = x_k$, $k\geq 2$),
showing that
\ben\label{HT}
(\eta, H_{f}\eta) = \int f(y)(\eta, T_{uu}\eta){\rm d}^{n-1}y {\rm d}u  \ .
\een
In particular, we may take $\r = {\rm Ad}U$ with $U$ a unitary in $\A(W_f)$ (cf. Remark \ref{rem}). Then formula \eqref{HT} holds
with $\eta = U\xi$. The linear span of vectors of this form is a dense set $\cal D$ in the Hilbert space by the Reeh-Schlieder theorem, moreover the intersection of $\cal D$ with the domain of $H_f$ can be shown to be still dense (by an averaging procedure)
indeed a core for $H_f$ because $T_f(t)\cal D \subset \cal D$, $t\geq 0$;
we thus infer that
\[
H_f = \int f(y) T_{uu}{\rm d}^{n-1}y {\rm d}u \ ,
\] 
namely $ \int f(y) T_{uu}{\rm d}^{n-1}y {\rm d}u$ is a positive operator, that is the ANEC is always satisfied. 

\section{$U(1)$-current model}
In this section, we illustrate our results by explicit computations in the local conformal net  $\A$ on $\mathbb R$ generated by the $U(1)$-current (see \cite{BMT}). We shall give a complete description of the relative entropy function $S(t)$ associated with a charged state. 

If $h,k$ are in the one-particle Hilbert space, say $h,k$ are real functions in the Schwartz
space $S(\mathbb R)$, we have the commutation relations for the Weyl unitaries on the Bose Fock Hilbert space:
\[
W(h)W(k) = e^{i\int k' h} W(k)W(h)\ .
\]
The norm of $h$ is given  in momentum space by $||h||^2 = \int_0^\infty p|\hat h(p)|^2{\rm d} p$, so we may consider the Weyl unitary  $W(h)$ associated also with any real function $h$ with finite norm. 

If $I\subset \mathbb R$ is an interval or half-line, $\A(I)$ denotes the von Neumann algebra generated by the $W(h)$ with supp$(h) \subset I$. 
 
Let $U$ be the unitary representation of $SL(2,\mathbb R)$ on the Fock Hilbert space; then
\[
U(g)W(h)U(g)^* = U(h_g)\ ,
\]
where $h_g(x) = h(gx)$, $x\in\mathbb R$. 

Let now $\ell$ be real function with compact support on $\mathbb R$. Then we have the localised automorphism $\b_\ell$ studied in \cite{BMT}.   
$\b_\ell$ acts on Weyl unitaries by
\ben\label{bW}
\b_\ell\big(W(h)\big) = e^{-i\int \ell(x)h(x){\rm d}x}W(h)
\een
for every real $h\in S(\mathbb R)$, where $S(\mathbb R)$ is the Schwartz function space. In terms of the $U(1)$-current $j$,
\[
[j(x_1),j(x_2)] =  i\delta'(x_1 - x_2)\ ,
\]
$W(h) = e^{-i\int j(x) h(x){\rm d}x}$ and
$\b_\ell$ is associated with the action 
\[
\b_\ell: j(x)\to j(x) + \ell(x) \ .
\]
The vacuum expectation value of the Weyl unitaries is given by
\ben\label{fW}
\f\big(W(h)\big) = e^{-\frac12 ||h||^2}\ .
\een
The sector class of $\b_\ell$ (i.e. the class of $\b_\ell$ modulo inner automorphisms) is determined by the charge $q\equiv\int \ell(x){\rm d}x$. $\beta$ is inner iff the charge $q$ of $\ell$ is zero and in this case $\beta_\ell =\Ad W(L)$ where $L$ is the primitive of $\ell$, namely $L(x)=\int_{-\infty}^x \ell(a){\rm d}a$.

Let $I$ any interval or half-line and take $\ell_1$ with the same charge as $\ell$ and support contained in $I'$. Since
\[
\b_\ell\big(W(h)\big) =  {\rm Ad}W(\ell - \ell_1)\cdot\b_\ell\big(W(h)\big) = {\rm Ad}W(\ell - \ell_1)\big(W(h)\big)\ , \quad {\rm supp}(h)\subset I \ ,
\] 
$\b_\ell$ is normal on $\A(I)$, namely it extends to a normal map, indeed to an automorphism by eq. \eqref{bW}, of $\A(I)$. It follows that $\b_\ell$ defines a DHR automorphism of $\mathfrak A$ localised in any interval that contains supp$(\ell)$, and that $\b_\ell$ restricts to a normal automorphism of $\A(I)$ for every $I$. 

Let $I$ be an interval with supp$(\ell)\subset I$ and $\cal U$ a connected neighbourhood of the identity in $SL(2,\mathbb R)$ such that ${\rm supp}(\ell)\subset g^{-1}I$ for all $g\in \cal U$. 

We set 
\ben\label{gl}
\ell_g(x) =  \frac{{\rm d}(gx)}{{\rm d} x}\ell(gx),\quad g\in \cal U,
\een
and let as above $L$ be the primitive of $\ell$, $L(x) = \int_{-\infty}^x \ell(a){\rm d}a$ and $L_g$ the primitive of $\ell_g$, thus  
\ben\label{gL}
L_g(x) = L(g x)\ .
\een 
As $\ell_g - \ell$ has zero charge, $L - L_g$ has compact support contained in $I$ and the Weyl unitary $W(L - L_g)$ belongs to $\A(I)$.

We now give the formula for the covariance cocycle $u_g$ of $\b_\ell$ (cf. \cite{BMT}). By the cocycle identity, it is enough to   express of $u_g$ for $g$ in a neighbourhood of the identity in $SL(2,\mathbb R)$, see \cite{GL96}. Put $N= \frac12 q^2$ (the spin).
\begin{proposition}\label{covc}
The covariance unitary cocycle of $\b_\ell$ is given by
\[
u_g = W(L - L_g)e^{\frac{i}{2} \int \ell L_g }e^{-iN/2}, \quad g\in\cal U\ .
\]  
\end{proposition}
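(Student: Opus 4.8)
The plan is to pin down $u_g$ from two inputs: its adjoint action, which determines it up to a scalar, and the cocycle normalisation $u_e=1$ together with the cocycle identity, which fix the remaining phase. It suffices to treat $g$ in the neighbourhood $\cU$ of the identity, the general case following from the cocycle identity as recalled in the text (cf. \cite{GL96}).

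First I would compute the transported charge. By definition of the covariance cocycle, $\b_\ell=\mathrm{Ad}(u_g)\circ\r_g$ with $\r_g=U(g)\b_\ell\big(U(g)^*\,\cdot\,U(g)\big)U(g)^*$. Evaluating on a Weyl generator, using the transformation law $U(g)W(h)U(g)^*=W(h_g)$, formula \eqref{bW}, and the change of variables that converts $\int\ell(x)h(g^{-1}x)\,{\rm d}x$ into $\int\ell_g(x)h(x)\,{\rm d}x$ (this is exactly the weight-one rule \eqref{gl}), one finds $\r_g\big(W(h)\big)=e^{-i\int\ell_g h}W(h)$, that is $\r_g=\b_{\ell_g}$. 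Hence $\mathrm{Ad}(u_g)=\b_\ell\b_{\ell_g}^{-1}$, which on Weyl generators is multiplication by $e^{-i\int(\ell-\ell_g)h}$.

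Second I would exhibit the unitary part. Since $\ell_g-\ell$ has zero charge, $L-L_g$ is compactly supported and $(L-L_g)'=\ell-\ell_g$ by \eqref{gL}, so $W(L-L_g)\in\A(I)$. The Weyl relation $W(h)W(k)=e^{i\int k'h}W(k)W(h)$ gives directly $\mathrm{Ad}\,W(L-L_g)\big(W(h)\big)=e^{-i\int(L-L_g)'h}W(h)=e^{-i\int(\ell-\ell_g)h}W(h)$, which is precisely the multiplier found above. Therefore $u_g=c(g)\,W(L-L_g)$ for some phase $c(g)$ of modulus one, and it only remains to identify $c$.

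Third, and this is the delicate step, I would fix $c(g)$. The normalisation $u_e=1$ forces $c(e)=1$; since $L_e=L$ the claimed formula gives $c(e)=e^{\frac{i}{2}\int\ell L}e^{-iN/2}$, and the clean identity $\int\ell L=\tfrac12\int(L^2)'=\tfrac12 q^2=N$ shows this equals $1$, a consistency check that isolates the constant $e^{-iN/2}$. The $g$-dependence I would then read off from the cocycle identity relating $u_{g_1g_2}$ to $u_{g_1}\,\mathrm{Ad}\,U(g_1)(u_{g_2})$: substituting $u_g=c(g)W(L-L_g)$ and using that $\mathrm{Ad}\,U(g_1)$ carries $W(L-L_{g_2})$ to $W\big((L-L_{g_2})_{g_1}\big)$, again with compactly supported argument, the quadratic phase produced by the ensuing Weyl product yields a functional equation for $c$ whose normalised solution is $c(g)=e^{\frac{i}{2}\int\ell L_g}e^{-iN/2}$.

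The main obstacle is this last phase computation: one must assemble the symplectic (Bargmann) cocycle generated by the Weyl commutation relations together with the geometric cocycle of the $\overline{SL(2,\mathbb R)}$-action and check that their sum reproduces exactly $\tfrac12\int\ell L_g$. Care is needed because $L$ itself is not compactly supported, with $L(+\infty)=q$, so only the differences $L-L_g$ lie in $\A(I)$ and only integrals weighted by the compactly supported $\ell$ may be manipulated freely. The value $N=\tfrac12 q^2$ enters as the lowest conformal weight of the charge-$q$ sector, reflecting that $g\mapsto u_gU(g)$ is the positive-energy covariance representation $U_\r$.
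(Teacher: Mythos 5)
Your overall route is the same as the paper's: the paper's proof of Proposition \ref{covc} consists precisely of checking the cocycle identity for the stated expression, with the adjoint-action computation (your first two steps) left implicit. Making explicit that ${\rm Ad}(u_g)=\b_\ell\b_{\ell_g}^{-1}$ on Weyl generators forces $u_g=c(g)W(L-L_g)$ is correct and a sensible organisation, and the identity $\int\ell L=\tfrac12 q^2=N$ that you invoke at $g=e$ is indeed the decisive identity of the paper.

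However, there is a genuine gap exactly at what you label the ``main obstacle'': you reduce to a functional equation for the phase $c$ and then \emph{assert}, without solving or verifying anything, that its normalised solution is $c(g)=e^{\frac{i}{2}\int\ell L_g}e^{-iN/2}$. That verification is the entire content of the paper's proof, and it is short: writing $z_g=W(L-L_g)$ and using $W(a+b)=W(a)W(b)\,e^{-\frac{i}{2}\int a'b}$, one gets $z_{gh}=z_g\,\a_g(z_h)\,e^{-\frac{i}{2}\int(\ell-\ell_g)(L_g-L_{gh})}$ with $\a_g={\rm Ad}\,U(g)$, so the cocycle identity for $u_g=c(g)z_g$ amounts to the scalar identity $\int(\ell-\ell_g)(L_g-L_{gh})=\int\ell L_g+\int\ell L_h-\int\ell L_{gh}-N$; expanding the left-hand side and applying the substitution rules $\int\ell_g L_{gh}=\int\ell L_h$ and $\int\ell_g L_g=\int\ell L=N$ (immediate from \eqref{gl} and \eqref{gL}) proves it. You have all the ingredients but stop before carrying this out, so the proof is incomplete as written. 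A second, smaller point: your claim that $u_e=1$ together with the cocycle identity ``fixes the remaining phase'' needs justification, since two solutions of the functional equation differ by a continuous local character of $SL(2,\mathbb R)$; these are trivial because $\mathfrak{sl}(2,\mathbb R)$ is perfect. The paper sidesteps this by directly verifying that the proposed expression satisfies the cocycle identity, uniqueness of the covariance cocycle itself (again up to a character, hence exactly) then yielding the statement.
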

\begin{proof}
Let's check the cocycle property of $u_g$. Set $z_g \equiv W(L - L_g)$. We have
\begin{align}
z_{gh} &= W(L - L_{gh})\\ 
&= W(L - L_g + L_h - L_{gh}) \\ 
&= W(L - L_g)W(L_g - L_{gh}) e^{-\frac{i}{2}\int (\ell - \ell_s)(L_g - L_{gh})}\\
&= z_g\a_g(z_h) e^{-\frac{i}{2}\int (\ell - \ell_g)(L_g - L_{gh})}
\end{align}
with $\a_g \equiv {\rm Ad}\,U(g)$, $g,h, hg\in\cal U$. So it suffices to show that
\[
\int (\ell - \ell_g)(L_g - L_{gh}) =  \int \ell L_g  + \int \ell L_h  -\int  \ell  L_{gh}  - N \ . 
\]
Indeed
\begin{align*}
\int (\ell - \ell_g)(L_g - L_{gh}) &=  
  \int \ell L_g   +\int \ell_g  L _{gh}  -\int  \ell  L_{gh} -  \int \ell_g L _{g}\\
  &=     \int \ell L_g   +\int \ell  L _{h}  -\int  \ell  L_{gh} -  \int \ell L \ ,
\end{align*}
so the cocycle property holds because
\[
\int  \ell L = \int \Big(\frac{{\rm d}}{{\rm d}x}L\Big)L  = \frac12 \int \frac{{\rm d}}{{\rm d}x}L^2 = \frac12 L^2(+\infty) = \frac12 q^2 = N \ .
\]
\end{proof}
In particular, the covariance unitary cocycle for dilations is given by
\ben\label{dc}
u_{\delta_s} = W(L - L_{\delta_s})e^{\frac{i}{2} \int \ell L_{\delta_s} }e^{-iN/2},\quad s\in \mathbb R \ ,
\een
where $\delta_s : x\mapsto e^s x$.  

Now suppose {\rm supp}$(\ell) \subset [0, +\infty)$. 
As $d(\b_\ell) = 1$, we have 
\[
u_{\delta_{2\pi s}} = (D\psi : D\f)_{-s}\ ; 
\]
here $\f$ and $\psi = \f\cdot\b_\ell^{-1}$ are restricted to $\A(0,\infty)$.
\begin{proposition}\label{eu}
Let $t\in\mathbb R$ be such that {\rm supp}$(\ell) \subset [t, +\infty)$. We have
\[
S(t) \equiv S(\f_t|\!|\psi_t) = \pi\!\int (x-t)\ell^2(x){\rm d}x\ ,
\]
with $\f_t$ and $\psi_t$ the restrictions of $\f$ and $\psi = \f\cdot\b_\ell^{-1}$  to $\A(t,\infty)$.\footnote{As $S(\f\cdot\b_\ell^{-1}|\!|\f) = S(\f|\!|\f\cdot\b_\ell) =  S(\f|\!|\f\cdot\b^{-1}_{-\ell})$, we have the symmetry
$S(\f_t|\!|\psi_t)  = S(\psi_t|\!|\f_t)$ in this case (cf. Thm. \ref{eta}).}
\end{proposition}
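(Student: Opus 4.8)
The plan is to compute $S(t)$ directly from the Connes cocycle derivative formula of Section \ref{RE}, namely $S(\f\,|\!|\,\psi) = i\frac{\rm d}{{\rm d}s}\f(w_s)|_{s=0}$ with $w_s = (D\psi:D\f)_s$, exploiting that for the $U(1)$-current the whole covariance cocycle is explicitly known through Proposition \ref{covc}. First I would reduce to the case $t=0$: since the net and the vacuum are translation covariant and translation by $-t$ carries $\A(t,\infty)$ onto $\A(0,\infty)$ and the charge $\b_\ell$ to $\b_{\ell(\cdot+t)}$, whose support lies in $[0,\infty)$, it suffices to prove $S(\f_0\,|\!|\,\psi_0) = \pi\int x\,\ell^2(x)\,{\rm d}x$ when ${\rm supp}(\ell)\subset[0,\infty)$; the stated formula then follows by the substitution $y=x+t$.

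For $t=0$, since $d(\b_\ell)=1$, the relation recorded just before the Proposition gives $w_s = (D\psi:D\f)_s = u_{\delta_{-2\pi s}}$, so that $\f(w_s)=\f(u_{\delta_{-2\pi s}})$. Feeding the explicit cocycle \eqref{dc} into the vacuum formula \eqref{fW} yields
\[
\f(u_{\delta_r}) = e^{-\frac12\|L-L_{\delta_r}\|^2}\,e^{\frac{i}{2}\int \ell\,L_{\delta_r}}\,e^{-iN/2},
\]
where $L_{\delta_r}(x)=L(e^r x)$ by \eqref{gL}. At $r=0$ one has $L_{\delta_0}=L$ and $\int \ell L = N$ (the computation at the end of the proof of Proposition \ref{covc}), so $\f(w_0)=1$, consistent with $\psi(1)=1$.

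It then remains to differentiate. Using $\f(u_{\delta_0})=1$ I would write
\[
\frac{\rm d}{{\rm d}s}\f(w_s)\Big|_{s=0} = -2\pi\,\frac{\rm d}{{\rm d}r}\log\f(u_{\delta_r})\Big|_{r=0},
\]
and split $\log\f(u_{\delta_r})$ into the real Gaussian term $-\frac12\|L-L_{\delta_r}\|^2$ and the phase term $\frac{i}{2}\int \ell\,L_{\delta_r}$. Because $L-L_{\delta_r}$ vanishes at $r=0$ in the one-particle space, the derivative of its squared norm equals $2\,{\rm Re}\,\langle L-L_{\delta_0},\frac{\rm d}{{\rm d}r}(L-L_{\delta_r})|_0\rangle = 0$, so the Gaussian factor does not contribute to first order; this is the crucial cancellation. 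For the phase term, $\frac{\rm d}{{\rm d}r}L_{\delta_r}(x)|_0 = xL'(x)=x\ell(x)$ gives $\frac{i}{2}\int \ell(x)\,x\ell(x)\,{\rm d}x = \frac{i}{2}\int x\,\ell^2(x)\,{\rm d}x$. Hence $\frac{\rm d}{{\rm d}s}\f(w_s)|_{s=0} = -\pi i\int x\,\ell^2\,{\rm d}x$ and $S = i\cdot(-\pi i\int x\,\ell^2) = \pi\int x\,\ell^2(x)\,{\rm d}x$, the claim for $t=0$.

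The hard part will be justifying the differentiability underlying these steps rather than the arithmetic: I must check that $r\mapsto L-L_{\delta_r}$ is a $C^1$ curve in the one-particle Hilbert space vanishing at $r=0$ (so that $\|L-L_{\delta_r}\|^2 = O(r^2)$ and the Gaussian factor is genuinely flat to first order), and that the derivative direction $x\ell(x)$ has finite one-particle norm. Both hold because $\ell\in S(\mathbb R)$ with support in $[0,\infty)$ makes $L-L_{\delta_r}$ a smooth, compactly supported, zero-charge function depending smoothly on $r$, and $\widehat{x\ell}$ is Schwartz with $p\,|\widehat{x\ell}(p)|^2$ integrable at both ends. This same regularity ensures that the differentiability hypothesis of the relative entropy formula is met, so that $S(t)$ is finite and equals the computed value rather than $+\infty$.
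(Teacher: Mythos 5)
Your proposal is correct and takes essentially the same route as the paper's own proof: both evaluate $S = i\frac{\rm d}{{\rm d}s}\f(w_s)\big|_{s=0}$ with $w_s = u_{\delta_{-2\pi s}}$, insert the explicit cocycle \eqref{dc} into the vacuum formula \eqref{fW}, use the same key cancellation of the Gaussian factor to first order (since $\|L-L_{\delta_s}\|^2 = O(s^2)$, your squared-norm version of this step is in fact slightly cleaner than the paper's), and differentiate the phase to get $\frac{i}{2}\int x\,\ell^2(x)\,{\rm d}x$. The only cosmetic difference is that you reduce general $t$ to $t=0$ by translation covariance, while the paper reruns the computation with the dilation group $\delta^{(t)}$ of the half-line $(t,\infty)$; the two reductions are equivalent.
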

\begin{proof}
We first assume that $t=0$ and {\rm supp}$(\ell) \subset [0, +\infty)$. By equations \eqref{fW}, \eqref{dc} we have
\[
\f(u_{\delta_{s}}) = e^{-\frac12 ||L_{\delta_s} - L||^2}e^{\frac{i}{2} \int \ell L_{\delta_s} }e^{-iN/2}\ .
\]
Taking into account that
\[ 
\frac{\rm d}{{\rm d}s}e^{-\frac12 ||L - L_{\delta_{s}}||^2}\big|_{s = 0} =  -e^{-\frac12 ||L - L_{\delta_{s}}||^2}||L - L_{\delta_{s}}||\,\frac{\rm d}{{\rm d}s} ||L - L_{\delta_{s}}||\,\big|_{s = 0} = 0\ ,
\]
we then have
\begin{align}
\frac{\rm d}{{\rm d}s}\f(u_{\delta_{s}})\big|_{s = 0} 
&= \frac{\rm d}{{\rm d}s}e^{-\frac12 ||L - L_{\delta_{s}}||^2} e^{\frac{i}{2} \int \ell L_{\delta_{s}} }e^{-iN/2}\big|_{s = 0}\\
&= e^{-iN/2}\frac{\rm d}{{\rm d}s}e^{\frac{i}{2} \int \ell L_{\delta_{s}} }|_{s = 0}\\
&= \frac{i}{2}e^{-iN/2}e^{\frac{i}{2} \int \ell L_{\delta_{s}} }\frac{\rm d}{{\rm d}s}\!\int \ell L_{\delta_{s}} \big|_{s = 0}\\ 
&= \frac{i}{2}e^{-iN/2}e^{\frac{i}{2} \int \ell L_{\delta_{s}} }
\int \ell \frac{\rm d}{{\rm d}s}L_{\delta_{s}}\big|_{s = 0} \\ \label{line}
&= \frac{i}{2}e^{-iN/2}e^{\frac{i}{2} \int \ell L_{\delta_{s}} }\int \ell(x) \ell_{\delta_{s}}(x) \frac{{\rm d}(\delta_{s}x)}{{\rm d}s}{\rm d}x \big|_{s = 0}\\
&= \frac{i}{2}e^{-iN/2}e^{\frac{i}{2} \int \ell L_{\delta_{s}} }\int \ell(x) \ell_{\delta_{s}}(x) e^s x{\rm d}x \big|_{s = 0}\\
&= \frac{i}{2}\int x\ell^2(x){\rm d}x \ .
\end{align}
Thus
\[
S(\f|\!|\psi) = S(\f|\!|\f\cdot\b_\ell^{-1}) 
= i\frac{\rm d}{{\rm d}s}\f\big((D\psi : D\f)_{s}\big)\big|_{s = 0}
= -i\frac{\rm d}{{\rm d}s}\f(u_{\delta_{2\pi s}})\big|_{s = 0} = \pi\!\!\int x\ell^2(x){\rm d}x \ .
\]
Let now $\delta^{(t)}$ the one parameter dilation group w.r.t. the half-line $(t,+\infty)$, namely $\delta_s^{(t)}: x\mapsto e^s(x-t)-t$. The above computation works to get $\frac{\rm d}{{\rm d}s}\f(u_{\delta^{(t)}_{s}})|_{s = 0}$ by replacing $\delta$ with $\delta^{(t)}$ up to the line \eqref{line}; since
\[
\frac{{\rm d}(\delta^{(t)}_{s}x)}{{\rm d}s} = e^s(x-t) \ ,
\]
we then have 
\[
\frac{\rm d}{{\rm d}s}\f(u_{\delta^{(t)}_{s}})\big|_{s = 0}  = \frac{i}{2}\int (x-t)\ell^2(x){\rm d}x \ ,
\]
thus
\ben\label{stu}
S(t) 
=  \pi\!\!\int (x-t)\ell^2(x){\rm d}x \ .
\een
\end{proof}
One may compute the infimum $S_{\rm min}(q)$ for the possible entropy of a sector with charge $q$ and charge distribution $\ell$ supported in an interval $I\subset (0,\infty)$. If $I = (1/\lambda , \lambda)$, $\lambda > 1$, 
$S_{\rm min}(q) =  \pi q^2/(2\log\lambda)$ is taken when $\ell(x)$ approaches $q/(2x\log\lambda)$ on $I$ and zero out of $I$. 
We may indeed calculate $S_{\rm min}(q)$ in any given interval $\tilde I$, rather than $(0,+\infty)$, when the charge distribution is concentrated in a fixed interval $I\subset \tilde I$. 
\begin{corollary}\label{cmin}
We have
\ben\label{min}
S_{\rm min}(q) = 2\pi N/\log\nu\ ,
\een
with $\nu$ the cross ratio associated with $I\subset \tilde I$, namely $\nu =  \frac{(c - b)(d - a)}{(c - a)(d - b)}$ if $I = (a,b)$ and $\tilde I = (c, d)$. 
\end{corollary}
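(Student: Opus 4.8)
The plan is to reduce the bounded-interval problem to the half-line formula of Proposition \ref{eu} by conformal covariance, and then to solve an elementary variational problem.

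First I would establish the relative entropy formula for a charge localised in $I=(a,b)$ but measured with respect to an arbitrary larger interval $\tilde I=(c,d)\supset I$. Choosing $g\in SL(2,\mathbb R)$ that maps the half-line $(0,+\infty)$ onto $\tilde I$, the unitary $U(g)$ fixes the vacuum and sends $\A(0,\infty)$ to $\A(\tilde I)$; since conjugation by a vacuum-preserving unitary transports the representing vector and the relative modular operator covariantly, it leaves the relative entropy unchanged and sends $\b_\ell$ to the transported automorphism $\b_{\ell_g}$, with $\ell_g$ the density as in \eqref{gl}. Because $\ell$ transforms as a weight-one density, the charge $\int\ell=q$ is preserved and $\ell_g$ is supported in $g^{-1}I\subset(0,\infty)$. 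Applying \eqref{stu} in the half-line coordinate and changing variables back through $g$, the half-line weight $x$ acquires the Jacobian of $g$ and I expect to obtain
\[
S = \pi\int_c^d \frac{(x-c)(d-x)}{d-c}\,\ell^2(x)\,{\rm d}x\ ,
\]
i.e. the weight $x-t$ of \eqref{stu} is replaced by $w(x)=(x-c)(d-x)/(d-c)$ adapted to the finite interval; letting $c\to 0$, $d\to +\infty$ recovers $\pi\int x\ell^2$, confirming consistency with Proposition \ref{eu}.

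Next I would minimise $\int_a^b w\,\ell^2$ over real $\ell$ supported in $I$ with fixed charge $\int_a^b\ell=q$. By Cauchy--Schwarz,
\[
q^2 = \Big(\int_a^b \sqrt{w}\,\ell\cdot w^{-1/2}\,{\rm d}x\Big)^2 \le \Big(\int_a^b w\,\ell^2\,{\rm d}x\Big)\Big(\int_a^b w^{-1}\,{\rm d}x\Big)\ ,
\]
with equality precisely when $\ell$ is proportional to $1/w$ on $I$; taking $\ell = \big(q/\!\int_a^b w^{-1}\big)\,w^{-1}$ yields the infimum $S_{\rm min}(q)=\pi q^2/\!\int_a^b w^{-1}{\rm d}x$. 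In the half-line limit this minimiser becomes $\ell(x)=q/(2x\log\lambda)$, matching the special case quoted before the corollary. Finally I would evaluate the integral by partial fractions, $\int_a^b \frac{{\rm d}x}{(x-c)(d-x)} = \frac{1}{d-c}\log\frac{(b-c)(d-a)}{(a-c)(d-b)}$, so that $\int_a^b w^{-1}{\rm d}x = \log\frac{(b-c)(d-a)}{(a-c)(d-b)} = \log\nu$ with $\nu$ the stated cross ratio; substituting $q^2=2N$ then gives \eqref{min}.

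The main obstacle is the first step: pinning down the conformal transport of $\b_\ell$ and the induced weight $w$ exactly, in particular verifying that the relative entropy is genuinely invariant under $U(g)$ and that the weight-one transformation law \eqref{gl}, \eqref{gL} makes the Jacobian of $g$ combine with the half-line weight $x$ to produce $w(x)=(x-c)(d-x)/(d-c)$. Once the interval formula is secured, the variational bound and the cross-ratio evaluation are routine.
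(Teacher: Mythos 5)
Your proposal is correct, and it rests on the same two pillars as the paper's own proof --- M\"obius covariance and the half-line formula \eqref{stu} --- but deploys them in the opposite order. The paper's argument is a pure normalization: the variational problem is solved once, in the symmetric configuration $I=(1/\lambda,\lambda)\subset\tilde I=(0,\infty)$ (the computation quoted just before the corollary, with minimiser $\ell(x)\propto 1/x$ and value $\pi q^2/(2\log\lambda)$), and then one notes that any pair $I\subset\tilde I$ can be carried to such a configuration by a M\"obius map, under which the relative entropy, the charge $q=\int\ell$ (by the density law \eqref{gl}) and the cross ratio $\nu=\lambda^2$ are all invariant --- so no general weight ever needs to be computed. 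You transport first instead, and the step you flag as the main obstacle does work out exactly as you predict: with $g(y)=(c+dy)/(1+y)$ mapping $(0,\infty)$ onto $(c,d)$, one has $y=g^{-1}(x)=(x-c)/(d-x)$ and $g'(y)=(d-c)/(1+y)^2$, so the half-line weight $x$ pulls over to $w(x)=g^{-1}(x)\,g'\bigl(g^{-1}(x)\bigr)=(x-c)(d-x)/(d-c)$; your Cauchy--Schwarz bound and the evaluation $\int_a^b w^{-1}\,{\rm d}x=\log\frac{(b-c)(d-a)}{(a-c)(d-b)}$ are correct, and since $(c-b)/(c-a)=(b-c)/(a-c)$ this is precisely the paper's $\nu$, giving \eqref{min} with $q^2=2N$. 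What your route buys is the explicit interval entropy formula $S=\pi\int_c^d\frac{(x-c)(d-x)}{d-c}\,\ell^2(x)\,{\rm d}x$, exhibiting the characteristic modular weight of the interval $(c,d)$, which the paper never states and which is of independent interest; what the paper's normalization buys is brevity. Two minor points of hygiene: the invariance of the relative entropy under ${\rm Ad}\,U(g)$ with $U(g)\xi=\xi$ is automatic since the representing vectors and relative modular operators are transported covariantly, as you say; and your optimiser $\ell\propto w^{-1}$ on $I$ is discontinuous at $\partial I$, so --- consistently with the paper's wording ``is taken when $\ell$ approaches'' --- $S_{\rm min}(q)$ is an infimum attained only in the limit of smoothings, not within the Schwartz class itself.
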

\begin{proof}
By the above calculation $S_{\rm min}(q) =  \pi q^2/\log\lambda^2 = 2\pi N/\log\nu$ for the case $(1/\l, \l)\subset (0,\infty)$. 
The corollary then follows by M\"obius covariance.
\end{proof}
\begin{remark} The form of the right hand side in eq. \eqref{min} suggests that a model independent version of Corollary \ref{cmin} should hold true. 
\end{remark}
Now, let $K_\ell$ and $H_\ell$ be the selfadjoint generators of the dilation and translation one parameter unitary groups in the $\b_\ell$ representation. 
As a special case of formula \eqref{st1}, we have here
\ben\label{st3}
S(t) = 2\pi (\xi, K_{\b_\ell}\,\xi) -   2\pi t (\xi, H_{\b_\ell}\,\xi)    
\een
if $\b_\ell$ is localised to the right of $t$.  
By comparing equations \eqref{stu} and \eqref{st3}, we immediately get the following expressions for the mean local energy and mean energy:
\begin{corollary}\label{HK}
We have
\[
(\xi, K_{\b_\ell}\,\xi) = \frac{1}{2}\int x\ell^2(x){\rm d}x ,\qquad (\xi, H_{\b_\ell}\,\xi) = \frac{1}{2}\int \ell^2(x){\rm d}x\ ,
\]
if {\rm supp}$(\ell)\subset (0,+\infty)$. 
\end{corollary}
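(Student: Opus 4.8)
The plan is to read off both unknown vacuum expectation values by matching two independent affine expressions for the same function $S(t)$, exactly as the statement of the corollary suggests. Proposition \ref{eu}, equation \eqref{stu}, gives, whenever ${\rm supp}(\ell)\subset[t,+\infty)$,
\[
S(t)=\pi\!\int (x-t)\ell^2(x)\,{\rm d}x=\pi\!\int x\,\ell^2(x)\,{\rm d}x-\pi t\!\int \ell^2(x)\,{\rm d}x\ ,
\]
an affine function of $t$ whose constant term is $\pi\int x\,\ell^2$ and whose slope is $-\pi\int\ell^2$. On the other hand, equation \eqref{st3}, which is the specialisation of \eqref{st1} to the charge $\b_\ell$ (so that $\log d(\b_\ell)=0$ since $d(\b_\ell)=1$), gives, for the same range of $t$,
\[
S(t)=2\pi(\xi,K_{\b_\ell}\,\xi)-2\pi t(\xi,H_{\b_\ell}\,\xi)\ ,
\]
again affine in $t$, now with constant term $2\pi(\xi,K_{\b_\ell}\,\xi)$ and slope $-2\pi(\xi,H_{\b_\ell}\,\xi)$.

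First I would check that both expressions are valid simultaneously on a nondegenerate interval of $t$. Since ${\rm supp}(\ell)$ is compact and contained in $(0,+\infty)$, writing $a=\inf\,{\rm supp}(\ell)>0$, both \eqref{stu} and \eqref{st3} hold for every $t<a$, so their common domain of validity is the ray $(-\infty,a)$. Two affine functions of $t$ that agree on an interval necessarily have equal constant terms and equal slopes. Equating the constant terms gives $2\pi(\xi,K_{\b_\ell}\,\xi)=\pi\int x\,\ell^2(x)\,{\rm d}x$, and equating the slopes gives $-2\pi(\xi,H_{\b_\ell}\,\xi)=-\pi\int \ell^2(x)\,{\rm d}x$; dividing both by $2\pi$ yields the two asserted identities.

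There is essentially no analytic obstacle here; the only point deserving a word is the legitimacy of the comparison, i.e. that the two formulas share a genuine interval of $t$ rather than a single point. This is guaranteed by the compactness of ${\rm supp}(\ell)$ together with the standing hypothesis ${\rm supp}(\ell)\subset(0,+\infty)$, which places the support strictly to the right of a whole ray of admissible $t$. Given that, the identification is immediate and purely algebraic, and the computation of the explicit integrals has already been carried out in Proposition \ref{eu}.
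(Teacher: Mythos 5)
Your proposal is correct and is essentially the paper's own argument: the proof there likewise specialises \eqref{st1} to $\b_\ell$ (where $\log d(\b_\ell)=0$) to get \eqref{st3}, and reads off $(\xi,K_{\b_\ell}\,\xi)$ and $(\xi,H_{\b_\ell}\,\xi)$ by comparing with the affine expression \eqref{stu} from Proposition \ref{eu}. Your extra remark that the two formulas share a whole ray of admissible $t$ (any $t\leq 0$ works, so compactness of ${\rm supp}(\ell)$ is not even needed) just makes explicit what the paper leaves implicit.
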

Let $t\in \mathbb R$ be any point (possibly in the support of $\ell$). Given $\e >0$, choose a smooth real function $ \ell_\e$ such that 
$ \ell_\e(x) = \ell(x)$ if $x\geq t -\e$, with supp$( \ell_\e) \subset (t -2\e,\infty)$ and the family of $ \ell_\e$'s  equibounded. Then
\begin{multline*}
S(t) \leq S({t-\e} )
= S(\f|_{{\A(t -\e,\infty)}}|\!| \f\cdot \b_{ \ell_\e}^{-1}|_{{\A( t-\e,\infty)}})\\
\leq  S(\f|_{{\A(t -2\e,\infty)}}|\!|\f\cdot \b_{ \ell_\e}^{-1}|_{{\A(t -2\e,\infty)}})
=  \pi\!\int_{-\infty}^{+\infty}(x-t) {\ell}_\e^2(x){\rm d}x\ ,
\end{multline*}
thus 
\[
S(t) \leq \lim_{\e\to 0^+}\pi\!\int_{-\infty}^{+\infty}  (x-t)\ell_\e^2(x){\rm d}x =  \pi\!\int_t^{+\infty} (x-t)\ell^2(x){\rm d}x \ .
\]
We shall now show that this inequality is actually an equality. 

Let $\ell$ be a real smooth function with compact support in $\mathbb R$  and $\b_\ell$ the associated localised automorphism. Then the restriction of $\b_\ell$ to $\A(0,+\infty)$ is normal and, by \eqref{bW}, it gives an automorphism of the von Neumann algebra $\A(0,+\infty)$. Denote by  $\b_+ = \b_+(\ell)$ the restrictions of $\b_\ell$ to $\A(0,+\infty)$.  
 
As a general von Neumann algebra relation, we have 
\ben\label{wc}
  {\rm Ad}(D\f_+\cdot\b_+^{-1}: D\f_+)_s\cdot  \s^{\f+}_s\cdot \b_+ \cdot \s^{\f_+}_{-s} = \b_+
\een
with $\f_+$ the restriction of the vacuum state $\f$ to $\A(0,\infty)$ and $\s^{\f_+}$ the associated modular group on $\A(0,+\infty)$.
 
Set $\ell^+(x) = \ell(x)$ if $x\geq 0$ and $\ell^+(x) = 0$ iff $x <0$, and $\ell^+_s(x) = e^s\ell^+(e^s x)$. 
\begin{lemma}
With and $w_s = (D\f_+\cdot\b_+^{-1}: D\f_+)_s$, we have
\ben\label{w-s}
w_{- s} = W(L^+ - L^+_{2\pi s})e^{\frac{i}{2} \int_0^{+\infty} \ell L_{2\pi s} }e^{-iN^+/2}, \quad s\in \mathbb R\ .
\een
Here $L^+(x) = \int_0^x \ell^+(a) {\rm d} a$, $L^+_s(x) = L^+(e^s x)$ and $N^+ = q_+^2/2$, with $q_+  \equiv \int_0^{+\infty} \ell(x){\rm d} x$. 
\end{lemma}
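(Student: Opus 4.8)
The plan is to reduce the computation to the case, already settled just before Proposition \ref{eu}, of a charge supported in $[0,+\infty)$. The crucial observation is that $\b_+$ depends only on the restriction $\ell^+$ of $\ell$ to $(0,+\infty)$: if $W(h)\in\A(0,+\infty)$ then ${\rm supp}(h)\subset(0,+\infty)$, so $\int\ell h=\int_0^{+\infty}\ell h=\int\ell^+h$ and hence, by \eqref{bW}, $\b_+=\b_{\ell^+}|_{\A(0,+\infty)}$, where $\b_{\ell^+}$ is the automorphism attached to the compactly supported density $\ell^+$, now localised in $[0,+\infty)$. Consequently $\psi_+=\f_+\cdot\b_+^{-1}=\f_+\cdot\b_{\ell^+}^{-1}$ and $w_s=(D\f_+\cdot\b_{\ell^+}^{-1}:D\f_+)_s$, which places us exactly in the framework in which the relation $u_{\delta_{2\pi s}}=(D\psi_+:D\f_+)_{-s}$ (valid because $\b_{\ell^+}$ is localised in the half-line and $d(\b_{\ell^+})=1$) holds. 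Structurally this is the content of \eqref{wc}: since the modular group of $\f_+$ on $\A(0,+\infty)$ is the dilation group, $\s^{\f_+}_s={\rm Ad}\,U(\delta_{-2\pi s})$ by conformal covariance (Bisognano--Wichmann for the half-line), the relation \eqref{wc} identifies ${\rm Ad}(w_s)$ with the inner part of the dilation conjugate of $\b_{\ell^+}$, whose implementing unitary is precisely the dilation covariance cocycle of $\ell^+$. Either way one obtains $w_{-s}=u^{\ell^+}_{\delta_{2\pi s}}$.

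It then suffices to insert the explicit dilation covariance cocycle \eqref{dc}, read off Proposition \ref{covc}, with $\ell$ replaced everywhere by $\ell^+$. This produces the Weyl factor $W(L^+-L^+_{\delta_{2\pi s}})=W(L^+-L^+_{2\pi s})$, the scalar phase $e^{\frac{i}{2}\int\ell^+ L^+_{2\pi s}}$, and the constant $e^{-iN^+/2}$, where $L^+(x)=\int_0^x\ell^+$ is the primitive vanishing for $x<0$ and $N^+=\tfrac12 q_+^2$ with the \emph{reduced} charge $q_+=\int_0^{+\infty}\ell$; matching these three factors against the claimed expression is the remaining bookkeeping (on $(0,+\infty)$ one has $\ell^+=\ell$).

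The main obstacle is that $\ell^+$ is generally not smooth: extending $\ell$ by zero creates a jump at $0$, so Proposition \ref{covc}, proved for smooth compactly supported densities, does not apply verbatim. I would deal with this in one of two ways. Either approximate $\ell^+$ in the one-particle norm by smooth functions $\ell_n$ with ${\rm supp}(\ell_n)\subset(-\e,+\infty)$ coinciding with $\ell$ on $[\e,+\infty)$, apply the smooth case to each $\ell_n$, and pass to the limit in the cocycle identity; or, more cleanly, verify that the single place in the proof of Proposition \ref{covc} where smoothness entered, namely $\int\ell L=\tfrac12 L^2(+\infty)=N$, survives for $\ell^+$ because its primitive $L^+$ is absolutely continuous, so the fundamental theorem of calculus still gives $\int\ell^+L^+=\tfrac12(L^+)^2(+\infty)=\tfrac12 q_+^2=N^+$. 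The only delicate point beyond this is the scalar phase, which is however forced, since the Connes cocycle $(D\psi_+:D\f_+)_s$ is uniquely determined by $\f_+$ and $\psi_+$; carefully tracking the primitive $L^+$ and the reduced charge $q_+$, rather than $L$ and $q$, is exactly what distinguishes the present formula from the unrestricted covariance cocycle \eqref{dc}.
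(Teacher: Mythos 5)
Your reduction to $\ell^+$ and your use of \eqref{wc} together with Bisognano--Wichmann to obtain the covariance relation \eqref{Cc} match the paper's first step, and you are right that the cocycle identity of the candidate survives the non-smoothness of $\ell^+$ (the paper likewise only needs $\int\ell^+L^+=\tfrac12 q_+^2$, which holds since $L^+$ is absolutely continuous). But there is a genuine gap at the decisive point, which you dismiss in one sentence: once both $w_s$ and the right-hand side of \eqref{w-s} are known to implement the same covariance relation \eqref{Cc} in the factor $\A(0,\infty)$, they agree only \emph{up to a scalar phase function of} $s$, and the specific phase $e^{\frac{i}{2}\int_0^{+\infty}\ell L_{2\pi s}}e^{-iN^+/2}$ must be verified, not declared ``forced'': uniqueness of the Connes cocycle tells you that a unique correct phase exists, not that your candidate carries it. As recalled in Section 2.2, the cocycle is characterised by \eqref{ci}, the cocycle equation, \emph{and} the normalisation ${\rm anal.cont.}_{s\to -i}\,\f(w_s)=\psi(1)$, and checking this KMS-type condition is the analytic heart of the paper's proof: after adding a compensating charge to the right of ${\rm supp}(\ell)$ so that one may assume $\int\ell^+=0$, the paper computes via Plancherel that ${\rm anal.cont.}_{s\to 2\pi i}\,\tfrac{i}{2}\int_0^{+\infty}\ell\, L_s=iN^+/2$ and ${\rm anal.cont.}_{s\to 2\pi i}\,\|L^+-L^+_s\|^2=0$, using crucially that $\hat\ell^+$ is analytic with exponential decay in the upper half-plane \emph{because} ${\rm supp}(\ell^+)\subset[0,+\infty)$. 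None of this verification appears in your proposal.

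Relatedly, your preferred route --- $w_{-s}=u^{\ell^+}_{\delta_{2\pi s}}$ by applying the relation $u_{\delta_{2\pi s}}=(D\psi:D\f)_{-s}$ to $\b_{\ell^+}$ --- is circular: that relation rests on \eqref{KW} from \cite{L97}, established for charges with smooth compactly supported density localised in the interior of the wedge (here, the open half-line), whereas $\b_{\ell^+}$ has a jump at the boundary point $0$ and is localised only on the closed half-line; extending \eqref{KW} to this borderline ``horizon'' case is essentially the content of the lemma itself, which is why the paper argues directly through \eqref{wc}, factoriality and the normalisation check. Your approximation alternative could in principle close the circle, but it requires a continuity statement for Connes cocycles $(D\f\cdot\b_{\ell_n}^{-1}:D\f)_s\to(D\psi_+:D\f_+)_s$ under convergence of the states, which you neither formulate nor prove; note that the paper's own approximation argument just before the lemma yields only the inequality $S(t)\le\pi\int_t^{+\infty}(x-t)\ell^2(x)\,{\rm d}x$, via monotonicity rather than any cocycle limit, precisely because such an interchange is not free --- the lemma's direct computation is what upgrades the inequality to an equality.
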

\begin{proof}
The cocycle property of the right hand side in \eqref{w-s} follows as in the proof of Prop. \eqref{covc}. 
By eq. \eqref{wc} and the geometrical meaning of the modular group, we have
\ben\label{Cc}
{\rm Ad}w_{ s}\cdot  {\rm Ad}U(\delta_{- 2\p s})
\cdot \b_+ \cdot {\rm Ad}U(\delta_{ 2\p s}) = \b_+
\een
on $\A(0,+\infty)$. The same covariance relation holds with $w_{-s}$ defined by the the right hand side in \eqref{w-s}. 

Now, $w_s$ is determined up to a phase by eq. \eqref{Cc} because $\A(0,\infty)$ is a factor so, in order to check that \eqref{Cc} actually holds, we have to verify that $w_s$ has the right normalisation, namely, by  \eqref{fW}, we must have
\begin{multline*}
{\underset{s\, \longrightarrow\,  - i}{\rm anal.\, cont.\,}} \f(w_s) =
{\underset{s\, \longrightarrow\,  2\pi i}{\rm anal.\, cont.\,}} \f\big((W(L^+ - L^+_{ s})\big)e^{\frac{i}{2} \int_0^{+\infty} \ell L_{ s} }e^{-iN^+/2}\\
= {\underset{s\, \longrightarrow\,  2\pi i}{\rm anal.\, cont.\,}} e^{-\frac12 ||L^+ - L^+_{ s}||^2}e^{\frac{i}{2} \int_0^{+\infty} \ell L_{ s} }e^{-iN^+/2}
 = 1 \ .
\end{multline*}
One may easily see that, by adding a charge to the right of supp$(\ell)$, we can assume that $\int \ell^+ = 0$, so $L^+$ has compact support. Then by Plancharel theorem
\begin{multline*}
{\underset{s\, \longrightarrow\,  2\pi i}{\rm anal.\, cont.\,}}\frac{i}{2}\int_0^{+\infty} \ell(x) L_{ s}(x){\rm d} x =
{\underset{s\, \longrightarrow\,  2\pi i}{\rm anal.\, cont.\,}}\frac{1}{2}\int \frac1p \hat\ell^+(p) \hat\ell^+(e^{-s}p) {\rm d}p \\
= \frac12 \int \frac1p \hat\ell^+(p) \hat\ell^+(p) {\rm d}p = \frac{i}{4}L^+(+\infty)^2 = iq_+^2/4 =
iN^+/2
\end{multline*}
as the Fourier transform $\hat\ell^+$ of $\ell^+$ is an entire function with exponential decay in the upper half-plane (for the lower half-plane extension of $\int_0^{+\infty} \ell(x) L_{ s}(x){\rm d} x$ one makes an integration by parts). 

The verification that
\[
{\underset{s\, \longrightarrow\,  2\pi i}{\rm anal.\, cont.\,}} ||L^+ - L^+_{ s}||^2 = 0
\]
is similar. 
\end{proof}
The above discussion works as well if $\ell$ is not compactly supported but belongs to the Schwartz space $S(\mathbb R)$. We thus state the following theorem in this generality. 
\begin{theorem} Let $\ell$ be a real function in $S(\mathbb R)$ and $t\in \mathbb R$ any point (possibly in the support of $\ell$). We have
\[
S^-(t)  = -\pi\!\int^{t}_{-\infty} (x-t)\ell^2(x){\rm d}x\ , \qquad
S^+(t)  = \pi\!\int_{t}^{+\infty} (x-t)\ell^2(x){\rm d}x\ ,
\]
where $S^\pm(t) = S(\f_t |\!| \psi_t)$ with $\f_t, \psi_t$ the restrictions of $\f$, $\psi= \f\cdot\b_\ell^{-1}$ to $\A(t,\infty)$ and  to $\A(-\infty,t)$. 
\end{theorem}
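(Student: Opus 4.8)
The plan is to recognise this theorem as the packaging, in Schwartz generality and for both half-lines, of the preceding Lemma together with a differentiation of the Connes cocycle of exactly the type already carried out in Proposition~\ref{eu}. The conceptual heart is contained in that Lemma: upon restricting the charged state $\psi=\f\cdot\b_\ell^{-1}$ to the half-line algebra $\A(0,\infty)$, the cocycle $(D\psi_+:D\f_+)_s$ depends only on the positive part $\ell^+=\ell\,\chi_{(0,\infty)}$ of $\ell$, through $L^+$ and $N^+$. Thus the portion of $\ell$ lying to the left of the endpoint is invisible to $\A(0,\infty)$, which is precisely what licenses $t$ to sit inside $\mathrm{supp}(\ell)$ and upgrades the inequality established just before the Lemma to an equality.

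First I would treat $S^+$ at $t=0$. Using $S(\f_+|\!|\psi_+)=i\frac{\rm d}{{\rm d}s}\f(w_s)|_{s=0}$ with $w_s=(D\psi_+:D\f_+)_s$ given by the preceding Lemma (after $s\mapsto-s$), I differentiate
\[
\f(w_s)=e^{-\frac12\|L^+-L^+_{-2\pi s}\|^2}\,e^{\frac{i}{2}\int_0^\infty \ell\,L_{-2\pi s}}\,e^{-iN^+/2}.
\]
Exactly as in Proposition~\ref{eu}, the Gaussian factor has vanishing derivative at $s=0$, since $\|L^+-L^+_0\|=0$ is a minimum of the norm; hence only the phase contributes, and the computation is verbatim that of Proposition~\ref{eu} with $\ell^+$ in place of $\ell$. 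Because $\ell^+=\ell$ on $(0,\infty)$, this yields $S^+(0)=\pi\int_0^\infty x\,\ell^2(x)\,{\rm d}x$, i.e. the claimed formula at $t=0$.

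Next I would remove the two normalisations. For general $t$ I invoke translation covariance: the vacuum-preserving unitary carrying $\A(t,\infty)$ onto $\A(0,\infty)$ sends $\b_\ell$ to $\b_{\ell(\cdot+t)}$, and since relative entropy is invariant under this unitary, $S^+(t)=\pi\int_0^\infty x\,\ell^2(x+t)\,{\rm d}x$; the change of variables $y=x+t$ turns this into $\pi\int_t^\infty (y-t)\ell^2(y)\,{\rm d}y$. For the left half-line I would run the mirror-image argument: the parity symmetry $x\mapsto -x$ of the net exchanges the families $\A(-\infty,\cdot)$ and $\A(\cdot,\infty)$ and sends $\ell$ to $\ell(-\,\cdot\,)$, reducing $S^-$ to the $S^+$ case already proved; since only $\ell^2$ enters, this produces $S^-(t)=-\pi\int_{-\infty}^t (x-t)\ell^2(x)\,{\rm d}x$. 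Equivalently, one repeats the cocycle computation with $\ell^-=\ell\,\chi_{(-\infty,t)}$ and the modular dilation of the left half-line, the reversed orientation supplying the overall sign.

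The main obstacle is not the algebra but the passage to Schwartz, non-compactly supported $\ell$ and the analytic control already isolated in the preceding Lemma. One must ensure that $w_s$ is correctly normalised, i.e. that $\mathrm{anal.cont.}_{s\to -i}\f(w_s)=1$, which rests on $\hat\ell^+$ being entire with exponential decay in the upper half-plane so that the relevant integrals converge and the continuation is legitimate; and that $s\mapsto\f(w_s)$ is genuinely differentiable at $0$, so that $S=i\frac{\rm d}{{\rm d}s}\f(w_s)|_{s=0}$ delivers a finite value rather than $+\infty$. One must also check that $L^+-L^+_s$ has finite one-particle norm and is supported in $(0,\infty)$ — despite the corner of $L^+$ at the origin — so that $W(L^+-L^+_s)\in\A(0,\infty)$ and the cocycle is genuinely localised. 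Once these analytic points are secured, the algebraic and geometric steps above are routine.
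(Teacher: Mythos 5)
Your proposal is correct and follows essentially the same route as the paper: it uses the preceding lemma's explicit formula for the cocycle $(D\psi_+:D\f_+)_s$, differentiates $\f(w_s)$ at $s=0$ exactly as in Proposition \ref{eu} (Gaussian factor flat at $s=0$, only the phase contributing) to get $S^+(0)=\pi\int_0^{+\infty}x\,\ell^2(x)\,{\rm d}x$, and then obtains general $t$ and the left half-line by covariance, which is precisely what the paper's terse ``the general case then follows as above'' means; you also correctly flag the analytic points (normalisation via analytic continuation of $\f(w_s)$, finiteness of $\|L^+-L^+_s\|$, localisation of $W(L^+-L^+_s)$) that the lemma's proof secures. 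The one small nuance is that the reflection $x\mapsto -x$ of a chiral net is implemented antiunitarily (e.g.\ by the modular conjugation of $\A(0,\infty)$) rather than by a unitary, though relative entropy is invariant under antiunitary equivalences as well, and your alternative of redoing the cocycle computation with the dilations of the left half-line (the analogue of the paper's $\delta^{(t)}$ device) avoids the issue entirely.
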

\begin{proof} We consider the case of $S(t) \equiv S^+(t)$ and take $t=0$.
Now, along the lines of the proof of Prop. \ref{eu}, we have
\ben\label{ui}
S(0) = i \frac{\rm d}{{\rm d}s} \f(w_{2\pi s}) |_{s=0} = \pi\!\int x\ell^+(x)^2{\rm d}x 
= \pi\int_0^{+\infty}x\ell(x)^2{\rm d}x \ .
\een
The general case in the statement then follows as above. 
\end{proof}
So we have
\[
S'(t) = -\pi\!\int_t^{+\infty} \ell^2(x){\rm d}x \leq 0\ ,
\]
\ben\label{S''}
S''(t) = \pi \ell^2(t) \geq 0 
\een
(see Fig. 3). 
Thus $S(t)$ is decreasing and convex on all $\mathbb R$. Moreover $S(t)$ is smooth and linear on every interval where $\ell(t) =0$. 

Now, by Corollary \ref{HK} and the above discussion, the vacuum energy $E(t, t')$ of the charge $\b_\ell$ in an interval $(t, t')$ is given by
\[
E(t, t') = \frac{1}{2}\int_t^{t'}\ell^2(x){\rm d}x \ ,
\]
thus the vacuum energy  density is
\ben\label{Et}
E(t) = \lim_{t'\to t}\frac{E(t,t')}{t' - t} = \frac{1}{2}\ell^2(t) \ ,
\een
in agreement with the Sugawara formula.

We conclude that the equality $E(t) \geq \frac{1}{2\pi} S''(t)$ is actually an equality. Thus we have the following form of the
Quantum Null Energy Condition:
\begin{corollary}
For the considered states, the QNEC holds with the equality
\ben
E(t) = \frac{1}{2\pi} S''(t) \geq 0
\een
and is not saturated in every point of positive energy density. 
\end{corollary}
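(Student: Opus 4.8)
The plan is to obtain the statement by directly comparing the two pointwise formulas already established in this section, so that the corollary reduces to arithmetic. The preceding theorem, in its differentiated form, gives $S''(t) = \pi\,\ell^2(t)$ in \eqref{S''}, valid at every $t\in\mathbb R$ (including points inside ${\rm supp}\,\ell$). On the other hand, Corollary \ref{HK} together with the Sugawara identification yields the vacuum energy density $E(t) = \frac{1}{2}\ell^2(t)$ in \eqref{Et}. Dividing the former by $2\pi$ then gives $\frac{1}{2\pi}S''(t) = \frac{1}{2}\ell^2(t) = E(t)$, which is precisely the asserted equality; in particular the QNEC lower bound $E(t)\geq \frac{1}{2\pi}S''(t)$ is met with equality for the charged states under consideration.

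For the remaining two assertions I would simply read off positivity and strictness from the same formula. Since $\ell^2(t)\geq 0$ pointwise, \eqref{S''} shows $S''(t)\geq 0$ for all $t$, which both gives the $\geq 0$ in the displayed equality and confirms that $S$ is convex on all of $\mathbb R$. At any $t$ with $\ell(t)\neq 0$ — equivalently, at any point of strictly positive energy density $E(t)>0$ — we have $S''(t) = \pi\,\ell^2(t) > 0$ strictly, so the second derivative of the relative entropy does not vanish there. This is the sense in which the bound is not saturated on the support of the charge density: the convexity is genuine (not locally linear) exactly where the energy density is positive.

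Since every ingredient is already in hand, I do not expect a genuine obstacle; the only point deserving attention is that both \eqref{S''} and \eqref{Et} must be read as honestly pointwise statements valid at arbitrary $t\in\mathbb R$, in particular at points interior to ${\rm supp}\,\ell$. This is exactly what the general $\ell\in S(\mathbb R)$ formulation of the preceding theorem supplies, so the corollary follows at once by combining the two displayed equations.
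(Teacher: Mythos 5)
Your proposal is correct and coincides with the paper's own proof, which likewise consists of comparing the pointwise formulas $S''(t)=\pi\ell^2(t)$ in \eqref{S''} and $E(t)=\frac12\ell^2(t)$ in \eqref{Et}, the latter coming from Corollary \ref{HK} and the Sugawara formula. Your added remarks on strict positivity of $S''(t)$ where $\ell(t)\neq 0$ and on the pointwise validity of both formulas for $\ell\in S(\mathbb R)$ simply make explicit what the paper leaves implicit.
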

\begin{proof}
Compare the equations \eqref{S''} and \eqref{Et}. 
\end{proof}
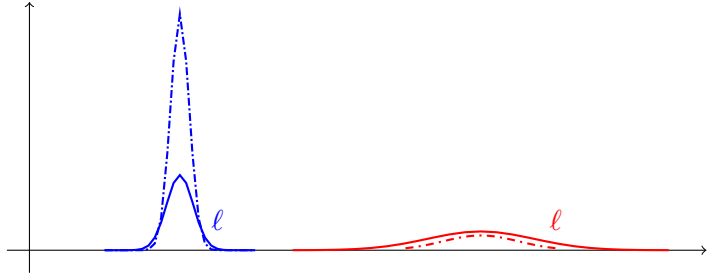
\begin{figure}
\centering
\begin{tikzpicture}
\draw [->] (-0.3,0) -- (9,0); 
\draw [->] (0,-0.3) -- (0,3.3); 
\draw [blue, thick, domain=-1:1] plot (\x + 2 , exp{-\x*\x*16}); 
\draw [blue, thick, densely dashdotted, domain=-1:1] plot (\x + 2 , pi*exp{-\x*\x*32}); 
\draw [red, thick, domain=-2.5 :2.5] plot (\x +6, 1/4 *exp{-\x*\x}); 
\draw [red, thick, dashdotted, domain=-1:1] plot (\x + 6 , 1/16 *pi*exp{-\x*\x*2});
\draw[blue] (2.5,0.4) node { $\ell$};
\draw[red] (7,0.4) node { $\ell$};
\end{tikzpicture}
\caption{\footnotesize Two distributions, blue and red, for the same charge $q$. The dashed lines plot the corresponding entropy density rate $S''(t)$: blue high entropy, red low entropy. }
\end{figure}
\section{Appendix. Commutation relations in the Lie algebra $\mathfrak{sl}(2,\mathbb R)$}
We need to make explicit a couple of commutation relations in the Lie algebra of the group $SL(2,\mathbb R)$.

Let $\g$ be the one parameter subgroup of $SL(2,\mathbb R)$
\ben\label{gs}
\g_s =\left (\begin{matrix} \cosh s/2 & \sinh s/2 \\ \sinh s/2 & 
\cosh s/2\end{matrix}\right ),
\een
and 
\[
\mathfrak a =\frac12\left (\begin{matrix} 0 & 1 \\ 1 & 
0\end{matrix}\right )
\]
the generator of $\g$ in the Lie algebra $\mathfrak{sl}(2,\mathbb R)$ of $SL(2,\mathbb R)$. We have
\ben\label{att'}
\mathfrak a = \frac12(\mathfrak t - \mathfrak t')
\een
with $\mathfrak t , \mathfrak t' \in \mathfrak{sl}(2,\mathbb R)$ given by
\ben\label{tt'}
\mathfrak t = \left (\begin{matrix} 0 & 1 \\ 0 & 
0\end{matrix}\right ),\quad \mathfrak t' = \left (\begin{matrix} 0 & 0 \\ -1 & 
0\end{matrix}\right )\ .
\een
$\mathfrak t$ is the Lie generator of the translation one parameter subgroup $\tau$ of $SL(2,\mathbb R)$
\[
\tau_t = \left (\begin{matrix} 0 & t \\ 0 & 
0\end{matrix}\right )
\]
and $\mathfrak t'$ is conjugate to $\mathfrak t$ by a $\pi$-rotation, i.e. by the adjoint action on $\mathfrak{sl}(2,\mathbb R)$ of the element $\left (\begin{matrix} 0 & 1 \\ -1 & 0\end{matrix}\right )$ of $SL(2,\mathbb R)$ (ray inversion map in the real line picture).
 
If $\delta$ is the dilation one parameter subgroup of $SL(2,\mathbb R)$
\[
\delta_s =\left (\begin{matrix} e^{s/2} & 0 \\ 0  & e^{-s/2} \end{matrix}\right )\ ,
\]
we then have
\ben\label{dil}
\delta_s(\mathfrak a) = \frac12(e^s \mathfrak t - s^{-s} \mathfrak t') \ ,
\een
where $\delta_s(\cdot)$ is the adjoint action of $\delta_s$ on $\mathfrak{sl}(2,\mathbb R)$. 

If $U$ is a unitary representation of $SL(2,\mathbb R)$, we then have the corresponding commutation relations. The same commutation relations hold if $U$ is a unitary representation of the universal cover of $SL(2,\mathbb R)$ because this cover is locally isomorphic to $SL(2,\mathbb R)$. 

\medskip

\noindent
We now  consider the ``$ax + b$'' group, that is the group generated 
two one-parameter unitary groups $\delta$ and $\tau$  (the dilation and the translation groups) that satisfy the commutation relations
\ben\label{eq1}
\delta_s\cdot \tau_t \cdot \delta_{-s} = \tau_{e^s t} \ ,
\een
and is a subgroup thus of $SL(2,\mathbb R)$. Its Lie algebra is generated by $\mathfrak t$ and the generator $\mathfrak d$ of $\delta$. 
We have
\ben\label{cr}
\tau_t(\mathfrak d) = \mathfrak d - t\mathfrak t \ ,
\een
where $\tau_t(\cdot)$ denotes the adjoint action of $\tau_t$. 

Indeed,
by eq. \eqref{eq1} we have
\[
\tau_t \cdot \delta_s\cdot \tau_{-t} = \tau_t \cdot \tau_{-e^s t}\cdot \delta_s = \tau_{(1-e^s)t}\cdot\delta_s \ ,
\]
thus
\[
\tau_t(\mathfrak d) =   \frac{{\rm d}}{{\rm d}s}\tau_t\cdot \delta_s\cdot \tau_{-t} \big\vert_{s=0}
= \frac{{\rm d}}{{\rm d}s}(\tau_{(1-e^s)t}\cdot\delta_s)\big\vert_{s=0} 
=  \mathfrak d- t\mathfrak t \ .
\]
\section{Appendix. Local perturbation of the modular Hamiltonian}
We illustrate here another case that falls within the analysis made in Section  \ref{Wies}. 

Let $\M$ be a von Neumann algebra  on a Hilbert space $\H$ and $\f$ a faithful normal state of $\M$ given by a cyclic and separating vector $\xi\in\H$, $\f = (\xi, \cdot\,\xi)$.
With $\Delta_\xi$ the modular operator of $\xi$, we fix a bounded selfadjoint operator $P\in\M$ and consider the perturbed modular Hamiltonian
\[
\log\Delta_\xi+ P
\]
which is a selfadjoint operator on $\H$. 

There exists a vector $\tilde\eta\in\H$ such that
\[
\log\Delta_{\tilde\eta,\xi} = \log\Delta_\xi + P\ ,
\]
thus 
$
\Delta_{\tilde\eta,\xi} = e^{\log\Delta_\xi + P}
$, and we denote by $\tilde\psi$ the faithful normal positive linear functional on $\M$ associated with $\tilde\eta$.

The Connes Radon-Nikodym unitary cocycle is given by
\ben\label{wsP}
w_s \equiv (D\tilde\psi : D\f)_s= e^{is(\log\Delta_\xi + P)}\Delta_{\xi}^{-is}\ .
\een
We have
\[
\tilde\eta =  w_{-i/2}\, \xi = e^{\frac12(\log\Delta_\xi + P)}\xi\ ,
\]
In particular
\[
\tilde\psi(1) = (\xi, e^{(\log\Delta_\xi + P)}\xi)\ .
\]
The relative entropy between $\f$ and $\tilde\psi$, and between $\f$ and the normalisation $\psi = \tilde\psi/\tilde\psi(1)$ are given by (see \cite{BR})
\[
S(\f|\!|\tilde\psi) = -\f(P)\ ,
\] 
\ben\label{ep}
S(\f|\!|\psi) = -\f(P) +\log(\xi, e^{(\log\Delta_\xi + P)}\xi)\ .
\een
Indeed
\[
S(\f|\!|\tilde\psi) = - (\xi,  \log\Delta_{\tilde\eta,\xi}\xi) = - (\xi, (\log\Delta_\xi + P)\xi)  =  - (\xi,  P\xi) = -\f(P)\ ,
\]
therefore, due to eq. \eqref{scalentr}, 
\[
S(\f|\!|\psi) = S(\f|\!|\tilde\psi) + \log\tilde\psi(1) =   -\f(P) +  \log\tilde\psi(1) = 
 -\f(P) +\log(\xi, e^{(\log\Delta_\xi + P)}\xi)\ .
\]
We assume now that we have a von Neumann subalgebra $\N\subset\M$ which is -hsm w.r.t. $\xi$. 
So we have the translation Hamiltonian $H$ and the tunnel of von Neumann algebras $\M_t$ associated with $\N\subset \M$ and $\xi$. Next proposition shows that we are indeed in the situation considered in Section \ref{Wies}. 
\begin{proposition}
Let $R\geq 0$. 
If $P$ belongs to $\M_R$, then $w_s\in\M_R$ for all $s\leq 0$. We have
\ben\label{prop}
S(\f_t|\!|\psi_t) =  -\f(P) +\log(\xi, e^{(\log\Delta_\xi + P)}\xi)
 - 2\pi t (\xi, H\xi),\quad 0\leq t\leq R\ ,
\een
with $H $ the selfadjoint closure of $ \frac{1}{2\pi  }\big(\log\Delta_{\eta,\xi, \N}- \log\Delta_{\eta,\xi,\M}\big)$. 
\end{proposition}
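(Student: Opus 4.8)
The plan is to recognise this statement as a concrete instance of the abstract machinery of Section \ref{Wies} and to reduce it to the Corollary establishing \eqref{deltaS}, once the localisation hypothesis \eqref{ws} for the Connes cocycle has been verified. Concretely, I would first prove that $w_s\in\M_R$ for all $s\le 0$, and then combine the Corollary establishing \eqref{deltaS} with the already-derived value \eqref{ep} of $S(\f|\!|\psi)$.

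For the localisation, write $K=\log\Delta_\xi$, so that $w_s = e^{is(K+P)}e^{-isK}$ and $\s^\f_s=\Ad e^{isK}$. A direct differentiation shows that $v(s):=\s^\f_{-s}(w_s)=e^{-isK}w_s e^{isK}$ solves the linear equation $v'(s)=i\,\s^\f_{-s}(P)\,v(s)$ with $v(0)=1$; since $P$ is bounded, the associated time-ordered (Dyson) series converges in norm, and applying $\s^\f_s$ yields the expansional
\[
w_s=\sum_{n\ge 0} i^n\!\int_{s\le r_n\le\cdots\le r_1\le 0}\s^\f_{s-r_1}(P)\cdots\s^\f_{s-r_n}(P)\,{\rm d}r_1\cdots{\rm d}r_n,\qquad s\le 0.
\]
For $s\le 0$ every $r_j$ lies in $[s,0]$, hence every modular time $s-r_j$ is $\le 0$. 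Since $\M_R\subset\M$ is $-$hsm with respect to $\f$, i.e.\ $\s^\f_t(\M_R)\subset\M_R$ for $t\le 0$, and $P\in\M_R$, each factor $\s^\f_{s-r_j}(P)$ lies in $\M_R$; as $\M_R$ is norm-closed, $w_s\in\M_R$ for all $s\le 0$, which is precisely the hypothesis \eqref{ws}.

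It then remains to pass to the normalised state $\psi=\tilde\psi/\tilde\psi(1)$ and conclude. Since $(D\psi:D\f)_s=\tilde\psi(1)^{-is}w_s$ differs from $w_s$ only by a scalar phase, $\psi$ also satisfies \eqref{ws}; moreover passing from the vector giving $\tilde\psi$ to $\eta$ shifts both $\log\Delta_{\eta,\xi,\N}$ and $\log\Delta_{\eta,\xi,\M}$ by the identical constant $-\log\tilde\psi(1)$, so it cancels in $H$ and the operator $H$ in the statement coincides with the one produced by Theorem \ref{rw}. Applying the Corollary establishing \eqref{deltaS} to $\psi$ gives $S(\f_t|\!|\psi_t)=S(\f|\!|\psi)-2\pi t\,(\xi,H\xi)$ for $0\le t\le R$, and substituting $S(\f|\!|\psi)=-\f(P)+\log(\xi,e^{(\log\Delta_\xi+P)}\xi)$ from \eqref{ep} yields \eqref{prop}.

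The main obstacle is the first step, the localisation $w_s\in\M_R$: one has to treat the unbounded operator $\log\Delta_\xi$ with care, justify the expansional expansion and its norm convergence, and—most importantly—observe that although $v(s)$ involves modular evolution at positive times, the combined times $s-r_j$ appearing in $w_s=\s^\f_s(v(s))$ are all $\le 0$ when $s\le 0$, so that the $-$hsm property keeps every factor inside $\M_R$. By contrast the entropy identity is then immediate, being a direct combination of \eqref{deltaS} with the perturbative formula \eqref{ep}, the only bookkeeping being the harmless rescaling by $\tilde\psi(1)$.
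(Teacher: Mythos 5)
Your proposal is correct and takes essentially the same route as the paper: the paper likewise localises the cocycle by noting that $w_s$ solves the Cauchy problem $-i\frac{\rm d}{{\rm d}s}w_s = w_s\,\s^\f_s(P)$, $w_0=1$, whose norm-convergent expansional solution involves only $\s^\f_s(P)\in\M_R$ for $s\leq 0$, and then obtains \eqref{prop} by evaluating $\log\Delta_{\eta,\xi,\M_t}-\log\Delta_{\eta,\xi}=2\pi t H$ on $\xi$ and inserting \eqref{ep}. Your explicit Dyson series (modulo an immaterial sign convention in the ordered integrals) and your bookkeeping for the normalisation $\psi=\tilde\psi/\tilde\psi(1)$ --- the phase $\tilde\psi(1)^{-is}$ in the cocycle and the cancelling scalar shift in both relative modular Hamiltonians defining $H$ --- merely make explicit what the paper leaves implicit.
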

\begin{proof}
The unitary cocycle $w$ in \eqref{wsP} is the unique solution of the Cauchy problem
\[
-i\frac{\rm d}{{\rm d}s}w_s = w_s\s^\f_s(P),\quad w_0 = 1\ ,
\]
so $w_s \in\M_R$, $s\leq 0$, since $\s^\f_s(P)\in\M_R$ for $s\leq 0$. 

Concerning equations \eqref{prop}, since $\log\Delta_{\xi, \M_t} - \log\Delta_\xi = 2\pi tH$, by \eqref{ep} we have
\[
S(\f|\!|\psi) - S(\f_t|\!|\psi_t)  = (\xi , \log\Delta_{\eta,\xi, \M_t} \xi) -(\xi , \log\Delta_{\eta,\xi}\, \xi) = 2\pi t(\xi, H\xi),\quad 0\leq t\leq R.
\]
\end{proof}

\section{Outlook}
Concerning the equivalence between our form of the ANEC and the one in \cite{BFKLW}, it would be desirable to have a rigorous proof of formula \eqref{Hf}, and of the expression for modular Hamiltonian for $\A(W_f)$ in \cite{CTT}, within the operator algebraic setting or Wightman framework. 
We plan to continue our analysis in a subsequent work. In particular, we intend to describe the relative entropy distribution for the case of a  free field in higher spacetime dimensions. 

\bigskip

\noindent
{\bf Acknowledgements.} 
This paper is the follow up of a question privately set to the author by Edward Witten at the Okinawa Strings 2018 conference. The author warmly thanks him for sharing his insight and constant encouragement. We wish to thank Hirosi Ooguri and the conference organisers for the kind invitation, and Nima Lashkari for comments.

\end{document}